\algrenewcommand\algorithmicrequire{\textbf{Input:}}
\algrenewcommand\algorithmicensure{\textbf{Output:}}
\newcommand{\Real}{\operatorname{Re}}
\newcommand{\Imag}{\operatorname{Im}}
\newcommand{\ZZ}{\mathbb{Z}}
\newcommand{\QQ}{\mathbb{Q}}
\newcommand{\NN}{\mathbb{N}}
\newcommand{\RR}{\mathbb{R}}
\newcommand{\CC}{\mathbb{C}}
\newcommand{\bigOtilde}{\widetilde O}
\newtheorem{theorem}{Theorem}[section]
\theoremstyle{definition}
\newtheorem{problem}[theorem]{Open problem}
\begin{document}

\title{Arbitrary-precision computation of the gamma function}

\author{Fredrik Johansson}
\address{Inria Bordeaux, 33400 Talence, France}
\email{fredrik.johansson@gmail.com}


\subjclass[2020]{Primary 33F05, 33B15; Secondary 33-02, 33-04, 33C99, 65D20, 41-02}

\begin{abstract}
We discuss the best methods available for computing the gamma function $\Gamma(z)$
in arbitrary-precision arithmetic with rigorous error bounds.
We address different cases: rational, algebraic, real or complex arguments;
large or small arguments; low or high precision;
with or without precomputation.
The methods also cover the log-gamma
function $\log \Gamma(z)$, the digamma function $\psi(z)$,
and derivatives $\Gamma^{(n)}(z)$ and $\psi^{(n)}(z)$.
Besides attempting to summarize the existing state of the art, we present
some new formulas, estimates, bounds and algorithmic improvements and discuss implementation results.
\end{abstract}

\maketitle


\section{Introduction}

The gamma function
\begin{equation}
\label{eq:gammadef}
\Gamma(z) = \int_0^{\infty} t^{z-1} e^{-t} dt \quad (\Real(z) > 0), \quad \Gamma(z) = \frac{\Gamma(z+1)}{z},
\end{equation}
is arguably the most important higher transcendental function,
having a tendency to crop up in any setting involving sequences, series, products or integrals
when the solutions go beyond elementary functions.

Calculation of the gamma function is
a classical subject in numerical analysis.
The standard algorithm
uses the asymptotic Stirling series
\begin{equation}
\Gamma(z+1) \,= \,z! \;\sim \;\sqrt{2 \pi} \, e^{-z} z^{z+1/2} \left(1 + \frac{1}{12 z} + \frac{1}{288 z^2} + \ldots \right), \quad |z| \to \infty,
\label{eq:stirlingseries1}
\end{equation}
which is valid in every closed sector of the complex plane excluding the negative real line.
If $z$ is too small or too close to the negative real line to use \eqref{eq:stirlingseries1} directly,
one first applies a shift $z \to z + r$ and possibly the reflection formula
\begin{equation}
\Gamma\!\left(z\right) = \frac{\pi}{\sin\!\left(\pi z\right)} \frac{1}{\Gamma\!\left(1 - z\right)}.
\label{eq:reflection}
\end{equation}

The story does not end here: a close study of the Stirling series
presents a number of
interesting theoretical problems and implementation issues, and
there are also many alternative algorithms.
However, despite an extensive literature on gamma function computation,\footnote{See Davis \cite{Davis1959} for the history of the gamma function until 1959.
Algorithms, software implementations and survey works include \cite{hastings1955approximations,Lanczos1964,Cody1967,Filho1967,luke1969special,Luke1970,Kuki1972,Klbig1972,Ng1975,Brent1978,van1984calculation,Borwein1987,Macleod1989,BorweinZucker1992,Cody1991,Cody1993,Spouge1994,Hare1997,Smith2001,laurie2005,Fousse2007,schmelzer2007computing,cuyt2008handbook,Johansson2014thesis,beebe2017mathematical}; others will be cited below. For more history and bibliography, see \cite{sebah2002introduction,Borwein2018,Olver:2010:NHMF,perez2020notes}.}
there have been surprisingly few
attempts to investigate the best available techniques
from the point of view of arbitrary-precision computation.
It seems worthwhile to pursue this topic
since some applications require repeated
evaluation of $\Gamma(z)$ to
a precision of hundreds or thousands of digits.

\subsection{Quick survey of methods}

It is convenient to sort algorithms for the gamma function
into four broad categories.
\begin{enumerate}
\item \emph{Global} methods approximate $\Gamma(z)$ in the whole complex plane
using its asymptotics
together with a correction.
The Stirling series
and the formulas of Lanczos \cite{Lanczos1964} and Spouge \cite{Spouge1994}
are all of this type. 
\item \emph{Local} methods approximate the gamma function near particular
points or on particular intervals such as $z \in [1, 2]$ using tabulated values,
Taylor series,
Chebyshev interpolants, or other approximations. 
\item \emph{Hypergeometric} methods represent the gamma function
in terms of hypergeometric functions which may be
evaluated using series expansions.
\item \emph{Integration} methods evaluate \eqref{eq:gammadef}
or a suitable contour integral using standard numerical integration
techniques. 
\end{enumerate}

The different categories all have their uses.
Global methods are the most broadly useful,
but local methods can
perform better where they are applicable. The main drawback
of the local methods is their high precomputation cost.
The hypergeometric methods are the most efficient asymptotically at high precision
due to the possibility to use complexity-reduction techniques.

Table~\ref{tab:algoverview} lists some of the major
algorithms along with estimates of
the rate of convergence, expressed as the number of terms needed asymptotically to
achieve $p$-bit accuracy.
This rate cannot be taken as a direct measure of efficiency,
because the notion of ``term'' differs between the algorithms:
\begin{itemize}
\item The ``generic'' terms require at least a full multiplication
and possibly additional work, some of which
may be subject to precomputations.
\item The ``hypergeometric'' terms
asymptotically require \emph{less} than a full multiplication.
The precise amount of work varies depending on the formula.
\end{itemize}

\begin{table}
\setlength{\tabcolsep}{3pt}
\renewcommand{\arraystretch}{1.02}
\centering
\caption{Algorithms for the gamma function. Where no reference is given, the convergence rate
estimates will be justified below.}
\label{tab:algoverview}
\small
\begin{tabular}{p{4.1cm} p{4.5cm} l}
Formula                      & Needed terms for $p$-bit accuracy                    & Type of terms \\ \hline
    \multicolumn{3}{l}{\textbf{Global methods} (any $z$)} \\
Stirling series              & $0.323 p$ (worst case)        & Generic/hypergeometric \\
Spouge's formula       & $0.377 p$ (uniformly)         & Generic \\
                             & $< 0.225 p$ (small $z$, conjectured)    &  \\
Lanczos's formula      & unknown & Generic \\
Binet convergent series      & $1.00 p$ (worst case)        & Generic \\
Stieltjes continued fraction   & $0.323 p$ (worst case, conjectured)        & Generic \\
   \multicolumn{3}{l}{\textbf{Local methods} ($|z| \ll p$)} \\
Taylor series  & $O(p / \log(p))$             & Generic \\
Chebyshev interpolants  & $O(p / \log(p))$             & Generic \\
   \multicolumn{3}{l}{\textbf{Hypergeometric methods} ($|z| \ll p$)} \\
Incomplete gamma functions   & $1.49 p$        & Hypergeometric \\
Bessel functions             & $1.09 p$        & Hypergeometric \\
Elliptic integrals           & $O(\log p)$ (special $z$ only)        & Generic \\
   \multicolumn{3}{l}{\textbf{Integration methods}} \\
Contour integration          & $0.311 p$ (conjectured) \cite{schmelzer2007computing}  &  Generic \\
\end{tabular}
\end{table}

The performance of different algorithms
also sometimes depends strongly on whether $z$ is
an integer, rational, algebraic, real or complex number,
or a power series with the respective types of coefficients.
Despite these variables, the rate of convergence is useful as a first point
of reference. A more detailed comparison will be one
of the goals of this study.

Integration methods
are the only class that will not be studied here; the reader may
instead refer to Schmelzer and Trefethen \cite{schmelzer2007computing}.
Despite potentially very rapid convergence, numerical
integration is unlikely to be competitive
with more specialized algorithms
due to requiring repeated evaluations of a transcendental integrand.
Direct integration has one major advantage:
it is easy to generalize to other functions.
The integral \eqref{eq:gammadef} is the Laplace
transform of $t^{z-1}$ as well as the Mellin transform
of $e^{-t}$;
it is clearly interesting to have efficient schemes for
evaluation of similar integrals when no convenient closed form is available.

Notably missing from the list is \emph{solving an ordinary differential equation}.
Indeed, the gamma function itself does not satisfy any algebraic differential equation with polynomial
coefficients (H\"older's theorem~\cite{holder1886ueber}).
In the hypergeometric methods, the gamma function is effectively
viewed as the connection coefficient
between the formal solutions
of a differential equation at two singular points, but here $z$ appears as a parameter of the differential
equation, not as the integration variable.
The lack of a ``nice'' differential equation precludes asymptotically fast techniques available for many
other functions such as elementary functions, $\operatorname{erf}(z)$, Bessel functions $J_n(z)$, etc.\ (technically, \emph{holonomic} or \emph{D-finite} functions~\cite{vdH:hol,Mezzarobba2011}).

\subsection{Contents and contributions}

This work is structured as follows.
Section~\ref{sect:general} reviews notation,
formulas, precision issues and implementation
techniques which are common to all algorithms for the gamma function.

Section~\ref{sect:stirlingimpl} treats the Stirling series in depth.
We give pseudocode and discuss error bounds, precision loss,
parameter choices, convergence rates, and implementation techniques.
The most interesting contribution is an improvement to
the Stirling series (Theorem~\ref{thm:stirlexpand} and Algorithm~\ref{alg:stirlingsum})
which simultaneously improves performance and requires
fewer Bernoulli numbers
(this leads to the fastest
known general method for computing the gamma function
up to about $10^6$~digits).

Section~\ref{sect:global} discusses global alternatives to the
Stirling series. We analyze the efficiency of several algorithms
and attempt to answer whether the Stirling series should be dethroned. (Spoiler: it should not.)

Section~\ref{sect:local} discusses local methods.
Our main contribution is to analyze
computation using Taylor series. Our results include
a simple lower bound for the complex gamma function (Theorem~\ref{eq:globalbound})
which is new or at least does not appear in any of the usual reference works,
and new, improved bounds for Taylor coefficients
of the reciprocal gamma function (Theorem~\ref{eq:anbound} and formula \ref{eq:bestanbound}).

Section~\ref{sect:hypergeometric} discusses reduced-complexity
methods using hypergeometric series. We provide a comprehensive
collection of hypergeometric representations
of the gamma function and analyze their efficiency.
Section~\ref{sect:implresults} presents benchmark results
for multiple algorithms, and section~\ref{sect:openprob} discusses open problems.

Most of the algorithms
have been implemented in Arb~\cite{Joh2017}.
This study was prompted by a rewrite
of gamma and related functions in Arb undertaken by the author in 2021,
for which we wanted to explore plausible alternative algorithms
and new optimizations.
The work builds on, but significantly expands and updates, results reported in the thesis~\cite{Johansson2014thesis}.

\section{General techniques and preliminaries}

\label{sect:general}


\subsection{Variants of the gamma function}

A number of variants of $\Gamma(z)$ appear in applications
and may be provided as standalone functions in software:

\begin{itemize}
\item The factorial $z!$, which avoids a sometimes awkward shift by one.
\item The reciprocal gamma function $1 / \Gamma(z)$, which is an entire function and avoids issues with division by zero at the poles of $\Gamma(z)$ at $z = 0, -1, \ldots$.
This function is also useful in situations where $\Gamma(z)$ risks overflowing (where $1 / \Gamma(z)$ can evaluate to an upper bound or underflow to zero).
\item The scaled gamma function $\Gamma^{*}(z) = (2 \pi)^{-1/2} e^z {z}^{1/2-z} \Gamma(z)$, which by \eqref{eq:stirlingseries1} satisfies $\Gamma^{*}(z) \to 1$ when $z \to +\infty$, avoiding overflow and conditioning issues for large arguments \cite{gil2007numerical,Nemes2015}.
\item The log-gamma function $\log \Gamma(z)$, whose principal branch is defined to be holomorphic at $+\infty$ with branch cuts on the negative real axis, continuous from above.\footnote{We distinguish $\log \Gamma(z)$ from the pointwise composition $\log(\Gamma(z))$. Throughout this work, $\log(z)$ (with parentheses around the argument) denotes the principal branch of the natural logarithm, with continuity from above on the branch cut on $(-\infty, 0)$ so that $\log(-x) = \log(|x|) + \pi i$ for $x > 0$. In environments with signed zero, like IEEE~754 arithmetic, continuity on the branch cuts may follow the sign of the imaginary part, and algorithms have to be adapted accordingly.} The log-gamma function is useful for large arguments (avoiding overflow and conditioning issues) and for turning products or quotients of gamma functions into sums; some applications also require $\log \Gamma(z)$ itself.
\item The digamma function $\psi(z) = d/dz \, \log \Gamma(z) = \Gamma'(z) / \Gamma(z)$.
\item The derivatives $\Gamma^{(n)}(z)$, $[1/\Gamma(z)]^{(n)}$ and $\psi^{(n)}(z)$. The functions $\psi^{(n)}(z)$ are also known as polygamma functions.
\end{itemize}

Algorithms for the functions listed above are closely related
if not interchangeable.
In particular, any algorithm for $\Gamma(z)$ can be differentiated
to obtain the derivatives $\Gamma^{(n)}(z)$, $\psi^{(n)}(z)$, etc.
This is sometimes best done by differentiating formulas manually,
but in many cases, we can simply implement the original formulas
using automatic differentiation; that is, we take $z$
to be a truncated power series
\begin{equation}
z = c_0 + c_1 x + \ldots + c_n x^n \; \in \; \CC[[x]] \langle x^{n+1} \rangle.
\end{equation}
When used with fast power series arithmetic (based on FFT multiplication),
this also typically yields the best complexity for computing
high-order derivatives.

Algorithms for the gamma function also
permit the computation of rising factorials (Pochhammer symbols),
falling factorials,
binomial coefficients,
and the beta function.
Some techniques are applicable to
incomplete gamma and beta functions,
the Barnes $G$-function, the Riemann zeta function,
and other special functions.
To keep the scope manageable,
we will not consider such functions except where
they are involved in calculating the gamma function itself.

The gamma function can be defined for a matrix argument.
In general, the scalar algorithms can be generalized
to the matrix case; see~\cite{Cardoso2019,Miyajima2020}.
For the $p$-adic analog of the gamma function, see~\cite{villegas2007experimental}.

\subsection{Precision, accuracy and complexity}

Except where otherwise noted, $p$ either denotes a precision in bits
(meaning that real and complex numbers are represented by~$p$-bit
floating-point approximations) or a target accuracy (meaning
that we target a relative error of order $2^{-p}$).
For an introduction to arbitrary-precision arithmetic and many
of the underlying algorithms and implementation techniques,
see Brent and Zimmermann~\cite{mca}.

We typically need some extra bits of working precision for full $p$-bit accuracy,
where the precise amount depends on the input and
on the algorithm.
We will analyze sources of numerical error in some cases,
but we will not attempt to write down explicit bounds for errors
resulting from floating-point rounding.
Most algorithms can be implemented in ball arithmetic~\cite{vdH:ball,Joh2017}
to provide rigorous error bounds.
To compute the gamma function with a guaranteed relative error
$2^{-p}$, it thus suffices to choose
a working precision $p' > p$ heuristically;
we can restart with increased precision in the rare event that the
heuristic estimate turns out to be inadequate, and any complexity
analysis only depends on the estimate for $p'$ being accurate asymptotically.


To guarantee \emph{correct rounding},
Ziv's strategy~\cite{Ziv1991,MullerEtAl2018} should be used: we compute a first approximation
with a few guard bits and restart with increased precision if the
correctly rounded result cannot be determined unambiguously.
To ensure termination,
this loop needs to be combined with a test for input $z$ where $\Gamma(z)$
is exactly representable, which (in the case of floating-point input $z$) conjecturally occurs
only for $z \in \ZZ$.
For efficiency reasons, certain limiting cases should also
be handled specially (e.g.\ $\Gamma(z) \sim 1/z - \gamma$, $z \to 0$).

We will state some complexity bounds in the form $\bigOtilde(n)$,
meaning $O(n \log^c n)$ where we do not care about the constant $c$.
\emph{Arithmetic complexity} bounds assume that operations
have unit cost, while \emph{bit complexity} bounds account for the
cost of manipulating $p$-bit numbers in the multitape Turing model.
Additions and subtractions cost $O(p)$ bit operations
while multiplications cost $M(p) = O(p \log p) = \bigOtilde(p)$
with the asymptotically fastest known algorithm \cite{Harvey2021}.
Divisions and square roots cost $O(M(p))$
and elementary functions cost $O(M(p) \log p)$.
For moderate number of bits $p$ (up to a few thousand, say),
it is more accurate to estimate multiplications as having
cost $O(p^2)$ while elementary functions cost $O(p^{2.5})$.

We will sometimes use the fact that arithmetic operations and elementary
functions can be applied to power series of length $n$
using $O(n \log n) = \bigOtilde(n)$ arithmetic operations on coefficients.
This does not always translate to a uniform bit complexity bound,
because power series operations often lose $O(n)$ or $O(n \log n)$
bits of accuracy to cancellations (i.e.\ we will often need $p = \bigOtilde(n)$).

\subsection{Error propagation}

In ball arithmetic or interval arithmetic, the input $z$ to a function $f$ may be
inexact, i.e.\ it may be given by a rectangular enclosure
$z \in [m \pm r]$, $z \in [a \pm r] + [b \pm s] i$ or a complex ball $z = D(m, r)$.
We can let the uncertainty in the input propagate
automatically through an algorithm that computes $f$, but this sometimes yields
needlessly pessimistic enclosures.

A remedy is to evaluate $f$ at an exact
floating-point number $m \in E$, typically the midpoint of $E$, where $E$ is the enclosure for $z$.
We can then bound the propagated error using the radius of $E$
and an accurate bound for $\sup_{t \in E} |f'(t)|$.
Suitable bounds for derivatives of $\Gamma(z)$, $1/\Gamma(z)$ and $\log \Gamma(z)$
can be computed using the Stirling series or
other global approximations that will be discussed later (machine-precision
accuracy is sufficient for this).

When the input enclosures are wide, it is better to evaluate
lower and upper bounds.
The functions $\Gamma(x)$, $1/\Gamma(x)$, $\log \Gamma(x)$
are monotonic on the intervals $(0, x_0]$ and $[x_0, \infty)$
where $x_0 = 1.4616\ldots$ while the functions $\psi^{(m)}(z)$
are monotonic on $(0, \infty)$.
These conditions can be extended to segments on the negative real line
via the reflection formula.
In the complex domain, monotonicity conditions for
the real and imaginary parts of $\Gamma(z)$ are more complicated;
a convenient solution
is to exponentiate an enclosure of $\log \Gamma(z)$.

\subsection{Rising factorials}

\label{sect:rising}

The shift relation $\Gamma(z+1) = z \Gamma(z)$ can be used in two ways:
either to reduce the argument $z$ to a fixed strip near the origin,
for example $1 \le \operatorname{Re}(z) \le 2$ (useful for local and hypergeometric methods),
or to ensure that $|z|$ or $\operatorname{Re}(z)$ is large (for use with asymptotic methods).

Let us now consider the problem of evaluating a repeated argument
shift, i.e.\ computing the rising factorial
\begin{equation}
(z)_n \,=\, z (z+1) \cdots (z+n-1) \,=\, \frac{\Gamma(z+n)}{\Gamma(z)}, \quad n \ge 0.
\end{equation}

When $n$ is extremely large, the best way to compute $(z)_n$ numerically is to
use asymptotic formulas for the gamma function or for $(z)_n$ itself,
but our interest here is in smaller $n$ (of order $n \lesssim p^{1+\varepsilon}$) where we want to use rising factorials
in the computation of the gamma function and not vice versa.\footnote{Rising factorials
are also useful for computing binomial coefficients ${z \choose n} = (z-n+1)_n / n!$.}

The rising
factorial also serves as a model problem for the evaluation of hypergeometric series which will be considered later.
We recall that a sequence $f(n)$ is hypergeometric if it satisfies
a recurrence relation of the form $f(n+1) / f(n) = P(n) / Q(n)$
for some polynomials $P$, $Q$, or equivalently if $f(n)$
can be written as
\begin{equation}
f(n) = c d^n \frac{(a_1)_n \cdots (a_p)_n}{(b_1)_n \cdots (b_q)_n}.
\end{equation}
for some constants $a_1, \ldots, a_p, b_1, \ldots, b_q, c, d$.

\subsubsection{Binary splitting}

The obvious, iterative algorithm
to compute $(z)_n$ can in some circumstances
be improved using a divide-and-conquer strategy.

\begin{algorithm}
\caption{Rising factorial using binary splitting}\label{alg:rfbs}
\small
\begin{algorithmic}[1]
\Require $z \in R$, $n \in \NN$ where $R$ is any ring with $1 \in R$
\Ensure $z (z+1) \cdots (z+n-1) \in R$
\If {$n \le 10$} \Comment{Tuning parameter}
  \State \Return $z (z+1) \cdots (z+n-1)$ \Comment{Direct iterative product}
\EndIf
\State $A \gets (z)_m, \; B \gets (z + m)_{n - m}$, \quad $m = \lfloor n / 2 \rfloor$  \Comment{Recursive calls}
\State \Return $AB$
\end{algorithmic}
\end{algorithm}

Algorithm~\ref{alg:rfbs} is typically more numerically stable than
the iterative algorithm, particularly in rectangular complex interval
arithmetic. Additionally, when $R = \QQ$,
its bit complexity is only $\bigOtilde(n)$
compared to $\bigOtilde(n^2)$ for the iterative algorithm.
A similar speedup occurs in many other rings with coefficient growth.

When $z \in \QQ$, it is best to keep numerators and denominators
unreduced and only canonicalize the final result.
If we want $(z)_n$ as a numerical approximation
rather than an exact fraction, no GCDs are needed at all.
When $z \in \ZZ$ is a small integer; that is, when we are
essentially computing $n!$, the basic binary splitting scheme
can be improved by exploiting the prime factorization of factorials~\cite{Borwein1985,Lus2008}.

\subsubsection{Rectangular splitting}

The following algorithm~\cite{Johansson2014rectangular} is better for
$z \in R$ where a ``nonscalar'' (full) multiplication $xy$ with $x, y \in R$
has high, uniform cost while a ``scalar'' operation
such as $x + cy$ is cheap.
This assumption typically holds, for example, when
$z$ is a $p$-bit real or complex number and $c \in \ZZ$
with $p \gg \log_2 |c|$, when $z$ is a matrix,
or when $z$ is a truncated power series.

\begin{algorithm}
\caption{Rising factorial using rectangular splitting}\label{alg:rfrs}
\small
\begin{algorithmic}[1]
\Require $z \in R$, $n \in \NN$ where $R$ is any ring with $1 \in R$
\Ensure $z (z+1) \cdots (z+n-1) \in R$
\State Select tuning parameter $1 \le m \le \max(n,1)$ 
\State Compute the powers $z^2, z^3, \ldots, z^m$
\State $k \gets 0$
\State $r \gets 1$
\While {$k < n$}
  \State $\ell \gets \min(m, n-k)$
  \State Expand $f = (X+k) \cdots (X+k+\ell-1) = \sum_{i=0}^{\ell} f_i X^i \in \ZZ[X]$
  \State $t \gets f(z) = f_0 + f_1 z + \ldots + f_{\ell} z^{\ell}$ \Comment Using precomputed powers of $z$
  \State $r \gets r \cdot t$
  \State $k \gets k + m$
\EndWhile
\State \Return $r$
\end{algorithmic}
\end{algorithm}

Algorithm~\ref{alg:rfrs} performs
$O(n)$ scalar operations and $m + n / m$ full multiplications,
where $m$ is a tuning parameter.
The number of full multiplications is minimized if we take $m \approx \sqrt{n}$,
giving roughly $2 \sqrt{n}$ such multiplications.
However, the cost also depends on the size of the coefficients $f_i$
and on implementation details, so the optimal $m$
must be determined empirically.
For example, a carefully tuned implementation for $z \in \RR$ in Arb uses
$m = \lfloor \min(\sqrt{n}, 8 + 0.2 \max(0, p-4096)^{0.4}, 60) \rfloor$ for $n > 50$,
and $m = 1, 2, 4$ or $6$ for smaller $n$.

The polynomial $f$
can be constructed using repeated multiplication by linear polynomials,
resulting in a triangular scheme for the coefficients.
It can also be constructed using binary splitting (applied
to a rising factorial in the ring $\ZZ[X]$), but for the degrees
that will be encountered in gamma function computation,
the triangular scheme suffices (it should be implemented using machine words when $n$ and $m$ are small enough
that overflow cannot occur).

The scalar operations $\sum_k f_k z^k$ can be implemented
using optimized code for dot products~\cite{Johansson2019}.
If $z$ is a truncated power series, the inner operations
can be viewed as a matrix multiplication; the algorithm is related to the
Brent-Kung algorithm~\cite{BrentKung1978} for power series composition.


\subsubsection{Logarithmic rising factorial}

\begin{algorithm}
\caption{Logarithmic rising factorial (with correct branches)}\label{alg:logrf}
\small
\begin{algorithmic}[1]
\Require $z \in \CC$ with $\operatorname{Im}(z) > 0$, $n \in \NN$
\Ensure $\log \, (z)_{n} = \sum_{k=0}^{n-1} \log(z+k)$
\State $f \gets z (z+1) \cdots (z+n-1)$ \Comment{Using fast rising factorial algorithm}
\State If $n \le 1$, \Return $\log(f)$
\State $s \gets z$, \; $m \gets 0$   \Comment{Set $s$ to a floating-point approximation of $z$}
\For{$k \gets 1, 2, \ldots, n - 1$}
    \State $t \gets s \cdot (z+k)$ \Comment{Approximate floating-point product}
    \If{$\Imag(s) \ge 0$ and $\Imag(t) < 0$}
        \State $m \gets m + 2$
    \EndIf
    \State $s \gets t$
    \State $s \gets s 2^e$ so that $|s| \approx 1$ \Comment{Renormalize to avoid potential overflow}
\EndFor
\If{$\Real(s) < 0$} 
    \If{$\Imag(s) \ge 0$}
        \State $m \gets m + 1$
    \Else
        \State $m \gets m - 1$
    \EndIf
    \State \Return $\log(-f) + \pi i m$
\Else
    \State \Return $\log(f) + \pi i m$
\EndIf
\end{algorithmic}
\end{algorithm}

The log-gamma function satisfies $\log \Gamma(z+1) = \log(z) + \log \Gamma(z)$,
or $\log \Gamma(z+n) = \log \, (z)_n + \log \Gamma(z)$ where
\begin{equation}
\log \, (z)_n = \sum_{k=0}^{n-1} \log(z+k) = \log(|(z)_n|) + \sum_{k=0}^{n-1} \arg(z+k) i.
\end{equation}
Since $\log \, (z)_n \ne \log((z)_n)$ in general when $z$ is not a positive real number,
we have to be careful to compute the correct branch.

Hare~\cite{Hare1997} gives an algorithm for
computing $\log \, (z)_n$ using $n$ multiplications and a single logarithm
evaluation. This is significantly better than evaluating~$n$ logarithms.
Hare's algorithm computes the product $\prod_{k=0}^{n-1} (z+k)$ iteratively,
incrementing a phase correction by $2 \pi i$ every time the
imaginary part of the partial product changes sign from nonnegative to negative (assuming that $\operatorname{Im}(z) > 0$;
input in the lower half-plane can be handled via complex conjugation).

Hare's algorithm has two minor problems
which we address with the improved Algorithm~\ref{alg:logrf}.
First, Hare's original algorithm does not take advantage of the fast rising factorial
algorithms described earlier.
This does not matter in machine arithmetic,
but in arbitrary-precision arithmetic, it is better to compute
$(z)_n$ and separately determine the phase correction using Hare's algorithm
in machine-precision arithmetic.
Second, Hare's original algorithm involves an exact sign test which can fail in ball arithmetic.
Fortunately, we do not need to run Hare's algorithm in ball arithmetic;
since we essentially only need to determine $\operatorname{Im}(\log \, (z)_n)$ to within $\tfrac{1}{2} \pi$,
it suffices to apply it to a floating-point approximation of $z$,
at least as long as the input ball for $z$ is precise.\footnote{The relative error for a complex
multiplication performed the obvious way in $p$-bit floating-point arithmetic
is bounded by $\sqrt{5} \cdot 2^{-p}$~\cite{brent2007error}, so the
relative error due to multiplications in $s$ at the end of the main loop in Algorithm~\ref{alg:logrf}
is of order $(1 + \sqrt{5} \cdot 2^{-p})^n - 1$. Accounting for additions,
the initial error in $z$, and possible overflow/underflow, Algorithm~\ref{alg:logrf} is
\emph{very conservatively} valid at least for $n \le 10^6, |z| < 10^6, |\Imag(z)| > 10^{-6}$,
the input balls for $z$ and $\Imag(z)$ being accurate to at least 30 bits,
and with 53-bit machine arithmetic for the floating-point part.
If $z$ is too large, too close to the real line, or given by a too wide ball,
or if $n$ is too large, or if underflow or overflow is possible,
we may fall back to
computing $\sum_{k=0}^{n-1} \arg(z+k)$ more directly.}
In the final step, we compute the logarithm of either $(z)_n$ or $-(z)_n$
depending on the sign of the approximate product, in order to avoid
issues around the branch cut discontinuity.

For $z$ close to the real line, it may be useful to note
that $|\arg(z)| < \tfrac{1}{n} \pi$ implies that $\log \, (z)_n = \log((z)_n)$.

\subsubsection{Derivatives}

The derivatives $[(z)_n]^{(k)}$ for $0 \le k < d$ can be computed simultaneously
by evaluating $(z + x)_n$ in $z \in \CC[[x]] / \langle x^d \rangle$.
The special case $d = 2$ gives us the sum of reciprocals
\begin{equation}
\frac{1}{z} + \frac{1}{z+1} + \ldots + \frac{1}{z+n-1} = \frac{[(z)_n]'}{(z)_n},
\end{equation}
used in argument reduction for the digamma function,
while the higher logarithmic derivatives give the sums $\sum_{i=0}^{n-1} (z+i)^{-k}$
for polygamma functions of higher order.

The best algorithm depends on $n$ and $d$ as well as the precision $p$
and the representation of the argument $z$.
If $d$ is small, binary splitting or rectangular splitting
should be used depending on the bit size of $z$.
If $n$ and $d$ are both large, binary splitting should be used;
when $n = d = O(p)$, this leads to quasi-optimal
$\bigOtilde(p^2)$ bit complexity (assuming the use of fast power series arithmetic).

In some other cases, it is more efficient to form all the powers of $z$
appearing in the expanded polynomial $(z + x)_n \in \ZZ[z,x]$
and compute the coefficients (called \emph{skyscraper numbers}~\cite{khovanova2013skyscraper}) using recurrence relations.
Appropriate cutoffs between the algorithms must be determined empirically.\footnote{See the function \texttt{arb\_hypgeom\_rising\_ui\_jet} in Arb.}

\subsubsection{Asymptotically fast methods}

It is possible to compute rising factorials using
only $\bigOtilde(n^{1/2})$
arithmetic operations: assuming for simplicity that
$n = m^2$, the idea is to expand the polynomial
$f(x) = x (x+1) \cdots (x+m-1) \in \ZZ[x]$
and then evaluate $f(z), f(z+m), \ldots, f(z+m(m-1))$
simultaneously with a remainder tree,
using fast polynomial arithmetic~\cite{Strassen1976,ChudnovskyChudnovsky1988,Ziegler2005,BostanGaudrySchost2007}.
Such methods are useful in finite rings like $\ZZ / q \ZZ$,
but they are numerically unstable over~$\RR$,
forcing use of high intermediate precision~\cite{KohlerZiegler2008},
and they have not been observed to yield a speedup
over rectangular splitting for
numerical rising factorials with $n, p < 10^6$~\cite{Johansson2014rectangular}.

\subsection{Reflection formula}

Thanks to the reflection formula,
algorithms for the gamma function only need to treat $z$ in the half-plane $\operatorname{Re}(z) \ge \tfrac{1}{2}$,
or indeed any half-plane $\operatorname{Re}(z) \ge C$ when combined with
a shift.\footnote{However, the reflection formula should not be applied overzealously;
many formulas and algorithms work perfectly well for
$\operatorname{Re}(z) < \tfrac{1}{2}$,
without application of \eqref{eq:reflection},
at least as long as $z$ is not too close to the negative real line.
See the remarks in Hare~\cite{Hare1997} and the discussion of the Stirling series below.}

When evaluating the sine function in \eqref{eq:reflection}, we should
reduce $\operatorname{Re}(z)$ to $[-\tfrac{1}{2}, -\tfrac{1}{2}]$
before multiplying by $\pi$
to avoid unnecessary rounding errors
(in an implementation,
it is convenient to have an intrinsic \texttt{sinpi} function
for this purpose).
In the complex plane, say for $|\operatorname{Im}(z)| > 1$, the formula
\begin{equation}
\frac{1}{\sin(\pi z)} = 
\begin{cases}
\displaystyle{\frac{2 i e^{\pi i z}}{e^{2\pi i z} - 1}}, & \operatorname{Im}(z) > 1, \\
\displaystyle{\frac{-2 i e^{-\pi i z}}{e^{-2\pi i z} - 1}}, & \operatorname{Im}(z) < 1
\end{cases}
\end{equation}
is useful
to avoid overflow (in ball or interval arithmetic, this also reduces inflation of error bounds).
In an implementation of the reciprocal gamma function $1 / \Gamma(z)$,
the division-free formula
\begin{equation}
\frac{1}{\Gamma\!\left(z\right)} = \left(\tfrac{1}{\pi}\right) \sin\!\left(\pi z\right) \Gamma\!\left(1 - z\right),
\label{eq:reflectionrec}
\end{equation}
should be used instead of computing $\Gamma(z)$ via \eqref{eq:reflection} and then inverting.

The reflection formula for the principal branch of the log-gamma function
can be written as
\begin{equation}
\log \Gamma(z) = \log(\pi) - \log \sin(\pi z) - \log \Gamma(1-z).
\end{equation}

The \emph{log-sine function} with the correct
branch structure can be defined via
\begin{equation}
\log \sin(\pi z) = \int_{1/2}^z \pi \cot(\pi t) \, dt,
\end{equation}
where the path of integration is taken through the upper half plane
if $0 < \arg(z) \le \pi$ and through the lower half plane
if $-\pi < \arg(z) \le 0$.
The function $\log \sin(\pi z)$ is holomorphic
in the upper and lower half planes and
on the connecting real interval $(0,1)$.
It has branch cuts on the real intervals $(1, 2)$, $(2, 3)$, $\ldots$,
continuous from below, and on $(-1,0)$, $(-2,-1)$, $\ldots$, continuous from
above.
It coincides with $\log(\sin(\pi z))$
on the strip $-\tfrac{1}{2} < \operatorname{Re}(z) < \tfrac{3}{2}$.
In general,
\begin{equation}
\log \sin(\pi z) = \log(\sin(\pi(z-n))) \; \mp \; n \pi i, \quad n = \lfloor \operatorname{Re}(z) \rfloor
\label{eq:lsineq}
\end{equation}
where the negative sign is taken if $0 < \arg(z) \le \pi$
and the positive sign is taken otherwise.
To compute $\log \sin(\pi z)$, we can reduce
$z$ to the strip $0 \le \operatorname{Re}(z) < 1$
using \eqref{eq:lsineq}; on this strip, we may then use
\begin{equation}
\log(\sin(\pi z)) =
\begin{cases}
\log(\tfrac{1}{2}(1-e^{2i\pi z})) - i \pi (z-\tfrac{1}{2}), & \operatorname{Im}(z) > 1, \\
\log(\tfrac{1}{2}(1-e^{-2i\pi z})) + i \pi (z-\tfrac{1}{2}), & \operatorname{Im}(z) < 1
\end{cases}
\end{equation}
for $|\operatorname{Im}(z)| > 1$ to avoid large exponentials
that may cause spurious overflow or loss of accuracy
(if $|\operatorname{Im}(z)|$ is large, a \texttt{log1p} function is also useful here).

When implementing the reflection formula in ball or interval arithmetic
with an inexact input $z$,
it is useful to evaluate at the midpoint of $z$
and bound the propagated error using the derivative $[\log \sin(\pi z)]' = \pi \cot(\pi z)$,
so that the $n$ in \eqref{eq:lsineq} will be exact.
When the input ball or interval straddles a branch cut of $\log \sin(\pi z)$ itself, we need to compute the union of the
images on both sides.

The proofs of the above formulas just involve analytic continuation
with some bookkeeping for the placements of branch cuts.
See \cite[Proposition 3.1]{Hare1997} for an alternative form of the reflection formula.


The reflection formula for the digamma function reads
\begin{equation}
\psi\!\left(z\right) = \psi\!\left(1 - z\right) - \pi \cot\!\left(\pi z\right)
\label{eq:digammarefl}
\end{equation}
where the cotangent should be evaluated using
\begin{equation}
\cot(\pi z) =
\begin{cases}
i \left(\displaystyle \frac{2 e^{2\pi i z}}{e^{2\pi i z} - 1} - 1\right), & \operatorname{Im}(z) > 1, \\
-i \left(\displaystyle \frac{2 e^{-2\pi i z}}{e^{-2\pi i z} - 1} - 1\right), & \operatorname{Im}(z) < 1,
\end{cases}
\end{equation}
when not close to the real line.

For the higher derivatives, the reflection formula
\begin{equation}
\psi^{(m)}\!\left(1 - z\right) = {\left(-1\right)}^{m} \left(\psi^{(m)}\!\left(z\right) + \pi \frac{d^{m}}{{d z}^{m}} \cot\!\left(\pi z\right)\right)
\end{equation}
can be evaluated using recurrence relations;
even more simply (and asymptotically
more efficiently), we can evaluate \eqref{eq:digammarefl} using
power series arithmetic.
For evaluation of derivatives $\Gamma^{(m)}(z)$,
we can similarly apply power series arithmetic directly to~\eqref{eq:reflection};
there is no need to apply the reflection formula
at the level of the logarithmic derivatives.

\subsection{Branch correction}

\label{ref:branchcorrection}

We may sometimes want to compute $\log \Gamma(z)$ via the
ordinary gamma function,
in which case
we have $\log \Gamma(z) = \log(\Gamma(z)) + 2 \pi i k$
where the branch correction $k$ can be determined
from a low-precision approximation of $\log \Gamma(z)$.
It is useful to alternate between the formulas
\begin{align}
\log \Gamma(z) &= \log(\Gamma(z)) + 2 \pi i \left\lceil \frac{\Imag(\log \Gamma(z))}{2 \pi} - \frac{1}{2} \right\rceil \\
               &= \log(-\Gamma(z)) + \pi i \left(2 \left\lceil \frac{\Imag(\log \Gamma(z))}{2 \pi}\right\rceil - 1\right)
\end{align}
so that we avoid problems near the branch cut discontinuities of $\log(\Gamma(z))$.

Simply taking the leading term in the Stirling series
appears to be sufficient to determine the branch correction anywhere in the complex plane.
A numerical check suggests that for any $z = x + yi \in \CC \setminus \{0, -1, \ldots \}$,
\begin{equation}
\left|\operatorname{Im}(\log \Gamma(z)) - ((x-\tfrac{1}{2}) \operatorname{arg}(z)+ y(\log(|z|)-1)) \right| \; < \; \frac{\pi}{2},
\end{equation}
though we have not attempted to prove this inequality
(the error is smaller than $\tfrac{\pi}{4}$ for $\Real(z) > 0$ and
smaller than $0.08$ for $\Real(z) > \tfrac{1}{2}$; see also the
discussion of error bounds in section~\ref{sect:stirlbounds}).

In a narrow bounded region, say $[1,2] + [0, 10] i$, it is unnecessary
to evaluate transcendental functions; we can more
simply precompute a table of
linear approximations for the curves where
$\Imag(\log \Gamma(z)) = \pi i (k+\tfrac{1}{2})$.


\section{The Stirling series}

\label{sect:stirlingimpl}

The Stirling series \eqref{eq:stirlingseries1} is generally implemented in the logarithmic form\footnote{Some authors assign different names (De Moivre, Binet) to different versions
of the Stirling series, but we will not bother with this distinction. See \cite{Borwein2018,Corless2019}.}
\begin{equation}
\label{eq:stirlingseries}
    \log \Gamma(z) = \left(z-\tfrac{1}{2}\right)\log z - z +
          \frac{\log(2 \pi)}{2}
            + \sum_{n=1}^{N-1} T_n(z)
          + R_N(z)
\end{equation}
where the terms are given by
\begin{equation}
T_n(z) = \frac{B_{2n}}{2n(2n-1)z^{2n-1}}
\end{equation}
and the remainder term satisfies\footnote{For a proof using the Euler-Maclaurin formula, see Olver~\cite[Chapter 8]{Olver1997}.}
\begin{equation}
\label{eq:stirlingremainder}
    R_N(z) = \int_0^{\infty} \frac{B_{2N} - B_{2N}(\{x\})}{2N(x+z)^{2N}} dx, \quad z \not\in (-\infty, 0], \quad N \ge 1.
\end{equation}

Here, $B_n$ denotes a Bernoulli number, $B_n(x)$ denotes a Bernoulli polynomial,
\begin{equation}
\frac{t e^{xt}}{e^t-1} = \sum_{n=0}^\infty B_n(x) \frac{t^n}{n!}, \quad B_n = B_n(0),
\end{equation}
and $\{x\} = x - \lfloor x \rfloor$ denotes the fractional part of $x$.

Computing $\Gamma(z)$ as $\exp(\log \Gamma(z))$
via \eqref{eq:stirlingseries} instead of
the exponential form \eqref{eq:stirlingseries1} has multiple benefits:
the coefficients are simpler, the main sum only
contains alternating powers of $z$,
the error is easier to analyze,
and we do not need to worry about intermediate overflow or underflow
in the prefactors.
The expansion~\eqref{eq:stirlingseries} is also easily extended
to derivatives.

The computation of $\Gamma(z)$, $1/\Gamma(z)$, $\log \Gamma(z)$ or their derivatives
using the Stirling series can be
broken down into the following subproblems:

\begin{enumerate}
\item Choosing parameters (working precision; whether to use the reflection formula; shift $z \to z + r$; number of terms $N$); bounding $|R_N(z+r)|$.
\item Generating the Bernoulli numbers.
\item Evaluating the main sum.
\item Evaluating the surrounding factors and terms (rising factorials, logarithms, exponentials, trigonometric functions) to reconstruct the function value.
\end{enumerate}

We have already discussed the evaluation of rising factorials
and the reflection formula;
we will now study the remaining issues in detail.

First of all, we give a pseudocode implementation (Algorithm~\ref{alg:stirling}).
Assuming that all sub-functions ($\sin(\pi z)$, etc.) are evaluated accurately
to the indicated precision, Algorithm~\ref{alg:stirling}
computes $\Gamma(z)$, $1/\Gamma(z)$ or $\log \Gamma(z)$
with relative error less than $2^{-p}$, at least for \emph{most} input,
as verified by a heuristic analysis (see below)
and randomized testing.\footnote{The algorithm passes $10^5$ test cases with non-uniformly random $z$ and $2 \le p \le 2000$.}
Ball arithmetic provides certified results: the step
``add error bound'' incorporates the remainder term,
while bounds for rounding errors in all other operations will be
added automatically.
The reader who wishes to implement the algorithm in
plain floating-point arithmetic should carry out
a full error analysis in order to guarantee a rigorous result.

\begin{algorithm}
\caption{Gamma function using the Stirling series}\label{alg:stirling}
\small
\begin{algorithmic}[1]
\Require $z \in \CC$, precision $p \ge 2$
\Ensure $\Gamma(z)$, $1/\Gamma(z)$ or $\log \Gamma(z)$
\State If computing $\log \Gamma(z)$, and $z \approx 1$ or $z \approx 2$, use higher precision $p \gets p + e$ where $e = \lfloor \max(0, -\log_2(\min(|z-1|, |z-2|))) \rfloor$ (but fall back to a Taylor series approximation $\log \Gamma(1+z) = -\gamma z + \ldots$, $\log \Gamma(2+z) = (1-\gamma) z + \ldots$ if, for example, $e > p/2$)
\State \textbf{Part 1: parameter selection} (using machine-precision arithmetic)
\State $\beta \gets 0.2$     \Comment{Tuning parameter}
\State $\textit{reflect} \gets (\operatorname{Re}(z) < -5 \land |\Imag(z)| < \beta p)$ \Comment{Use reflection formula? Tuning parameter.}
\State If $\textit{reflect}$ set $z' \gets 1 - z$, else $z' \gets z$
\State $r \gets 0$; \textbf{while} $|\Imag(z)| < \beta p \land (\Real(z'+r) < 0 \lor |z'+r| < \beta p)$ \textbf{do} $r \gets r + 1$
\State $N \gets 1$
\While{$|R_N(z'+r)| > 2^{-p}$} \Comment{Use the minimum of the bounds \eqref{eq:stirbound1}, \eqref{eq:stirbound2}, \eqref{eq:stirbound3}}
    \State $N \gets N + 1$
    \State Backup termination test: depending on the tuning parameters and the numerical approximations used so far, this loop may not terminate. If the bound for $|R_{N}(z'+r)|$ is not decreasing, restart with $r \gets 2r, N \gets 1$ (or stop with the best $N$ found so far if less than $p$-bit accuracy is acceptable, e.g.\ in ball arithmetic with an inexact $z$).
\EndWhile
\State $p' \gets p + 5$; \quad $p'' \gets p' + \lfloor \log_2(\max(1, X \log(\max(1, Y)))) \rfloor$ where $X = r$, $Y = |z'| + r$,  when computing $\log \Gamma(z)$ and $X = Y = |z'|+r$ otherwise
\State \textbf{Part 2: evaluation} (using $p''$ bits of precision except where noted)
\State If $\textit{reflect}$ set $z' \gets 1 - z$, else $z' \gets z$
\State $s \gets \sum_{n=1}^{N-1} T_n(z'+r)$  \Comment{Main sum, using $p'$ bits of precision}
\State $t \gets (z'+r-\tfrac{1}{2}) \log(z'+r) - (z'+r) + \tfrac{1}{2}\log(2 \pi) + s$
\State $t \gets t + [\pm \varepsilon]$, $\varepsilon = |R_N(z'+r)|$ \Comment{Add error bound ($[\pm \varepsilon] + [\pm \varepsilon] i$ if $z$ is complex)}
\If{$\textit{reflect}$}
\State \Return $\Gamma(z) = \pi \exp(-t) (z')_r  (1 / \sin(\pi z))$
\State \Return $1/\Gamma(z) = \exp(t) \sin(\pi z) / ((z')_r \pi)$
\State \Return $\log \Gamma(z) = \log \; (z')_r - t - \log \sin(\pi z) + \log(\pi)$
\Else
\State \Return $\Gamma(z) = \exp(t) / (z)_r$
\State \Return $1/\Gamma(z) = \exp(-t) (z)_r$
\State \Return $\log \Gamma(z) = t - \log \; (z)_r$
\EndIf
\end{algorithmic}
\end{algorithm}

\subsection{Derivatives}

Algorithm~\ref{alg:stirling} can be generalized to a power series argument,
or adapted to compute $\psi(z)$.
We give the main formulas here.
Differentiating $m \ge 0$ times results in
\begin{equation}
[\log \Gamma(z)]^{(m)} = \frac{d^m}{dz^m} \left[\left(z-\tfrac{1}{2}\right)\log z - z + \frac{\log(2 \pi)}{2} \right] + \sum_{n=1}^{N-1} T^{(m)}_n(z) + R^{(m)}_N(z),
\end{equation}
\begin{equation}
T_n^{(m)}(z) = (-1)^m \frac{(2n-1)_{m} B_{2n}}{2n (2n-1) z^{2n+m-1}},
\end{equation}
\begin{equation}
R_N^{(m)}(z) = (-1)^m (2N)_{m} \int_0^{\infty} \frac{B_{2N} - B_{2N}(\{x\})}{2N(x+z)^{2N+m}} dx.
\end{equation}

Specializing to $m = 1$ gives the expansion for the digamma function
\begin{equation}
\label{eq:stirlingdigamma}
\psi(z) = \log(z) - \frac{1}{2z} - \sum_{n=1}^{N-1}  \frac{B_{2n}}{2n z^{2n}} + R'_N(z)
\end{equation}
and for $m \ge 2$ the polygamma functions
\begin{equation}
\label{eq:stirlingpolygamma}
\psi^{(m-1)}(z) = \frac{(-1)^m}{z^{m-1}} \left[ (m-2)! + \frac{(m-1)!}{2 z} + \sum_{n=1}^{N-1} \frac{(2n+1)_{m-2} B_{2n}}{z^{2n}}\right] + R^{(m)}_N(z).
\end{equation}

The derivatives can be evaluated individually or simultaneously as a power series.

\subsection{Error bounds}

\label{sect:stirlbounds}

It is convenient to bound the remainder term
as a multiple of the first omitted term; that is,
we consider the function $C_{N,m}(z)$ defined by
\begin{equation}
R_N^{(m)}(z) = C_{N,m}(z) \, T_N^{(m)}(z).
\end{equation}

When the argument is real and positive,
it is well known that the error in the Stirling series
is bounded in magnitude by the first omitted term.
In fact, the error has the same sign as the first omitted term:

\begin{theorem}
\label{thm:stirlingboundreal}
For $m \ge 0$, $N \ge 1$, and $x > 0$,
\begin{equation}
0 < C_{N,m}(x) < 1.
\end{equation}
\end{theorem}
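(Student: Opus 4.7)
My plan is to prove both inequalities via two different integral representations of $R_N^{(m)}$. The lower bound $C_{N,m}(x) > 0$ follows already from the representation $R_N^{(m)}(z) = [(-1)^m (2N)_m / (2N)] \int_0^{\infty} [B_{2N} - B_{2N}(\{t\})] / (t+z)^{2N+m} \, dt$ given in the ``Derivatives'' subsection: the Fourier expansion
\[
B_{2N} - B_{2N}(u) = (-1)^{N+1} \frac{4 (2N)!}{(2\pi)^{2N}} \sum_{k=1}^{\infty} \frac{\sin^2(\pi k u)}{k^{2N}}
\]
shows that $B_{2N} - B_{2N}(\{t\})$ has the constant sign $(-1)^{N+1} = \operatorname{sign}(B_{2N})$ on $(0, \infty) \setminus \NN$. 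For $x > 0$ it follows that $R_N^{(m)}(x)$ has sign $(-1)^m \operatorname{sign}(B_{2N}) = \operatorname{sign}(T_N^{(m)}(x))$, so $C_{N,m}(x) > 0$. However, the sharp upper bound $|R_N^{(m)}(x)| < |T_N^{(m)}(x)|$ cannot be extracted by a termwise estimate of the same representation: the best pointwise bound $|B_{2N} - B_{2N}(\{t\})| \le (2 - 2^{1-2N}) |B_{2N}|$ only yields $|R_N^{(m)}(x)| < 2 |T_N^{(m)}(x)|$.

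For the sharp upper bound I would switch to Binet's first formula
\[
\log \Gamma(z) - (z - \tfrac{1}{2}) \log z + z - \tfrac{1}{2} \log(2\pi) = \int_0^{\infty} \left(\frac{1}{e^t - 1} - \frac{1}{t} + \frac{1}{2}\right) \frac{e^{-zt}}{t} \, dt
\]
combined with the partial fraction expansion $\phi(t) := 1/(e^t-1) - 1/t + 1/2 = 2t \sum_{k \ge 1} (t^2 + 4\pi^2 k^2)^{-1}$ (Mittag--Leffler expansion of $\coth$). Expanding each summand as
\[
\frac{1}{t^2 + a^2} = \sum_{n=0}^{N-2} \frac{(-1)^n t^{2n}}{a^{2n+2}} + \frac{(-1)^{N-1} t^{2N-2}}{a^{2N-2}(t^2 + a^2)}, \quad a = 2\pi k,
\]
integrating termwise, and using Euler's formula $\zeta(2n)/(2\pi)^{2n} = (-1)^{n-1} B_{2n}/[2(2n)!]$, the partial sum reproduces $\sum_{n=1}^{N-1} T_n(z)$. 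Differentiating $m$ times under the integral sign (justified by dominated convergence) multiplies the integrand by $(-t)^m$, so the tail satisfies
\[
R_N^{(m)}(z) = 2(-1)^{N+m-1} \sum_{k=1}^{\infty} \frac{1}{(2\pi k)^{2N-2}} \int_0^{\infty} \frac{t^{2N+m-2} e^{-zt}}{t^2 + 4\pi^2 k^2} \, dt.
\]
For $z = x > 0$, the strict inequality $1/(t^2 + 4\pi^2 k^2) < 1/(4\pi^2 k^2)$ combined with $\int_0^{\infty} t^{2N+m-2} e^{-xt} dt = (2N+m-2)!/x^{2N+m-1}$ gives
\[
|R_N^{(m)}(x)| < \frac{2 \zeta(2N)}{(2\pi)^{2N}} \cdot \frac{(2N+m-2)!}{x^{2N+m-1}} = \frac{|B_{2N}| (2N+m-2)!}{(2N)! \, x^{2N+m-1}} = |T_N^{(m)}(x)|,
\]
so $C_{N,m}(x) < 1$ and the theorem is proved.

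The main obstacle is pure bookkeeping: verifying the partial fraction, collecting the Stirling terms in the right form, and checking that the prefactor $(-1)^{N+m-1}$ matches $\operatorname{sign}(T_N^{(m)}(x)) = (-1)^m (-1)^{N+1}$ (which it does, since the exponents differ by $2$). The conceptual reason the Binet representation yields the sharp constant while \eqref{eq:stirlingremainder} does not is that the exponential kernel $e^{-zt}$ concentrates the integral near $t = 0$, which is precisely where the estimate $1/(t^2 + a^2) < 1/a^2$ is tight; no analogous tight estimate is available for the rational kernel $(t+z)^{-(2N+m)}$.
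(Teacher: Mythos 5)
The paper states Theorem~\ref{thm:stirlingboundreal} as ``well known'' and does not give a proof, so there is nothing to compare against directly; the question is simply whether your argument is correct. It is. The Fourier expansion of $B_{2N}-B_{2N}(\{t\})$ indeed has the constant sign $(-1)^{N+1}=\operatorname{sign}(B_{2N})$ away from the integers, and after accounting for the prefactor $(-1)^m(2N)_m/(2N)>0$ this gives $\operatorname{sign}(R_N^{(m)}(x))=\operatorname{sign}(T_N^{(m)}(x))$, hence $C_{N,m}(x)>0$. Your Binet/Mittag--Leffler computation for the upper bound also checks out line by line: the geometric partial sum, the interchange of $\sum_k$ and $\int$, the identification of the partial sum with $\sum_{n<N}T_n(z)$ via Euler's $\zeta(2n)$ formula, and the final strict estimate $1/(t^2+4\pi^2k^2)<1/(4\pi^2k^2)$ together with $\int_0^\infty t^{2N+m-2}e^{-xt}\,dt=(2N+m-2)!/x^{2N+m-1}$ give exactly $|T_N^{(m)}(x)|$. (Your intermediate identity for $R_N^{(m)}$ is, after rescaling $t\mapsto zt$, precisely the Paris series~\eqref{eq:parisser2} that the paper quotes later without proof, so your derivation is a nice cross-check of that formula too.)

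One remark: you are doing more work than necessary for the upper bound. The sign analysis you already carried out for the lower bound gives the upper bound for free, because it applies uniformly in~$N$. Since $\operatorname{sign}(R_N^{(m)}(x))=(-1)^{m+N+1}$, consecutive remainders $R_N^{(m)}(x)$ and $R_{N+1}^{(m)}(x)$ have opposite signs; combined with the telescoping identity $R_N^{(m)}(x)=T_N^{(m)}(x)+R_{N+1}^{(m)}(x)$ and the fact that $R_N^{(m)}(x)$ and $T_N^{(m)}(x)$ share a sign, this forces $|R_N^{(m)}(x)|=|T_N^{(m)}(x)|-|R_{N+1}^{(m)}(x)|<|T_N^{(m)}(x)|$, which is the upper bound with no further computation. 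This is essentially the classical proof (cf.\ Olver). Conversely, your Binet-based computation actually renders the Fourier-series step redundant as well: since the integrand in your representation of $R_N^{(m)}(x)$ is manifestly positive for $x>0$, the prefactor $(-1)^{N+m-1}$ already determines the sign, so the Binet route alone proves both inequalities. Either self-contained argument would do; the hybrid you wrote is correct but contains one unused leg.
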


For complex variables,
we have the following formula,
originally (with $m = 0$) due to Stieltjes.

\begin{theorem}
For $m \ge 0$, $N \ge 1$, and complex $z$ with $z \not \in (-\infty, 0]$,
\begin{equation}
|C_{N,m}(z)| \le \phi(z)^{2N+m}, \quad \phi(z) = \frac{1}{\cos(\tfrac{1}{2} \operatorname{arg}(z))}.
\label{eq:stirbound1}
\end{equation}
\end{theorem}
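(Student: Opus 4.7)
The plan is to reduce the complex bound to the real bound from Theorem \ref{thm:stirlingboundreal} by means of two elementary auxiliary facts. First, I would establish the geometric inequality
\begin{equation*}
|x + z| \;\geq\; \cos\bigl(\tfrac{1}{2}\operatorname{arg}(z)\bigr)\,(x + |z|) \qquad \text{for all } x \geq 0,\; z \not\in (-\infty, 0].
\end{equation*}
Writing $z = |z|e^{i\theta}$ with $|\theta|<\pi$, squaring both sides and using $\cos^2(\theta/2) = (1+\cos\theta)/2$ reduces this to the obvious inequality $\tfrac{1-\cos\theta}{2}(x-|z|)^2 \geq 0$. Raised to the power $2N+m$ it gives $|x+z|^{-(2N+m)} \leq \phi(z)^{2N+m}(x+|z|)^{-(2N+m)}$.

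Second, I would record that $B_{2N} - B_{2N}(\{x\})$ has constant sign on $[0,\infty)$, equal to $\operatorname{sign}(B_{2N}) = (-1)^{N+1}$. This is immediate from the standard Fourier representation of $B_{2N}(t)$ on $[0,1]$: subtracting the $t=0$ value yields
\begin{equation*}
B_{2N} - B_{2N}(\{x\}) \;=\; (-1)^{N+1}\,\frac{2(2N)!}{(2\pi)^{2N}}\sum_{k=1}^{\infty}\frac{1-\cos(2\pi k x)}{k^{2N}},
\end{equation*}
whose summands are pointwise non-negative. Hence $|B_{2N} - B_{2N}(\{x\})| = \operatorname{sign}(B_{2N})\,(B_{2N} - B_{2N}(\{x\}))$ everywhere.

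Combining, I would apply the triangle inequality to the integral defining $R_N^{(m)}(z)$, pull out the factor $\phi(z)^{2N+m}$ via the first inequality, and replace $|B_{2N} - B_{2N}(\{x\})|$ by $\operatorname{sign}(B_{2N})(B_{2N} - B_{2N}(\{x\}))$ via the second. The remaining real integral is exactly $|R_N^{(m)}(|z|)|/(2N)_m$. Finally, since $|T_N^{(m)}(z)| = |T_N^{(m)}(|z|)|$ (the prefactor depends on $z$ only through $|z|^{2N+m-1}$), dividing gives $|C_{N,m}(z)| \leq \phi(z)^{2N+m}\,C_{N,m}(|z|)$, and Theorem \ref{thm:stirlingboundreal} closes the estimate via $C_{N,m}(|z|) < 1$.

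The only genuine difficulty is the constant-sign property of $B_{2N} - B_{2N}(\{x\})$; the Fourier route above is probably the cleanest derivation, though one can alternatively argue by induction on $N$ using $B_{2N}'(t) = 2N\,B_{2N-1}(t)$ together with the known sign pattern of $B_{2N-1}$ on $[0,1]$. Everything else is routine bookkeeping of exponents and prefactors.
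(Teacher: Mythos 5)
Your proof is correct and follows essentially the same route as the paper's (which defers to Olver): bound $|x+z|^{-1}$ by $\phi(z)(x+|z|)^{-1}$, pull the factor through the integral, recognize the resulting real integral as $|R_N^{(m)}(|z|)|$, and close with Theorem~\ref{thm:stirlingboundreal}. You simply make explicit two facts the paper leaves tacit — the elementary proof of the geometric inequality by squaring, and the constant sign of $B_{2N}-B_{2N}(\{x\})$, which is what justifies identifying the integral of the absolute value with $|R_N^{(m)}(|z|)|$.
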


\begin{proof}
A proof with $m = 0$ is given by Olver~\cite[Chapter 8]{Olver1997}, which we here trivially generalize to $m \ge 0$ (Olver also
gives the case $m = 1$ in a subsequent exercise).
Since $|z+x|^{-1} \le\phi(z) (|z| + x)^{-1}$,
\begin{align*}
\left| R_N^{(m)}(z) \right| & \le \phi(z)^{2N+m} \int_0^{\infty} \frac{(2N)_{m} |B_{2N} - B_{2N}(\{x\})|}{2N {(|z| + x)}^{2N+m}} dx \\
                            & = \phi(z)^{2N+m} |R_N^{(m)}(|z|)| \\
                            & \le \phi(z)^{2N+m} |T_N^{(m)}(z)|
\end{align*}
where the last step uses Theorem~\ref{thm:stirlingboundreal}.
\end{proof}

Evaluating $\phi(z)$ does not require trigonometric functions.
If $z = x+yi$, we have
\begin{equation}
\phi(z) = \sqrt{1 + u^2}, \quad u = \frac{y}{|z| + x} = \frac{|z| - x}{y}
\end{equation}
where we should choose the expression for $u$ that avoids cancellation,
according to the sign of the real part $x$. It may be useful to note that
\begin{equation}
\phi(z) \le \begin{cases}
    1 & \operatorname{arg}(z) = 0 \\
    1.083 & |\operatorname{arg}(z)| \le \tfrac{1}{4} \pi \\
    1.415 & |\operatorname{arg}(z)| \le \tfrac{1}{2} \pi \\
    2.614 & |\operatorname{arg}(z)| \le \tfrac{3}{4} \pi.
\end{cases}
\end{equation}

The Stieltjes bound is convenient as it applies anywhere in the
cut complex plane, but it is not optimal. In fact, it overshoots the true
error exponentially with increased $N$ or $m$.
This is not a fatal problem for computations: using the reflection formula
so that $|\operatorname{arg}(z)| \le \tfrac{1}{2} \pi$,
the overshoot is at most of order $2^{N+m/2}$,
and this factor can be overcome with some extra argument reduction if needed.
Nevertheless, we can do much better with some case distinctions.

The $m = 0$ case of the following formula
is due to Brent~\cite[Theorem 1 and Corollary 1]{Brent2018} (improving on bounds by Spira~\cite{Spira1971} and Behnke and Sommer~\cite{behnke1962theorie}).

\begin{theorem}
For $m \ge 0$, $N \ge 1$, and complex $z$ with $\operatorname{Re}(z) \ge 0$, $z \ne 0$,
\begin{equation}
|C_{N,m}(z)| \le 1 + \sqrt{\pi (N + \tfrac{1}{2} m)}.
\label{eq:stirbound2}
\end{equation}
\end{theorem}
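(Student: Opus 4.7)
The plan is to isolate the ``main part'' of $R_N^{(m)}(z)$ coming from the constant $B_{2N}$ in the integrand and bound only the oscillatory remainder $B_{2N}(\{x\})$. First, I would evaluate the elementary integral $\int_0^{\infty}(x+z)^{-(2N+m)}\,dx = 1/((2N+m-1)z^{2N+m-1})$ (convergent because $2N+m\ge 2$ and $\Real(z)\ge 0$, $z\ne 0$) and verify, using the Pochhammer identity $(2N)_m/(2N-1)_m = (2N+m-1)/(2N-1)$, that the constant contribution of $B_{2N}$ reproduces exactly $T_N^{(m)}(z)$. This yields the key decomposition
\begin{equation*}
C_{N,m}(z) - 1 \;=\; -\frac{(2N+m-1)\, z^{2N+m-1}}{B_{2N}} \int_0^{\infty} \frac{B_{2N}(\{x\})}{(x+z)^{2N+m}}\,dx.
\end{equation*}

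Next, I would bound the right-hand side using the classical inequality $|B_{2N}(y)| \le |B_{2N}|$ for $y \in [0,1]$ (immediate from the Fourier expansion of $B_{2N}$) together with a geometric estimate adapted to $\Real(z) \ge 0$. The inequality $(x+\Real(z))^2 \ge x^2 + \Real(z)^2$, valid for $x, \Real(z) \ge 0$, gives $|x+z|^2 \ge x^2 + |z|^2$. After the substitution $x = |z|\tan\phi$, the integral collapses to a Wallis integral,
\begin{equation*}
|z|^{2N+m-1} \int_0^{\infty} \frac{dx}{|x+z|^{2N+m}} \;\le\; W_{2N+m-2} \;=\; \int_0^{\pi/2} \cos^{2N+m-2}\phi\,d\phi.
\end{equation*}
The closed form $W_n = (\sqrt{\pi}/2) \Gamma((n+1)/2)/\Gamma(n/2+1)$ together with the functional equation $\Gamma((2N+m+1)/2) = \tfrac{1}{2}(2N+m-1) \Gamma((2N+m-1)/2)$ absorbs the leading factor, giving
\begin{equation*}
(2N+m-1)\, W_{2N+m-2} \;=\; \sqrt{\pi}\, \frac{\Gamma((2N+m+1)/2)}{\Gamma((2N+m)/2)} \;\le\; \sqrt{\pi(N+m/2)}
\end{equation*}
by Wendel's inequality $\Gamma(x+\tfrac{1}{2})/\Gamma(x) \le \sqrt{x}$ applied at $x = (2N+m)/2$. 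The triangle inequality $|C_{N,m}(z)| \le 1 + |C_{N,m}(z) - 1|$ then completes the argument.

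The main obstacle is the first step: recognizing that extracting the constant $B_{2N}$ from $B_{2N} - B_{2N}(\{x\})$ regenerates $T_N^{(m)}(z)$ exactly, with no residual error. Once this algebraic identity is secured, the elementary estimate $|x+z|^2 \ge x^2 + |z|^2$ turns out to be tight enough (despite its simplicity) that no case analysis in $\operatorname{arg}(z)$ or splitting of the integration domain is needed, and the rest is a routine chain of standard estimates from Wallis integrals and Wendel's inequality. As a consistency check, the bound is essentially sharp for purely imaginary $z$, while for real positive $z$ it is much looser than the $|C_{N,m}(x)| < 1$ of Theorem~\ref{thm:stirlingboundreal}; the gap is absorbed harmlessly into the additive constant~$1$.
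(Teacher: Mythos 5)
Your proof is correct and takes the same approach as the paper, which simply cites Brent's argument and replaces $s = 2N$ by $s = 2N+m$: isolate the constant $B_{2N}$ part (which regenerates $T_N^{(m)}(z)$ exactly), bound $|B_{2N}(\{x\})| \le |B_{2N}|$, estimate $\int_0^\infty |x+z|^{-s}\,dx$ via $|x+z|^2 \ge x^2 + |z|^2$ and the Wallis integral, and finish with Wendel's inequality. You have essentially written out in full the proof that the paper references implicitly.
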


\begin{proof}
The proof boils down to bounding the integral $\int_0^{\infty} |z + x|^{-s} dx$ which
Brent does with $s = 2N$. The proof proceeds identically but with $s = 2N + m$.
\end{proof}

Brent's formula is clearly better than the Stieltjes bound for large $N$ unless $|\operatorname{arg}(z)|$ is very small.
In that case, we may use the even stronger Whittaker and Watson bound~\cite[Section 12.33]{WhittakerWatson1920}
\begin{equation}
|C_{N,0}(z)| \le 1, \quad |\operatorname{arg}(z)| \le \tfrac{1}{4} \pi.
\end{equation}
Another result due to Brent is \cite[Theorem 2]{Brent2018}
\begin{equation}
|C_{N,0}(z)| < 1.113, \quad |z| \ge N, \; \operatorname{Re}(z) \ge 0.
\end{equation}
We have not checked whether these results extend to $m > 0$, however.

For $z$ in the left half-plane, a bound due to Hare~\cite{Hare1997} is useful.
The downside of this bound is that involves $|\operatorname{Im}(z)|$ rather than $|z|$,
and therefore does not improve when shifting the argument.
The main point is that if $|\operatorname{Im}(z)|$ is already large,
we can use the Stirling series directly in the left half-plane
and get reasonable bounds without first applying the reflection formula.
We state Hare's bound in a simplified form and generalized to
allow $m \ge 0$.

\begin{theorem}
For $m \ge 0$, $N \ge 1$, and complex $z = x + yi$ with $|y| \ne 0$,
\begin{equation}
|R_N^{(m)}(z)| \le 4 \sqrt{\pi (N + \tfrac{1}{2} m)} \; |T_N^{(m)}(y)|
\label{eq:stirbound3}
\end{equation}
or equivalently,
\begin{equation}
|C_{N,m}(z)| \le 4 \sqrt{\pi (N + \tfrac{1}{2} m)} \left| \frac{z}{y} \right|^{2N+m-1}. 
\end{equation}
\end{theorem}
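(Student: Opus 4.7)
The plan is to estimate the integral representation of $R_N^{(m)}(z)$ directly, bounding the oscillatory Bernoulli factor by a simple constant multiple of $|B_{2N}|$ and evaluating the resulting tail integral explicitly in terms of the beta function.

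First I would show the pointwise bound $|B_{2N} - B_{2N}(\{x\})| \le 2|B_{2N}|$. For even $n \ge 2$, the critical points of $B_n$ on $[0,1]$ lie at $0$, $\tfrac{1}{2}$ and $1$, with $B_n(0) = B_n(1) = B_n$ and $B_n(\tfrac{1}{2}) = (2^{1-n}-1) B_n$, which has sign opposite to $B_n$; thus $B_n(\{x\}) \in [(2^{1-n}-1)B_n,\, B_n]$ (as a closed interval of real values, after orienting by the sign of $B_n$), and so $|B_n - B_n(\{x\})| \le (2 - 2^{1-n})|B_n| \le 2 |B_n|$.

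Next I would estimate the tail integral. Writing $z = x + yi$ and $t$ for the integration variable, $|z+t|^2 = (x+t)^2 + y^2$, so the substitution $u = (t+x)/|y|$ gives
\begin{equation*}
\int_0^{\infty} \frac{dt}{|z+t|^{2N+m}} \; \le \; |y|^{\,1-(2N+m)} \int_{-\infty}^{\infty} \frac{du}{(1+u^2)^{(2N+m)/2}} \; = \; \frac{\sqrt{\pi}\,\Gamma\!\left(\tfrac{2N+m-1}{2}\right)}{|y|^{\,2N+m-1}\,\Gamma\!\left(\tfrac{2N+m}{2}\right)},
\end{equation*}
where the extension of the integration range from $[-x/|y|,\infty)$ to $\RR$ only enlarges the bound. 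Applying Wendel's inequality $\Gamma(s+\tfrac{1}{2})/\Gamma(s) \ge \sqrt{s}$ with $s = (2N+m-1)/2$ gives $\Gamma((2N+m-1)/2)/\Gamma((2N+m)/2) \le \sqrt{2/(2N+m-1)}$.

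Finally I would combine these estimates. Pulling $(2N)_m/(2N)$ out of the integral for $R_N^{(m)}(z)$ and dividing by
\begin{equation*}
|T_N^{(m)}(y)| = \frac{(2N-1)_m\, |B_{2N}|}{2N(2N-1)\,|y|^{\,2N+m-1}},
\end{equation*}
I use $(2N)_m / (2N-1)_m = (2N+m-1)/(2N-1)$ to collapse the ratio to
\begin{equation*}
\frac{|R_N^{(m)}(z)|}{|T_N^{(m)}(y)|} \; \le \; 2(2N+m-1) \sqrt{\frac{2\pi}{2N+m-1}} \; = \; 4 \sqrt{\pi\left(N + \tfrac{m}{2} - \tfrac{1}{2}\right)} \; \le \; 4\sqrt{\pi\left(N + \tfrac{m}{2}\right)},
\end{equation*}
yielding both forms of the stated bound. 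The main nuisance is keeping the factors of $2$ straight while reducing $(2N)_m/(2N-1)_m$ and Wendel's inequality down to a clean final constant; there is no deeper obstruction, since the Bernoulli-polynomial bound and the beta-integral identity together already capture everything needed.
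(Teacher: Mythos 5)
Your strategy is a self-contained version of what the paper does by citation: bound the Bernoulli factor by $2|B_{2N}|$, evaluate the tail integral as a beta function (this is essentially Hare's step), collapse the resulting gamma ratio to a square root (Brent's simplification), and extend to $m \ge 0$ by replacing $2N$ with $2N+m$. The bound $|B_{2N}-B_{2N}(\{x\})|\le 2|B_{2N}|$ and the identity $\int_{\RR}(1+u^2)^{-s}\,du=\sqrt{\pi}\,\Gamma(s-\tfrac12)/\Gamma(s)$ are both correct, and the reduction $(2N)_m/(2N-1)_m=(2N+m-1)/(2N-1)$ is right.

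However, there is a genuine error where you invoke Wendel's inequality: the inequality $\Gamma(s+\tfrac12)/\Gamma(s)\ge\sqrt{s}$ is \emph{false} (at $s=1$ the left side is $\sqrt{\pi}/2\approx 0.886<1$). Wendel's double inequality with $s=\tfrac12$ reads
\begin{equation*}
\frac{a}{\sqrt{a+\tfrac12}}\;\le\;\frac{\Gamma(a+\tfrac12)}{\Gamma(a)}\;\le\;\sqrt{a},
\end{equation*}
so taking reciprocals of your claim would give $\Gamma(a)/\Gamma(a+\tfrac12)\le 1/\sqrt{a}$, whereas in truth $\Gamma(a)/\Gamma(a+\tfrac12)\ge 1/\sqrt{a}$; the intermediate bound $4\sqrt{\pi(N+\tfrac{m}{2}-\tfrac12)}$ in your final display is therefore not established. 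The fix is to use the \emph{lower} half of Wendel's inequality, which gives the genuine upper bound $\Gamma(a)/\Gamma(a+\tfrac12)\le \sqrt{a+\tfrac12}/a$. With $a=(2N+m-1)/2$, this is $2\sqrt{(2N+m)/2}/(2N+m-1)$, and substituting into $2(2N+m-1)\sqrt{\pi}\cdot\Gamma(a)/\Gamma(a+\tfrac12)$ gives exactly $4\sqrt{\pi(N+\tfrac{m}{2})}$, so the theorem follows and the final relaxation $\le 4\sqrt{\pi(N+\tfrac{m}{2})}$ in your chain becomes unnecessary. Everything else in the proof is sound.
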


\begin{proof}
Hare obtains $|R_N(z)| \le \sqrt{\pi} \, \Gamma(N-\tfrac{1}{2}) |B_{2N}| / (\Gamma(N+1) |y|^{2N-1} )$; the simplified bound above
follows by the same calculation as in~\cite[Corollary 1]{Brent2018}
and the extension to $m \ge 0$ proceeds as in our
generalization of Brent's formula.
\end{proof}

For error bounds for the Stirling series in exponential
form, see Boyd~\cite{Boyd1994} and Nemes~\cite{Nemes2015}.

\subsection{Parameter selection and convergence analysis}

We assume for simplicity that $m = 0$
and that $\operatorname{Re}(z) \ge 0$,
in which case
$|R_N(z)|$ is essentially of the same order of magnitude as the first omitted
term in \eqref{eq:stirlingseries}.

We have $|R_N(z)| \to 0$ as $|z| \to \infty$ uniformly as long as 
$|\operatorname{arg}(z)| \le \pi - \delta$ for any $\delta > 0$,
but the Stirling series is divergent: $|R_N(z)| \to \infty$ as $N \to \infty$
for any fixed~$z$.
The basic strategy to compute the gamma function is therefore to write
\begin{equation}
\Gamma(z) = \frac{\Gamma(z+r)}{z (z+1) \cdots (z+r-1)}
\end{equation}
where $r \ge 0$ is chosen so that $|R_N(z+r)| < 2^{-p}$.

To study the asymptotic dependence between $p$, $N$ and $|z|$ or $|z+r|$,
we can estimate $|R_N(z)| \approx (2N)! (2 \pi)^{-2N} |z|^{1-2N} \approx (2N)^{2N} e^{-2N} (2 \pi)^{-2N} |z|^{-2N}$.
Equating this with $2^{-p}$ and solving for $N$ gives
\begin{equation}
N \approx -\frac{\log(2)}{2 W_{-1}\!\left(-\frac{\log(2)}{2 \pi e \beta}\right)} \, p, \quad \beta = \frac{|z|}{p}
\end{equation}
where $W_k(x)$ denotes the Lambert $W$-function~\cite{Corless1996}.
The $k = -1$ branch is used since we want the smallest real solution for $N$,
where $|R_N(z)|$ is decreasing (the principal $k = 0$ branch gives
the larger solution where $|R_N(z)|$ is increasing).
Since $W_{-1}(x)$ is real-valued only for $-e^{-1} \le x < 1$, we need
$|z| \ge p \log(2) / (2 \pi) \approx  0.110318 p$ to have a solution for $N$, i.e.\ to be able to
achieve $p$ bits of accuracy.\footnote{We have only solved for an approximation of the error bound,
but we do get the correct asymptotics $|z| \ge (\log(2) / (2 \pi) + o(1)) p$, which can be justified rigorously.}

To determine the shift $\Gamma(z) \to \Gamma(z+r)$ and number of terms $N$
to use for a given $z$ and a target precision $p$,
it is convenient choose $r$ using a condition
of the form
$|z+r| \ge \beta p$, where $\beta$ is a tuning parameter with $\beta > \log(2) / (2 \pi) \approx  0.110318$.
Once we have computed such an $r$, it is easy to find the minimal corresponding $N$ by a linear search.\footnote{In an arbitrary-precision implementation, this search can be done using machine
arithmetic, using logarithms or exponents of quantities where necessary to avoid underflow and overflow.
Bounds for $\log_2 |B_{2n}|$ can be stored in a lookup table for small $n$ and can be computed using asymptotic
estimates for large $n$ (see also the remarks in Section~\ref{sect:genbern}).
In fixed precision; suitable $r$ and $N$ can simply
be tabulated in advance. For example, $r = N = 8$
gives $\log \Gamma(z)$ with relative error smaller than $2^{-53}$
assuming that $\operatorname{Re}(z) > 0$.}

In the worst case ($z \approx 0$),
we need to compute a rising factorial of length $r \approx \beta p$.
The total number of terms or factors in the rising factorial and in the
Stirling series is then $\beta p + N$; this estimate of the work
cost is minimized when $\beta \approx 0.155665$, where $\beta p + N \approx 0.322797 p$.
In practice, the optimal~$\beta$ may be slightly larger
since the operations in the rising factorial are cheaper,
and a good $\beta$ should be determined empirically for a given implementation.
For example, Arb uses $\beta$ between $0.17$ and $0.24$ (varying slowly
with the precision); an older version used $\beta = 0.27$.

In the favorable case where $z$ is already large, the Stirling series
performs better. For example, if $|z| \approx p$, equivalent
to $\beta = 1$, we need no argument reduction
and only $N \approx 0.0728 p$ terms in the series.

Let us now consider numerical issues,
explaining the choice of working precision in Algorithm~\ref{alg:stirling}.
Computing $\Gamma(z)$ or equivalently $\Gamma(z+r)$ with relative error $2^{-p}$ entails evaluating $\log \Gamma(z+r)$ with \emph{absolute} error about $2^{-p}$.
We therefore need about $\log_2 |\log \Gamma(z+r)| \approx \log_2((|z|+r) \log(|z|+r))$ extra bits of precision for
the leading terms, while it is sufficient to use $p$ bits for the main sum.

To compute $\log \Gamma(z)$ with relative error $2^{-p}$, we do not need extra precision due to $z$ being large,
but we may need extra precision to compensate for cancellation in the argument reduction.
Generically this cancellation is of order
\begin{equation}
\log_2(|\log \Gamma(z+r) - \log \Gamma(z)|) = \log_2\!\left(\left|\int_{z}^{z+r} \!\!\psi(t) \, dt\right|\right) \approx \log_2(r \log(|z|+r)|)
\end{equation}
bits, except at the simple zeros at $z = 1$ and $z = 2$
where we have $\log_2(|z-1|)$ or $\log_2(|z-2|)$ extra bits of cancellation (accounted for at the start of Algorithm~\ref{alg:stirling}).

Similar estimates can be made for the functions $\psi^{(m)}(z)$.

\subsection{Generating Bernoulli numbers}

\label{sect:genbern}

Brent and Harvey \cite{Brent2013} discuss
several ways to compute the first $n$ Bernoulli numbers,
including simple recursive methods with $\bigOtilde(n^3)$
bit complexity
and asymptotically fast $\bigOtilde(n^2)$ algorithms
based on expanding a generating function such as
$\sum_{n=0}^{\infty} B_n x^n / n! = x / (e^x - 1)$
using fast power series or integer arithmetic.

In practice, a version of the classical
\emph{zeta function algorithm} \cite{Chowla1972,fillebrown1992faster,Harvey2010} performs even better.
The idea behind the zeta function algorithm is as follows:
if we approximate the Riemann zeta function
$\zeta(2n)$ to about $\log_2 |B_{2n}|$ bits of precision
(using either the Dirichlet series
$\zeta(s) = \sum_{k=1}^{\infty} k^{-s}$ or the Euler product
$\zeta(s) = \prod_p (1 - p^{-s})^{-1}$),
we can recover the numerator and denominator of
$B_{2n} = (-1)^{n+1}2(2n)! \zeta(2n) / (2\pi)^{2n}$
using the Von Staudt-Clausen theorem
\begin{equation}
\Big( B_{2n} + \sum_{q \in Q} \frac{1}{q} \Big) \in \ZZ, \quad Q = \{ q : q \text{ is prime}, \; (q-1) \, | \, 2n \}.
\end{equation}

Algorithm~\ref{alg:bernoulli} is a version of the
zeta algorithm adapted for multi-evaluation of Bernoulli numbers:
if we generate $B_n, B_{n-2}, \ldots$ in reverse order,
we can recycle the powers in the Dirichlet series
so that each new term only requires a cheap scalar multiplication (which GMP allows performing in-place).
This algorithm is due to Bloemen \cite{Bloemen2009} (with minor differences
to the present pseudocode).

\begin{algorithm}
\caption{Multi-evaluation of Bernoulli numbers}\label{alg:bernoulli}
\small
\begin{algorithmic}[1]
\Require $n \in \NN$, $n \ge 2$ even
\Ensure $B_n, B_{n-2}, \ldots, B_2$  (generated one by one) as exact, reduced fractions
\State $p \gets \lceil [((n+1) \log_2(n) - n \log_2(2 \pi e))] + 10 + 3 \log_2(n) \rceil$ \label{alg:line:compp}
\State $r \gets \lceil n / (2 \pi e) \rceil$, rounded up to an odd integer   \label{alg:line:compr}
\State $u \gets (2 \pi)^2$, $v \gets 2 (n!) / (2\pi)^n$ \Comment{Prefactors as $p$-bit floating-point numbers}
\State $t_k \gets \lfloor 2^p / k^n \rfloor, k = 3, 5, \ldots, r$ \Comment{$k^{-n}$ as $p$-bit fixed-point numbers (integers)}
\State Precompute primes $q \le n + 1$
\While{$n \ge 2$}
  \State $s \gets t_r + t_{r-2} + \ldots + t_5 + t_3$ \Comment{Exact sum, reverse order for efficiency}
  \State $s' \gets s 2^{-p}$ \Comment{Sum as a floating-point number}
  \State $B \gets  v + v (s' + (1 + s') / (2^n - 1))$ \Comment Approximate $|B_n|$ as a $p$-bit floating-point number (the inner operations before the last addition can optionally be done with only $p - n$ bits of precision)
  \State $a/b \gets \sum_{(q-1)|n} 1/q$ \Comment{Von Staudt-Clausen fraction}
  \State $N \gets \lfloor (-1)^{\lfloor n / 2 \rfloor + 1} B + a/b + 1/2 \rfloor$ \Comment{$B_n + a/b$ as an exact integer}
  \State \textbf{yield} $(N b - a) / b$ \Comment{Output $B_n$ (with reduced denominator $b$)}
  \State $n \gets n - 2$
  \State $v \gets u v / ((n+1)(n+2))$
  \State $t_k \gets k^2 t_k, \; k = 3, 5, \ldots, r$ \Comment{Exact multiplications}
  \If{$n \equiv 0 \pmod {64}$, $n \ne 0$} \Comment{Update precision, shorten sum}
    \State Compute new $p', r'$ as on lines \ref{alg:line:compp} and \ref{alg:line:compr}
    \State $t_k \gets \lfloor t_k / 2^{p - p'} \rfloor$, $k = 3, 5, \ldots, r'$, and discard the $t_k$ with $k > r'$
    \State Round $u, v$ to precision $p'$
    \State $p \gets p', \; r \gets r'$
  \EndIf
\EndWhile
\end{algorithmic}
\end{algorithm}

The zeta function algorithm has $\bigOtilde(n^3)$ bit complexity but
runs an order of magnitude faster than a well-optimized implementation of
the $\bigOtilde(n^2)$ power series method in FLINT~\cite{flint2021} for reasonable~$n$,
being 10 times faster
for $n = 10^4$ (1.2 seconds versus 12 seconds)
and 5 times faster for $n = 10^5$ (10 versus 50 minutes).

In an implementation of $\Gamma(z)$ that may be called
with different $p$, it is useful to maintain
a cache of Bernoulli number that gets extended gradually.
Algorithm~\ref{alg:bernoulli} generates
Bernoulli numbers in reverse order, but 
the method also works well for producing an infinite stream
in the forward order: we simply call
Algorithm~\ref{alg:bernoulli} with early termination
to create batches of, say, 100 new
entries each time. The initialization
overhead for each such batch is negligible.

\subsubsection{Detailed analysis}

The formula $((n+1) \log_2(n) - n \log_2(2 \pi e))$
is an accurate upper bound for $\log_2 |B_n|$ for even $n \ge 26$.
We need a few more bits to account for rounding errors and
the truncation error of the Dirichlet series.

We evaluate the Dirichlet series using fixed-point arithmetic for efficiency reasons;
other operations should use floating-point arithmetic (or floating-point ball arithmetic).
With ball arithmetic, it is easy to track the error terms
at runtime to certify that Algorithm~\ref{alg:bernoulli} outputs
the correct numerator.\footnote{We expect that the algorithm is correct with the stated
number of guard bits, making ball arithmetic unnecessary, but we do not attempt a proof here.}

The Dirichlet series truncation error is
bounded by $\sum_{k=r+2}^{\infty} k^{-n} \le (r+2)^{-n} + ((n-1) (r+2))^{1-n}$.
Comparing this with $2^{-p}$, we obtain $r \approx (2 \pi e)^{-1} n \approx 0.0585 n$.
Factoring out even powers leaves only $(4 \pi e)^{-1} n \approx 0.0293 n$ terms.

The remarkable efficiency of Bloemen's algorithm is explained by
three facts: the small constant factor $(4 \pi e)^{-1}$,
most terms being much smaller than $2^p$,
and the cheap scalar updates.
The number of terms can be reduced by a factor $\log(n)$ by factoring
out all composite indices from the Dirichlet series; that is,
using the Euler product (Fillebrown's algorithm~\cite{fillebrown1992faster});
this is more efficient for computing isolated Bernoulli numbers,
but the Dirichlet series is superior for multi-evaluation
precisely because it leaves only $O(1)$ nonscalar operations for each $B_n$.

\subsection{Evaluating the main sum}

Smith~\cite{Smith2001} has pointed out three improvements
over using Horner's rule to evaluate the sum
\begin{equation}
\sum_{n=1}^{N-1} T_n(z) = \sum_{n=1}^{N-1} \frac{B_{2n}}{2n(2n-1) z^{2n-1}}.
\end{equation}
First, since most terms make a small contribution to the final result,
the working precision should change gradually with the terms.
Second, rectangular splitting (or a transposed version, which
Smith calls \emph{concurrent summation)} should be used to take advantage
of the fact that the initial Bernoulli numbers are rational numbers
with small numerator and denominator.
Third, since the Bernoulli numbers near the tail have numerators that are
larger than the needed precision, it is sufficient to
approximate them numerically.

We can improve things further by breaking the sum into two parts
\begin{equation}
\sum_{n=1}^{N-1} T_n(z) = \sum_{n=1}^{M-1} T_n(z) + \sum_{n=M}^{N-1} t_n \zeta(2n), \quad t_n = \frac{(-1)^{n+1} 2 (2n-2)!}{(2 \pi)^{2n} z^{2n-1}}.
\label{eq:stirlingsplit}
\end{equation}

The first part will only involve smaller Bernoulli numbers,
for which rectangular splitting yields the biggest improvement,
and the Riemann zeta function values in the second part can be computed numerically.
However, we we will not compute the zeta values $\zeta(2n)$ explicitly;
instead, we expand them in terms of their Dirichlet series and change the order of summation to take
advantage of the fact that the factors $t_n$ as well as the
Dirichlet series terms $1/k^{2n}$ for consecutive $n$ form hypergeometric sequences.
This re-expansion increases the overall number of terms, but the new terms
are cheaper to evaluate since they are purely hypergeometric,
and we do not need to generate the numerical values of the corresponding Bernoulli numbers or zeta values at all
(the zeta values appear implicitly, baked into the hypergeometric
sums, with little evaluation overhead).

\begin{theorem}
\label{thm:stirlexpand}
Let $1 \le M \le N$, let $\{M_k\}_{k=1}^{K}$ be a list of integers with $N = M_1 \ge M_2 \ge \ldots \ge M_K = M$,
and define $w = 1/z^2$ and $u = -1/(2 \pi z)^2$.
Then the main sum in the Stirling series is given by
\begin{equation}
\sum_{n=1}^{N-1} T_n(z) \; = \; \frac{1}{z} \sum_{n=1}^{M-1} \frac{B_{2n}}{2n(2n-1)} w^n + S_2 + \varepsilon
\end{equation}
where
\begin{equation}
S_2 = - 2 (2M-2)! z u^M \sum_{k=1}^{K-1} \frac{1}{k^{2M}} \sum_{n=M}^{M_k - 1} (2M-1)_{2n-2M} \left(\frac{u}{k^2}\right)^{n-M},
\label{eq:s2rewritten}
\end{equation}
and
\begin{equation}
|\varepsilon| \le \sum_{k=2}^{K-1} \left(\max_{M_k \le n \le N - 1} |t_n| \right) \frac{N-M_k}{k^{2 M_k}} + \left(\max_{M \le n \le N - 1} |t_n| \right) (N-M) \zeta(2M,K).
\label{eq:epsbound}
\end{equation}
with $t_n$ defined as in \eqref{eq:stirlingsplit}.
\end{theorem}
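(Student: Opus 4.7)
The plan is to reduce the identity to a direct interchange of summations, backed by two elementary truncation bounds. The starting point is the well-known identity
\begin{equation*}
B_{2n} = \frac{(-1)^{n+1} 2 (2n)!}{(2\pi)^{2n}} \zeta(2n),
\end{equation*}
which implies $T_n(z) = t_n \zeta(2n)$ with $t_n$ as in~\eqref{eq:stirlingsplit}. A short calculation rewrites $t_n$ compactly in terms of $u = -1/(2\pi z)^2$, namely $t_n = -2z(2n-2)!\, u^n$. Splitting the sum at index $M$, the first piece is immediate from the definition of $T_n$ and $w=1/z^2$; the task thereby reduces to analyzing $\sum_{n=M}^{N-1} t_n \zeta(2n)$.

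Next, I would expand each $\zeta(2n) = \sum_{k=1}^{\infty} k^{-2n}$ and exchange the order of summation, giving
\begin{equation*}
\sum_{n=M}^{N-1} t_n \zeta(2n) = -2z \sum_{k=1}^{\infty} \sum_{n=M}^{N-1} (2n-2)! \left(\frac{u}{k^2}\right)^{\!n}.
\end{equation*}
To recover the statement, I would truncate this double sum by cutting the outer sum at $k = K-1$ and, for each retained $k$, shortening the inner $n$-range from $[M, N-1]$ to $[M, M_k - 1]$ (the $k=1$ term is unaffected because $M_1 = N$ makes the truncation vacuous). The factorial identity $(2n-2)! = (2M-2)!\,(2M-1)_{2n-2M}$ and the separation $u^n/k^{2n} = (u^M/k^{2M})(u/k^2)^{n-M}$ then pull a common prefactor $-2(2M-2)!\,z\,u^M$ outside, producing exactly $S_2$ as written in~\eqref{eq:s2rewritten}.

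It remains to control the error $\varepsilon$, which consists of two pieces: the ``vertical'' truncations for $k=2,\ldots,K-1$ (the tail $n \in [M_k, N-1]$) and the ``horizontal'' truncation for $k \ge K$ (the full range $n \in [M, N-1]$). For each piece I would use the same observation, namely $2|z| (2n-2)! |u|^n = |t_n|$, so that the inner sums reduce to $\sum_n |t_n|/k^{2n}$. Pulling out $\max |t_n|$ over the relevant $n$-range, the remaining geometric-type sums are bounded by the most elementary estimates: $\sum_{n=M_k}^{N-1} k^{-2n} \le (N-M_k)\, k^{-2M_k}$ for fixed $k$, and $\sum_{k=K}^{\infty} k^{-2n} \le \zeta(2M,K)$ whenever $n \ge M$. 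Assembling these gives~\eqref{eq:epsbound}.

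The only genuine subtlety is the combinatorial bookkeeping; there is no delicate analytic estimate, and the factorial/Pochhammer regrouping together with the two crude tail bounds is the entirety of the work. I would make sure the $k=1$ term is treated correctly (no truncation error since $M_1=N$), which is the reason the first sum in~\eqref{eq:epsbound} starts at $k=2$.
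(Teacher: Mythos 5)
Your proposal is correct and follows essentially the same route as the paper's proof: express $T_n(z) = t_n\zeta(2n)$, expand $\zeta(2n)$ as a Dirichlet series, interchange the order of summation, and split the resulting double sum over $(k,n)$ into the retained block (with $n$ up to $M_k-1$ for $k < K$, yielding $S_2$ after the factorial/Pochhammer regrouping) and two discarded pieces bounded exactly as you describe. The paper states this in one line and leaves the algebra to the reader; your account merely spells out the bookkeeping (the identity $(2n-2)! = (2M-2)!\,(2M-1)_{2n-2M}$, the factoring of $u^n/k^{2n}$, and the observation that $M_1=N$ makes the $k=1$ truncation vacuous), all of which is correct.
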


\begin{proof}
Expanding the Riemann zeta function in \eqref{eq:stirlingsplit} using its Dirichlet series
gives
\begin{equation*}
\sum_{n=M}^{N-1} t_n \zeta(2n) =
\sum_{k=1}^{\infty} \sum_{n=M}^{N-1} \frac{t_n}{k^{2n}} =
\sum_{k=1}^{K-1} \sum_{n=M}^{M_k - 1} \frac{t_n}{k^{2n}} +
\sum_{k=2}^{K-1} \sum_{n=M_k}^{N - 1} \frac{t_n}{k^{2n}} +
\sum_{k=K}^{\infty} \sum_{n=M}^{N-1} \frac{t_n}{k^{2n}}.
\end{equation*}
Rewriting the first nested sum on the right-hand side
gives $S_2$, and bounding the last two nested sums gives
the bound for $|\varepsilon|$.
\end{proof}

We have made use of the Hurwitz zeta function
$\zeta(s,a) = \sum_{k=0}^{\infty} (k+a)^{-s}$.
We mention that \eqref{eq:epsbound} can be simplified
to the slightly weaker bound
\begin{equation}
|\varepsilon| \le \sum_{k=2}^{K} \left(\max_{M_k \le n \le N - 1} |t_n| \right) (N-M_k) \zeta(2 M_k, k).
\end{equation}


We give a complete implementation
of Theorem~\ref{thm:stirlexpand} in Algorithm~\ref{alg:stirlingsum},
which is an almost verbatim transcription of the method
as it is implemented in Arb.\footnote{A small optimization has been omitted from the pseudocode: in the leading sum, we can group the denominators of a few consecutive Bernoulli numbers to avoid some divisions.}

\begin{algorithm}
\caption{Main summation in the Stirling series}\label{alg:stirlingsum}
\small
\begin{algorithmic}[1]
\Require $z \in \CC \setminus \{ 0 \}$, $N \ge 1$, precision $p \ge 1$
\Ensure $\sum_{n=1}^{N-1} B_{2n} / (2n(2n-1) z^{2n-1})$ 
\State \textbf{Part 1: initialization} (Operations can use machine precision.)
\State Let $\{ b_n \}_{n=1}^{N-1}$ be a nonincreasing sequence of bounds for $\log_2(|B_{2n} / (2n(2n-1) z^{2n-1})|)$
\State $K \gets 2 \textbf{ if } p \le 1024 \textbf{ else } \min(4 + \lfloor 0.1 \sqrt{\max(p-4096, 0)} \rfloor, 100)$ \Comment{Tuning parameter}
\State Compute $\{ M_k \}_{k=1}^K$:
\State $M_1 \gets N$.
\For{$2 \le k \le K$}
\State Set $M_k \gets N$ and then decrement $M_k$ as long as the error stays small:
while $M_k > 2 \text{ and } b_{M_k - 1} - 2 (M_k - 1) \log_2(k) + \log_2(N - (M_k - 1)) < -p,$ do $M_k \gets M_k - 1$.
\EndFor
\For{$k \gets 2, 3, \ldots, K$}
\State $M_k \gets \min(M_k, M_{k-1})$ \Comment{Make nonincreasing}
\EndFor
\While{$K \ge 2$ and $M_K = M_{K-1}$}
\State $K \gets K - 1$  \Comment{Trim unneeded sums}
\EndWhile
\State $M \gets M_K$
\State $\varepsilon \gets (N-M) \zeta(2M, K) 2^{b_M} + \sum_{k=1}^{K-1} (N-M_k) 2^{b_{M_k}} k^{-2 M_k}$ \Comment{Error bound}
\State $m_1 \gets \max(1, \lfloor \sqrt{N-M} \rfloor), \; m_2 \gets \max(1, \lfloor\sqrt{M} \rfloor)$  \Comment{Tuning parameters}
\State \textbf{Part 2: trailing sum} (Using precision $p' = p + b_M$ except where otherwise noted.)
\State $u \gets - (2 \pi z)^{-2}$, compute table of $u^k$, $0 \le k \le m_1$
\State $S_3 \gets 0$
\For{$k_{odd} \gets 1; k_{odd} < K; k_{odd} \gets k_{odd} + 2$}
    \For {$k \gets k_{odd}; k < K; k \gets 2k$}
        \State Compute $v_k \gets u^k / k^{2j}$ for $0 \le j \le \min(M_k - M - 1, m_1)$. This can use the $u^k$ table when $k$ is odd; for successive even $k$, we only need to scale by powers of two.
        \State $S_4 \gets 0$
        \For {$n \gets M_k - 1; n \ge M; n \gets n - 1$}
            \State Do the following at precision $p + b_n$:
            \State \hskip1.0em $S_4 \gets 2n (2n-1) S_4 + v_{(n - M) \bmod m_1}$
            \State \hskip1.0em If $n - M \ne 0$ and $(n - M) \bmod m_1 = 0$, then $S_4 \gets v_{m_1} S_4$
        \EndFor
        \State $S_3 \gets S_3 + S_4 / k^{2M}$
    \EndFor
\EndFor
\State $S_3 \gets -2 (2M\!-\!2)! u^M z \, S_3$
\State $S_3 \gets S_3 + [\pm \, \varepsilon]$ ($z$ real) or $[\pm \, \varepsilon] + [\pm \, \varepsilon] i$ ($z$ complex) \Comment{Add error bound}
\State \textbf{Part 3: leading sum} (Using precision $p$ except where otherwise noted.)
\State $w \gets 1/z^2$, and compute table of $w^k$, $0 \le k \le m_2$
\State Compute (or read from cache) the Bernoulli numbers $B_0, B_2, \ldots, B_{2M-2}$
\State $S_2 \gets 0$
\For{$n \gets M - 1; n \ge 1; n \gets n - 1$}
    \State $P, Q \gets B_{2n}$ as an exact fraction
    \State Do the following at precision $p + b_n$:
    \State \hskip1.0em $S_2 \gets S_2 + P w^{(n-1) \bmod m_2} / (Q (2n (2n-1)))$
    \State \hskip1.0em If $n - 1 \ne 0$ and $(n - 1) \bmod m_1 = 0$, then $S_2 \gets w^{m_2} S_2$
\EndFor
\State \Return $S_2 / z + S_3$
\end{algorithmic}
\end{algorithm}

We start by choosing $K$.
Experiments with precision up to $p = 10^6$
suggest that it is optimal to choose $K$ with
$2 \le K \le 100$ and growing proportionally
to $\sqrt{p}$.

Given this tuning parameter, we can do a linear search
to find nearly minimal $M_1, M_2, \ldots, M_K$ such that
\eqref{eq:epsbound} is of order $2^{-p}$.\footnote{To bound the right-hand side of \eqref{eq:epsbound} numerically, we can use the inequality
$\zeta(s,a) \le a^{-s} + ((s-1) a^{s-1})^{-1}$, valid
for $s > 1$, $a > 0$.}
If necessary,
we adjust the sequence $M_k$ to be nonincreasing.
The remaining steps of the algorithm boil down to
performing rectangular splitting evaluation of sums, recycling terms,
and choosing the precision optimally to match the magnitudes of terms.

\subsubsection{Benchmark results}

As shown in Table~\ref{table:stirlingalgbench},
Algorithm~\ref{alg:stirlingsum} is roughly 1.5 times faster
than Horner's rule at high precision.\footnote{The speedup over Horner's rule is even bigger (more than a factor 2) when $z$ is complex. This additional speedup is simply due to the suboptimality
of Horner's rule for evaluating a polynomial with real
coefficients at a complex argument.}
The number of Bernoulli numbers is simultaneously
reduced by more than a factor 2,
reducing the precomputation time by almost a factor 10 (the memory usage
for storing Bernoulli numbers is also reduced by more than a factor 4).
We return to benchmarking the gamma function
computation as a whole in Section~\ref{sect:implresults}.

\begin{table}[h!]
\setlength{\tabcolsep}{3pt}
\renewcommand{\arraystretch}{1.1}
\centering
\caption{Time to evaluate the sum in the Stirling series at $d$ digits of precision
($p = d \log_2(10)$) with $N$ chosen to give $|R_N(z)| < 2^{-p}$, here with
$|z| / p \approx 0.27$.
\emph{Horner} is using Horner's rule with the only optimization of
varying the precision to match the magnitudes of the terms.
Algorithm~\ref{alg:stirlingsum} uses the internal parameters $K$ and $M$.
\emph{Eval time} is the time in seconds assuming that Bernoulli numbers are cached,
while \emph{Bernoulli} is the time to compute the required ($N$ or $M$) Bernoulli numbers.}
\label{table:stirlingalgbench}
\small
\begin{tabular}{l l l | l l | l l l l}
    &     &  &  \multicolumn{2}{|c|}{Horner} & \multicolumn{4}{c}{Algorithm~\ref{alg:stirlingsum}} \\
$d$ & $z$ & $N$ & Eval time & Bernoulli & $K$ & $M$ & Eval time & Bernoulli \\
\hline
$100$ & $89.1$ & $41$ & $0.000015$ & - & 2 & 30 & 0.000013 & - \\
$1000$ & $896.1$ & $391$ & $0.00045$ & 0.0032 & 4 & 238 & 0.00032 & 0.0015 \\
$10000$ & $8969.1$ & $3864$ & $0.071$ & 0.64 & 21 & 1678 & 0.047 & 0.080 \\
$100000$ & $89691.1$ & $38664$ & $17.3$ & 264.1 & 61 & 14219 & 10.9 & 19.5 \\
\end{tabular}
\end{table}

\subsubsection{Derivatives}

Algorithm~\ref{alg:stirlingsum} can be differentiated to compute $\psi(z)$,
$\psi'(z)$, etc.
To compute several derivatives at once, it is easier and
more efficient to generate
the terms $T_n^{(0)}(z)$ one by one
and then compute $z^{m+1} T_n^{(m+1)}(z)$ from $z^m T_n^{(m)}(z)$ using recurrence relations.
The complexity of computing $n$ derivatives is $\bigOtilde(p^2 n)$.

To compute a large number of derivatives to high precision,
the sum in the Stirling series should be implemented
using binary splitting for power series~\cite[Algorithm 4.6.1]{Johansson2014thesis}.
When computing $n$ derivatives to precision $p$ with $n = O(p)$,
this achieves quasi-optimal bit complexity $\bigOtilde(p^2) = \bigOtilde(n^2)$.


\section{Other global methods}

\label{sect:global}


The Stirling series
uses an asymptotic
expansion at $z = +\infty$ to correct the error in \emph{Stirling's formula}
\begin{equation}
z! \; \approx \; \sqrt{2 \pi} z^{z+1/2} e^{-z}.
\end{equation}

Equivalently, the Stirling series is the asymptotic power series expansion
of $\Gamma^{*}(z) = (2 \pi)^{-1/2} e^z {z}^{1/2-z} \Gamma(z) \approx 1$ (in exponential form)
or
$\log \Gamma^{*}(z) = \log \Gamma(z) - (z-\tfrac{1}{2}) \log(z) + z - \tfrac{1}{2} \log(2\pi) = R_1(z) \approx 1/(12 z)$ (in logarithmic form).
We may consider alternative methods to compute these functions.

\subsection{The formulas of Lanczos and Spouge}

We can write the combination of Stirling's formula with an $r$-fold shift as
\begin{equation}
\Gamma(z) \, \sim \, \frac{\sqrt{2 \pi} (z+r)^{z+r-\tfrac{1}{2}} e^{-z-r}}{z (z+1) \cdots (z+r-1)}
\label{eq:stirlcombined}
\end{equation}

In effect, this corrects for the
contribution of the first $r$ poles of the gamma function,
thus providing a good approximation for $\operatorname{Re}(z) > -r$,
whereas the normal Stirling formula ($r = 0$) only accounts for the
essential singularity at infinity.

The formulas of Lanczos and Spouge both have the form~\cite{Spouge1994,laurie2005,causley2021gamma}
\begin{equation}
\Gamma(z+1) \approx (z+r)^{z+1/2} e^{-z-r} \left[ \sqrt{2 \pi} + \sum_{n=1}^{N} \frac{c_n}{z+n} \right]
\label{eq:spouge}
\end{equation}
where $r$ is a free real parameter and
$c_n$ are some constants that depend on $r$.
We note that \eqref{eq:spouge} has similar structure to a partial fraction expansion of~\eqref{eq:stirlcombined}.

In Spouge's formula, $N = \lceil r \rceil - 1$ and
the coefficient $c_n$ is
the residue of the function $\Gamma(z+1) (z+r)^{-z-1/2} e^{z+r}$ at $z = -n$,
\begin{equation}
c_n = \frac{(-1)^{n+1} e^{r-n} (r-n)^{n-1/2}}{(n-1)!}, \quad 1 \le n \le N.
\end{equation}

Using the Cauchy integral formula and some lengthy calculations,
Spouge shows that the relative error $\varepsilon$ in the approximation \eqref{eq:spouge} satisfies~\cite[Theorem~1.3.1]{Spouge1994}
\begin{equation}
|\varepsilon| \le \frac{\sqrt{r}}{(2\pi)^{r+1/2}} \frac{1}{\Real(z+r)}, \quad \text{ for } r \ge 3, \quad \Real(z+r) > 0.
\label{eq:spougebound}
\end{equation}

Spouge's formula thus requires $N \approx p \log(2) / \log(2 \pi) \approx 0.377146 p$ terms
uniformly for $p$-bit accuracy,
with a weak improvement for larger $\Real(z)$.

Numerical experiments (see Table~\ref{tab:spougetab}) actually suggest that the
bound \eqref{eq:spougebound} is
conservative and that the true rate of convergence is
significantly better when $z$ is small,
with perhaps $N \approx 0.225 p$ being sufficient.
No proof of this empirical observation is known.\footnote{Other authors have already made this observation, for example~\cite{schmelzer2007computing}. Smith~\cite{smith2006gamma} claims to be able to prove that the error is roughly $2^N / N!$, but this is clearly incorrect.}

\begin{table}
\setlength{\tabcolsep}{3pt}
\renewcommand{\arraystretch}{1.1}
\centering
\caption{Actual relative error and the bound \eqref{eq:spougebound} for Spouge's formula with parameter $r$.}
\label{tab:spougetab}
\small
\begin{tabular}{l l l l l}
 $r$ & Error ($z = \pi$) & Bound ($z = \pi$) & Error ($z = 10^6+\pi$) & Bound ($z = 10^6+\pi$) \\ 
 \hline
 $10$ &  $5.0 \cdot 10^{-14}$ & $1.1 \cdot 10^{-9}$  & $6.3 \cdot 10^{-18}$ & $6.3 \cdot 10^{-14}$ \\
 $100$ &  $5.0 \cdot 10^{-131}$ & $5.9 \cdot 10^{-82}$  & $4.0 \cdot 10^{-97}$ & $6.1 \cdot 10^{-86}$ \\
 $1000$ &  $3.0 \cdot 10^{-1335}$ & $8.3 \cdot 10^{-801}$  & $3.2 \cdot 10^{-913}$ & $8.3 \cdot 10^{-804}$ \\
 $10000$ &  $1.4 \cdot 10^{-13405}$ & $6.3 \cdot 10^{-7985}$  & $4.5 \cdot 10^{-9099}$ & $8.3 \cdot 10^{-7987}$ \\
\end{tabular}
\end{table}

The coefficients $c_n$ corresponding to the Lanczos approximation~\cite{Lanczos1964}
are computed entirely differently, and
we will not reproduce the details here
since the formulas are much more complicated.
The Lanczos approximation appears to be more accurate
than the Spouge approximation with the same parameter~$r$,
but unfortunately, we do not have a formula
for bounding the error rigorously; the error in the Lanczos approximation must be estimated
empirically for a chosen parameter~$r$.
For an in-depth analysis, see Luke~\cite[p.\ 30]{luke1969special} and Pugh~\cite{pugh2004analysis}.

\subsubsection{Stirling versus Spouge}

The Stirling series requires about $0.322 p$ terms in the worst
case (counting both argument reduction and evaluation
of the main sum). The Spouge formula requires about $0.377 p$ terms,
or perhaps $0.225 p$ terms for small~$z$ assuming that a heuristic
error estimate is valid. These figures suggest that Spouge's method
might be competitive. However, there are several disadvantages:
\begin{itemize}
\item Each term in the Spouge sum requires a division,
or two multiplications if we rewrite the sum
as an expanded rational function.
The Stirling series
costs significantly less than one multiplication per term.
If we clear denominators, the denominator in the Spouge sum becomes
a rising factorial which can be evaluated more quickly,
but no such acceleration seems to be possible for the numerator.
\item The Spouge sum involves large alternating
terms, requiring higher intermediate precision.
\item Generating the Spouge coefficients for $p$-bit precision
costs $O(p \log p)$ multiplications. This is favorable compared
to computing Bernoulli numbers for the Stirling series in a naive way,
but it is no longer favorable when the
Stirling series and Bernoulli numbers are implemented efficiently.
\item The Spouge formula requires different sets of
coefficients for different~$p$,
whereas the Stirling series can reuse the same Bernoulli numbers.
\end{itemize}

There seems to be no reason to prefer
the Spouge formula over the Stirling series
unless one is interested specifically
in a compact implementation rather than efficiency (in a high-level language,
Spouge's formula can be implemented in a single line of code).
The Lanczos approximation has essentially the same disadvantages,
but with more complicated coefficients
and without the convenient error bound.

Spouge's formula was used to compute
the gamma function in an earlier version of MPFR~\cite{mpfralg},
which now however uses the Stirling series.
We will present benchmark results
for an Arb implementation below in Section~\ref{sect:implresults}.

The Lanczos approximation was popularized by \emph{Numerical Recipes}~\cite{press1989numerical}
and appears in some library implementations of the
gamma function (for instance in Boost~\cite{boost2021}).
It has to our knowledge never been used
in an arbitrary-precision implementation.
Even in machine precision with precomputed coefficients~$c_n$,
it is dubious whether the Lanczos formula has any advantage
over the Stirling series.
For example, with a parameter optimized for 53-bit accuracy, the Lanczos approximation
requires $N = 13$ terms~\cite{boost2021}; this is comparable to the number of terms (including
argument reduction steps) in the Stirling series in the worst case,
but significantly more expensive considering the use of divisions.

\subsection{Convergent series}

The divergent Stirling series can be replaced by a convergent series
if we allow more general expansions than power series in $z^{-1}$.
There are several formulas of this kind,
including Binet's rising factorial series \cite{binet1839memoire,vanmieghem2021binets} 
\begin{equation}
R_1(z) = \sum_{n=1}^{\infty} \frac{c_n}{(z+1)_{n}}, \quad c_n = \frac{1}{2n} \sum_{k=1}^n \frac{k|s(n, k)|}{(k + 1)(k + 2)}, \quad \Real(z) > 0,
\label{eq:binetseries}
\end{equation}
where $s(n,k)$ is a Stirling number of the first kind.

For large $z$, the terms in \eqref{eq:binetseries}
initially decay like $|z|^{-n}$, but this decay is eventually dominated
by the nearly factorial growth of the coefficients $c_n$, leading to
abysmally slow convergence if we take $N \to \infty$ terms while $z$ is fixed.
If we on the other hand increase $z$ and $N$ simultaneously as in the
implementation of the Stirling series, the convergence is quite rapid.
With $r = \alpha N$ steps of argument reduction
for some tuning parameter $\alpha$,
\begin{equation}
\frac{c_N}{(z+r)_{N}} \approx \frac{N! r!}{(N+r)!} \approx \frac{N^N r^r}{(N+r)^{N+r}} = \frac{1}{\gamma^N}
\end{equation}
where $\gamma = (\alpha+1)^{\alpha+1} / \alpha^\alpha$,
so that we need $N \approx p \log(2) / \log(\gamma)$ terms of the series
for $p$-bit accuracy.
Counting the combined number of terms $N + \alpha N = (1+\alpha) N$ in the series evaluation and the argument reduction,
the cost is minimized by $\alpha = 1$,
where we need $2 p \log(2)/\log(4) = p$ terms in total for $p$-bit accuracy.

We conclude that Binet's convergent series \eqref{eq:binetseries} is usable,
but less efficient than the Stirling series.

Another convergent expansion due to Binet is
\begin{equation}
R_1(z) = \sum_{n=1}^{\infty} \frac{n \zeta(n+1, z+1)}{2(n+1)(n+2)}, \quad \Real(z) > 0,
\end{equation}
and we mention the globally convergent Gudermann-Stieltjes series~\cite{stieltjes1889developpement,Barata2011}
\begin{equation}
R_1(z) = \sum_{n=0}^{\infty} \left[ (z+n+\tfrac{1}{2}) \log\!\left(\tfrac{z+n+1}{z+n}\right) - 1 \right], \quad z \ne \{0, -1, \ldots \},
\end{equation}
as well as Burnside's formula ($\Real(z) \ge \tfrac{1}{2}$)
\begin{equation}
\log \Gamma(z) = (z-\tfrac{1}{2}) \log(z-\tfrac{1}{2}) - (z-\tfrac{1}{2}) + \frac{\log(2\pi)}{2} -
\sum_{n=1}^{\infty} \frac{\zeta(2n, z)}{2^{2n+1} n (2n+1)}.
\end{equation}
Blagouchine~\cite{Blagouchine2016} gives more examples and references.
Unfortunately, the above expansions converge too slowly or have to complicated
terms (involving transcendental functions or nested sums) to
be of any practical interest for computations.

Higher-order remainder terms of the Stirling series can also
be re-expanded in terms of convergent series~\cite{Paris1992,paris2014comments,Nemes2015}.
For example, with $u = 2\pi i z$, Paris~\cite{paris2014comments} gives the formulas
\begin{align}
R_N(z) & =-\frac{\Gamma(2N-1)}{2\pi i} \sum_{k=1}^\infty \frac{1}{k}\{e^{uk} \Gamma(2-2N,uk)-e^{-uk} \Gamma(2-2N,-uk)\}  \label{eq:parisser1} \\
       &=\frac{2(-1)^{N-1}z}{(2\pi z)^{2N-2}}\sum_{k=1}^\infty\frac{1}{k^{2N-2}}\int_0^{\infty} \!\!\frac{t^{2N-2}e^{-t}}{t^2+4\pi^2k^2z^2}\,dt. \label{eq:parisser2}
\end{align}
It is interesting to compare these series with
Theorem~\eqref{thm:stirlexpand}, which for some $M \le N$ re-expands an approximation of $R_M(z)$ (namely $R_M(z) - R_N(z)$)
rather than $R_M(z)$ itself.
It is unclear whether the terms in
series like \eqref{eq:parisser1}, \eqref{eq:parisser2} can be evaluated efficiently enough to
offer any savings.

\subsection{Continued fractions}

The Stieltjes continued fraction for the gamma function~\cite{Char1980,cuyt2008handbook}
is the expansion
\begin{equation}
R_1(z) = \frac{a_1}{z + \displaystyle{\frac{a_2}{z + \ldots}}}, \quad a_1 = \tfrac{1}{12}, \; a_2 = \tfrac{1}{30}, \; a_3 = \tfrac{53}{210}, \ldots,
\label{eq:stieltjesfrac}
\end{equation}
which converges for any $\Real(z) > 0$.
The coefficients do not have a convenient closed form, but they are known to satisfy $a_n \sim n^2 / 16$.
We do not have an explicit error bound for \eqref{eq:stieltjesfrac}, though
it should be possible to derive such a bound using the general theory of continued fractions.

Like the Binet rising factorial series \eqref{eq:binetseries}, the convergence for fixed $z$ is too slow (of type $N^{-O(1)}$)
to form the basis of an arbitrary-precision algorithm; however, numerical experiments~\cite[Section 12.2]{cuyt2008handbook} suggest
that \eqref{eq:stieltjesfrac} converges about as fast as the Stirling series when increasing $\Real(z)$ and the number of terms $N$
simultaneously. The Stieltjes continued fraction also has the attractive feature of being much
more accurate than the truncated Stirling series when both $z$ and $N$ are small.

There are also drawbacks: even if the rate of convergence is comparable, evaluating $N$ terms of the
continued fraction is more expensive than evaluating
$N$ terms of the truncated power series \eqref{eq:stirlingseries},
and the coefficients $a_n$ are harder to compute than Bernoulli numbers.
For these reasons, \eqref{eq:stieltjesfrac} is not likely to be competitive with
the Stirling series for
arbitrary-precision computation.
With precomputed coefficients, it can be a useful alternative
at low precision (single or tens of digits).

\subsection{Other alternatives}

Stirling's formula
can be modified to improve accuracy for small~$z$.
For example, setting $r = 1/2$ in
\eqref{eq:spouge}
gives an approximation for $\Gamma(z+1)$ with relative error bounded by
$0.0534 (\operatorname{Re}(z) + \tfrac{1}{2})^{-1}$ in the right half-plane,
about half the maximum error of Stirling's formula~\cite[Theorem~1.3.2]{Spouge1994}.
An analysis by Pugh~\cite{pugh2004analysis} suggests that
$r = -W_{-1}(-1/\pi)/2 \approx 0.819264$ is near-optimal for
an approximation of this form.

Many authors~\cite{Nemes2010,Mortici2014,Chen2016,Wang2016,luschny2016,morris2020tweaking}
have proposed other modifications to the prefactor of the
Stirling series,
optionally followed by an asymptotic series
or continued fraction development.
These formulas sometimes offer better uniform accuracy than the Stirling series
when used with only the initial term or a small number of terms,
but probably offer no substantial advantages for arbitrary-precision computation.

\subsection{Verdict on alternative global methods}

The Stirling series
concentrates the accuracy around the
essential singularity at infinity,
and does so to the extreme extent of having zero radius of convergence
(viewed as a power series in $z^{-1}$).
It is something of a misconception that this property is a disadvantage;
on the contrary, it allows for rapid convergence in the
\emph{combined} limit $|z| \to \infty, N \to \infty$,
which extends globally thanks to the analytic continuation formula $\Gamma(z) = \Gamma(z+1) / z$.

With optimal argument reduction, the surveyed global methods
perform worse than or on par with the Stirling series when $z$ is large
and at best perform marginally better
when $z$ is small.
When $z$ is small, it is even better
to use local methods (as discussed in the next section)
which specifically take advantage of $z$ being small
without simultaneously trying to accomodate the behavior of $\Gamma(z)$ at infinity.


\section{Local methods}

\label{sect:local}

Local polynomial or rational function approximations
have been used to implement
the gamma function since the early days of electronic computers.\footnote{For example, Hastings~\cite{hastings1955approximations}
gives polynomials of degree 5 to 8 for approximating $\Gamma(1+x)$ on $0 \le x \le 1$.
Rice~\cite{Rice1964} considers rational approximations of $\Gamma(x)$ on $[2, 3]$
while Cody and Hillstrom~\cite{Cody1967}
give rational approximations for $\log \Gamma(x)$ on the intervals $[0.5, 1.5]$, $[1.5, 4]$ and $[4, 12]$.
Numerical coefficients of Taylor series are given in Abramowitz and Stegun~\cite{AbramowitzStegun1964}, Wrench~\cite{Wrench1968} and elsewhere.
Lookup tables which may be used for interpolation
are of course even older: according to Gourdon and Sebah~\cite{sebah2002introduction},
Gauss ``urged to his calculating prodigy student Nicolai (1793-1846) to compute tables of $\log(\Gamma(x))$ with twenty decimal places''.
In the 20th century, tables become widespread thanks to books such as Jahnke and Emde~\cite{jahnke1909funktionentafeln}.}
They are an excellent complement to the Stirling series for the following reasons:

\begin{itemize}
\item We can avoid (some) argument reduction for small $z$.
\item We avoid one or two elementary function evaluations (depending on whether we compute $\log \Gamma(z)$ or $\Gamma(z)$ and whether argument reduction is needed).
\item With carefully chosen approximants, we can guarantee certain properties (correct rounding, monotonicity) at least near certain points.
\end{itemize}

Most modern machine-precision mathematical standard library (\texttt{libm}) implementations
combine the Stirling series with
polynomial or rational approximations on one or several short intervals,
although the choices of intervals, approximants and asymptotic cutoffs
vary between implementations.

Just to take one example,
OpenLibm~\cite{OpenLibm2021} implements the \texttt{lgamma} function ($\log \Gamma(x)$) using
the Stirling series for $x \ge 8$; for $0 < x < 8$,
it reduces the argument to the interval $[2, 3]$
where it uses a rational approximation of the form $\log \Gamma(x) \approx 0.5 s + s P(s)/Q(s), s = x - 2$,
with the exception of the interval $1.23164 < x < 1.73163$ where it uses
a degree-14 polynomial approximation chosen
to maintain monotonicity
around the minimum at $x \approx 1.46163$.

We will not attempt to reverse-engineer such fixed parameter choices here.\footnote{See Beebe~\cite{beebe2017mathematical} for a discussion
of implementation techniques emphasizing machine precision,
and Zimmermann~\cite{Zimmermann2021}
for a comparison of the accuracy of several \texttt{libm} implementations.}
We will instead study local methods for the purposes of arbitrary-precision computation.
In this setting, it makes sense to focus on approximations that are economical
in terms of precomputation time and storage,
and we will therefore focus on the use of
a single approximating polynomial.\footnote{We can clearly achieve arbitrary performance goals
if we drop this constraint, but enormous lookup tables have their own drawbacks.
Even machine-precision implementations tend to be frugal about lookup tables
in order to minimize code size and avoid cache misses. In IEEE 754 \texttt{binary64} arithmetic,
$\Gamma(x)$ could in principle be covered entirely with low-degree piecewise polynomial
approximations on the part $0 < x \lesssim 171.63$ of its positive domain
where it is finite, but we are not aware of any mainstream \texttt{libm} implementation that follows this approach.}

\subsection{Taylor series}
\label{sect:taylor}

A natural way to compute $\Gamma(z)$, $1/\Gamma(z)$, $\log \Gamma(z)$ or $\psi(z)$
on an interval $[a,b]$ is to
use the Taylor series or Laurent series centered on $z = (a+b)/2$~\cite{Wrench1968,Fransen1980}.\footnote{For $\log \Gamma(z)$, the expansions at $z = -n$ are not Laurent series but generalized series expansions with leading
logarithms.}
As discussed previously, the coefficients
at an arbitrary expansion point are easily computed using
the Stirling series together with power series arithmetic.

The Taylor series at $z = 0$ of the reciprocal gamma function
\begin{equation}
\frac{1}{\Gamma(z)} = \sum_{n=1}^{\infty} a_n z^n = z + \gamma z^2 + \left(\frac{\gamma^2}{2} - \frac{\pi^2}{12}\right)z^3 + \ldots
\label{eq:rgammaseries}
\end{equation}
is particularly interesting:

\begin{itemize}
\item The function $1/\Gamma(z)$ is an entire function, so its Taylor series converges
more quickly than the Taylor or Laurent series of $\Gamma(z)$ or $\log \Gamma(z)$.
Indeed, on any bounded domain, it suffices to take $O(p / \log p)$ terms
of \eqref{eq:rgammaseries} for $p$-bit accuracy, compared to $O(p)$ terms for expansions of $\Gamma(z)$.
\item The coefficients have a special form (see below) and can therefore potentially be computed more quickly than the coefficients at a generic point.
\item It is useful to choose an integer as the expansion point to optimize for input $z = n + \varepsilon$ close to integers. Such input often appears as a result of numerical limit computations or when solving perturbation problems.
\end{itemize}

The coefficients in \eqref{eq:rgammaseries} can be computed from the
the Taylor series
\begin{equation}
\log \Gamma (1-z) = \gamma z + \sum_{n=2}^{\infty} \frac{\zeta(n)}{n} z^n,
\end{equation}
requiring the values $\gamma, \zeta(2), \zeta(3), \ldots$ and
exponentiation of a power series. Explicitly, this leads to
the recurrence
\begin{equation}
(n-1) a_n = \gamma a_{n-1} - \zeta(2) c_{n-2} + \zeta(3) c_{n-3} - \ldots + (-1)^{n} \zeta(n-1) a_1
\label{eq:taylorrecurrence}
\end{equation}
for $n \ge 2$~\cite{Wrench1968}. However,
Newton iteration should be used instead of the direct recurrence \eqref{eq:taylorrecurrence}
when computing a large number
of coefficients (say, $N > 500$) in order to achieve a softly optimal
$\bigOtilde(N^2)$ complexity (assuming that $p = O(N)$).
See \cite[Section 4.7]{Johansson2014thesis} for a detailed
review of algorithms for computing $a_1, \ldots, a_N$.

When implementing $\eqref{eq:rgammaseries}$ to evaluate $1/\Gamma(z)$ or $\Gamma(z)$,
it is convenient to shift the expansion point by one, giving
\begin{equation}
\frac{1}{\Gamma(1+u)} = \sum_{n=0}^{\infty} a_{n+1} u^n = 1 + \gamma u + \ldots, \quad u = z - 1
\end{equation}
for use on the interval $z \in [0.5, 1.5]$,
or more generally on the strip $\Real(z) \in [0.5, 1.5]$.

To evaluate $\log \Gamma(z)$, we need a logarithm evaluation,
together with a branch correction (as discussed in section~\ref{ref:branchcorrection}) valid on $\Real(z) \in [0.5, 1.5]$
when $z$ is complex.

\subsubsection{Coefficient bounds and convergence}

The coefficients $a_n$ in \eqref{eq:rgammaseries} can be estimated accurately using
saddle-point analysis.
The best results of this type are due to Fekih-Ahmed~\cite{fekihahmed2017power}.
However, there are few published explicit upper bounds for $|a_n|$; we are only aware of Bourguet's bound \cite{Bourguet1883,fekihahmed2017power}
\begin{equation}
|a_n| \le \frac{e}{\pi^{n} (n+1)!} + \frac{4}{\pi^2 \sqrt{n!}} \lesssim \frac{4}{\pi^2 \sqrt{n!}}
\label{eq:bourguet}
\end{equation}
which is acceptable for $n < 100$ but quite pessimistic asymptotically.

We can get better
bounds with the help of the following global inequality for the reciprocal gamma function.

\begin{theorem}
\label{eq:globalbound}
For any $z \in \CC$,
\begin{equation}
\left| \frac{1}{\Gamma(z)} \right| \; \le \; \left| \sqrt{z} e^z z^{-z} \right| \; \le \; e^{\pi R/2} R^{1/2+R}, \quad R = |z|.
\end{equation}
\end{theorem}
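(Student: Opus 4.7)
The plan is to prove the two inequalities in the theorem separately.

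For the second (outer) inequality, the approach is an elementary polar computation. I would write $z = Re^{i\theta}$ with $\theta \in (-\pi, \pi]$ and use $|e^z| = e^{R\cos\theta}$ together with $|z^{-z}| = e^{-\Re(z\log z)} = R^{-R\cos\theta}\,e^{R\theta\sin\theta}$ (on the principal branch) to express
\begin{equation*}
\bigl|\sqrt{z}\,e^z z^{-z}\bigr| \;=\; R^{1/2 - R\cos\theta}\exp\!\bigl(R\cos\theta + R\theta\sin\theta\bigr).
\end{equation*}
Taking logarithms and dividing by $R$, the claimed bound reduces to the one-variable inequality $\cos\theta\,(1 - \log R) + \theta\sin\theta \le \log R + \pi/2$, which I would verify by maximizing the LHS over $\theta \in (-\pi, \pi]$ (a short calculus exercise; the extremum lies essentially at $\theta = \pi/2$, where equality holds at $R = 1$).

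For the first (inner) inequality, I would rephrase it in terms of the scaled gamma function $\Gamma^*(z) = (2\pi)^{-1/2} e^z z^{1/2 - z}\Gamma(z)$. Since $1/\Gamma(z) = (2\pi)^{-1/2}\sqrt{z}\,e^z z^{-z}/\Gamma^*(z)$, the inequality is equivalent to $|\Gamma^*(z)| \ge 1/\sqrt{2\pi}$. In the right half-plane $\Re(z) > 0$, I would exploit Binet's second integral
\begin{equation*}
\log\Gamma^*(z) \;=\; 2\int_0^{\infty} \frac{\arctan(t/z)}{e^{2\pi t} - 1}\,dt,
\end{equation*}
together with the bound $|\Re\arctan(t/z)| \le \pi/2$ valid for $\Re(z) > 0$, and the near-origin expansion $\arctan(t/z) = t/z + O(t^3)$, to control $|\Re\log\Gamma^*(z)|$ from below by $-\tfrac{1}{2}\log(2\pi)$. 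To handle $\Re(z) \le 0$, I would invoke the reflection formula $1/\Gamma(z) = \sin(\pi z)\,\Gamma(1-z)/\pi$ and reduce to the right half-plane estimate, being careful about branches of $z^{-z}$ and about the poles of $\Gamma$ at non-positive integers.

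The main obstacle is the first inequality, specifically managing Binet's integral near $t = 0$: the naive combination of $|\Re\arctan(t/z)| \le \pi/2$ with $1/(e^{2\pi t}-1)\sim 1/(2\pi t)$ produces a spurious logarithmic divergence, so one must exploit the linear vanishing of $\arctan(t/z)$ at the origin. A split of the integral into $[0, 1]$ and $[1, \infty)$ with different estimates on each range, followed by a careful case analysis across the negative real axis when invoking the reflection formula, should close the argument.
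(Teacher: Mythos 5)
Your argument for the \emph{second} (outer) inequality is essentially the same polar computation as the paper's: both derive $|\sqrt{z}\,e^z z^{-z}| = R^{1/2-R\cos\theta}\exp(R(\cos\theta+\theta\sin\theta))$, and the paper's one-liner $|\cos\theta+\theta\sin\theta|\le\pi/2$ (derivative $\theta\cos\theta$, extrema at $\theta=\pm\pi/2$) is precisely the optimization you propose. Note, however, that neither your reduced inequality nor the paper's reasoning is valid for $R<1$: at $z=\tfrac12$ the middle quantity equals $e^{1/2}\approx 1.649$ while $e^{\pi/4}\cdot\tfrac12\approx 1.097$, so the outer inequality as stated actually fails. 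The paper's one-liner tacitly also needs $R^{-R\cos\theta}\le R^R$, which requires $R\ge 1$. The theorem should restrict to $R\ge 1$ or use $\max(R,1)$; this is harmless for its use in Theorem~\ref{eq:anbound}.

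Your route to the \emph{first} (inner) inequality via Binet's second integral is genuinely different from the paper's, which verifies the bounded region by interval arithmetic and the rest by Stirling asymptotics. Your approach is exact and avoids computer verification in the right half-plane, but it is unnecessarily complicated: since $\arctan$ maps the open right half-plane into the right half of the strip $\{|\operatorname{Re} u|<\pi/2\}$, one has $\operatorname{Re}\arctan(t/z)\ge 0$ whenever $t>0$, $\operatorname{Re}(z)>0$, so $\operatorname{Re}\log\Gamma^*(z)=2\int_0^\infty \operatorname{Re}\arctan(t/z)/(e^{2\pi t}-1)\,dt\ge 0$ and $|\Gamma^*(z)|\ge 1>(2\pi)^{-1/2}$ at once. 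No splitting of the integral is needed, and the ``spurious logarithmic divergence'' you worry about does not exist: the integrand tends to the finite limit $1/(2\pi z)$ as $t\to 0^+$. You in fact get the stronger bound $|1/\Gamma(z)|\le(2\pi)^{-1/2}|\sqrt{z}\,e^z z^{-z}|$ in the right half-plane.

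The genuine gap is your left half-plane step. You say the reflection formula will ``reduce to the right half-plane estimate,'' but $1/\Gamma(z)=\sin(\pi z)\Gamma(1-z)/\pi$ turns the desired \emph{upper} bound on $|1/\Gamma(z)|$ into an \emph{upper} bound on $|\Gamma(1-z)|$, i.e.\ an upper bound on $|\Gamma^*(w)|$ for $w=1-z$ — which is the opposite of what your Binet computation gives, and which the crude estimate $|\operatorname{Re}\arctan|\le\pi/2$ cannot deliver (it produces a divergent integral). You would need a different estimate (for instance a Stirling-remainder-type bound $|\log\Gamma^*(w)|\le(12|w|\cos^2(\tfrac12\arg w))^{-1}$), plus explicit branch bookkeeping to compare $|(1-z)^{1/2-z}|$ or $|(-z)^{-z}|$ with $\sqrt{|z|}\,|z^{-z}|$, and — crucially — a way to absorb $|\sin(\pi z)|$. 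The paper does exactly this via the identity $|(-z)^{-z}\sin(\pi z)|=|z^{-z}\sin(\pi z)e^{-\pi|\operatorname{Im} z|}|\le|z^{-z}|$, which your plan does not mention and which is the step that actually closes the left half-plane case.
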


\begin{proof}
We sketch the proof of the first inequality.
An explicit computation in interval arithmetic establishes the result for small $|z|$.
For large $|z|$ in the right half-plane, the inequality follows
immediately from the Stirling series.
In the left half-plane, the reflection formula together with the Stirling series
gives
\begin{equation}
\frac{1}{\Gamma(z)} = \sqrt{\frac{2}{\pi}} \sin(\pi z) \sqrt{z} e^{z} (-z)^{-z} (1 + \varepsilon)
\end{equation}
where $\varepsilon \to 0$ when $|z| \to \infty$. We can verify
$|(2 / \pi)^{1/2} (1 + \varepsilon)| \le 1$ for sufficiently large $|z|$
using the error bound for the Stirling series.
It remains to observe that $|(-z)^{-z} \sin(\pi z)| = |z^{-z} \sin(\pi z) e^{-\pi |\Imag(z)|}| \le |z^{-z}|$.

For the second inequality, let $z = R e^{i \theta}$ with $\theta \in (-\pi, \pi]$. Then
\begin{equation}
\left| \sqrt{z} e^z z^{-z} \right| = e^{R (\cos(\theta) + \theta \sin(\theta))} R^{1/2 - R \cos(\theta)}
\end{equation}
and the result follows from the fact that $|\cos(\theta) + \theta \sin(\theta)| \le \pi / 2$.
\end{proof}

An application of Cauchy's integral formula now yields
the following corollary.
\begin{theorem}
For $n \in \NN$ and any $R > 0$, $|a_n| \le e^{\pi R/2} R^{1/2+R-n}$.
\label{eq:anbound}
\end{theorem}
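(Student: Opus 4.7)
The plan is to use Cauchy's integral formula directly together with the global bound established in Theorem~\ref{eq:globalbound}. Since $1/\Gamma(z)$ is entire, the coefficients in the Maclaurin expansion $1/\Gamma(z) = \sum_{n \ge 1} a_n z^n$ admit the representation
\begin{equation*}
a_n = \frac{1}{2\pi i} \oint_{|z|=R} \frac{1}{\Gamma(z)} \frac{dz}{z^{n+1}}
\end{equation*}
for every $R > 0$. The circle $|z| = R$ is exactly the contour on which Theorem~\ref{eq:globalbound} gives the uniform bound $|1/\Gamma(z)| \le e^{\pi R/2} R^{1/2+R}$.

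Applying the standard $ML$-estimate, the integrand has modulus at most $e^{\pi R/2} R^{1/2+R} \cdot R^{-(n+1)}$ on the contour, which has length $2\pi R$. Thus
\begin{equation*}
|a_n| \;\le\; \frac{1}{2\pi} \cdot 2\pi R \cdot \frac{e^{\pi R/2} R^{1/2+R}}{R^{n+1}} \;=\; e^{\pi R/2} R^{1/2+R-n},
\end{equation*}
which is the claimed bound.

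There is really no obstacle here beyond having Theorem~\ref{eq:globalbound} in hand: once the global envelope for $|1/\Gamma(z)|$ is available, the coefficient estimate is a one-line consequence of Cauchy's formula and holds for every $R > 0$, leaving the free parameter $R$ available for optimization (as is presumably exploited in the subsequent formula \ref{eq:bestanbound} mentioned in the introduction).
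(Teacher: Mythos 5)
Your proof is correct and is exactly the argument the paper intends: the statement is presented as a corollary of Theorem~\ref{eq:globalbound}, obtained via Cauchy's integral formula on the circle $|z|=R$ with the standard $ML$-estimate. Nothing to add.
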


\begin{table}
\setlength{\tabcolsep}{3pt}
\renewcommand{\arraystretch}{1.05}
\centering
\caption{Taylor coefficients $a_n$ for the reciprocal gamma function $1 / \Gamma(z)$ at $z = 0$, and corresponding estimates and bounds.}
\label{table:tab1}
\small
\begin{tabular}{l l l l l l}
 $n$ & $a_n$ & Fekih-Ahmed  & Bourguet \eqref{eq:bourguet} & Bound \eqref{eq:bestanbound} & ($R = n/8$) \\ 
 \hline
$1   $ & $+1.0000$                    & $+0.98$                    & $4.68$ & $2.14$                   & $2.66$ \\
$10  $ & $-2.1524 \cdot 10^{-4}$      & $-2.01 \cdot 10^{-4}$      & $6.60 \cdot 10^{-3}$ & $8.13 \cdot 10^{-2}$     & $1.14$ \\
$10^2$ & $+6.6158 \cdot 10^{-106}$    & $+6.60 \cdot 10^{-106}$    & $4.20 \cdot 10^{-80}$ & $9.38 \cdot 10^{-91}$    & $1.25 \cdot 10^{-87}$ \\
$10^3$ & $+5.3533 \cdot 10^{-1871}$   & $+5.35 \cdot 10^{-1871}$   & $6.39 \cdot 10^{-1285}$ & $2.54 \cdot 10^{-1750}$  & $3.36 \cdot 10^{-1749}$ \\
$10^4$ & $+1.5010 \cdot 10^{-27327}$\!  & $+1.50 \cdot 10^{-27327}$  & $7.60 \cdot 10^{-17831}$ & $1.46 \cdot 10^{-26322}$ & $2.10 \cdot 10^{-26244}$ \\
$10^5$ &                            & $+7.10 \cdot 10^{-362317}$\! & $7.63 \cdot 10^{-228288}$\! & $4.23 \cdot 10^{-353745}$\! & $5.96 \cdot 10^{-349951}$\! \\ [1ex] 
\end{tabular}
\end{table}

We are free to choose $R$ as a function of $n$,
and we may in particular pick the optimal value
\begin{equation}
|a_n| \le e^{\pi R/2} R^{1/2+R-n}, \quad R = \frac{n-\tfrac{1}{2}}{W_0((n+\tfrac{1}{2}) e^{\pi/2+1})}.
\label{eq:bestanbound}
\end{equation}

For $n$ in a range relevant for computations,
$R = n/8$ is nearly as accurate (see Table~\ref{table:tab1}),
and gives tail bounds that are easy to compute.

In an implementation, we may want to determine
a tight bound by inspecting the computed coefficients.
An exhaustive computation
(checking worst cases of the possible term ratios $|a_{n+1}| / |a_n|$,
together with \eqref{eq:bestanbound} for a rough tail bound)
establishes the following:

\begin{theorem}
Let $b_n = a_{n+1}$. If $|z| \le 20$ and $N \le 1000$, then
\begin{equation}
\left| \frac{1}{\Gamma(1+z)} - \sum_{n=0}^{N-1} b_n z^n \right| \le 8 \max(\tfrac{1}{2}, |z|) |b_N| |z|^N
\end{equation}
provided that the right-hand side is smaller than $2^{-8}$.
The same statement also holds for $N \le 10000$ with the exception of $N \in \{1443, 2005, 9891\}$.
\end{theorem}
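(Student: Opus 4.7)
The plan is to verify the inequality by a finite computation, using Theorem~\ref{eq:anbound} to control the infinite tail. By the triangle inequality, it suffices to show that
\[
R_N(r) := \sum_{k=0}^{\infty} \frac{|b_{N+k}|}{|b_N|}\, r^k \;\le\; 8 \max(\tfrac{1}{2}, r), \qquad r = |z| \in [0, 20].
\]
Since $R_N$ is a power series with nonnegative coefficients, it is nondecreasing and convex in $r$, while the right-hand side is piecewise linear. Consequently, verifying the inequality on $[0,20]$ reduces to a finite check at suitable test points together with a simple upper bound on $R_N'$ to cover the gaps between them.

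Next, I would split the series for $R_N(r)$ at some cutoff index $n = M$ moderately larger than $N$. The \emph{head} $\sum_{n=N}^{M-1}$ is evaluated rigorously in interval arithmetic from precomputed Taylor coefficients $a_1, \ldots, a_M$, generated to sufficient working precision (e.g.\ via the recurrence~\eqref{eq:taylorrecurrence} or a Newton iteration) so that the possibly small quantity $|b_N|$ is resolved with several guard bits. The \emph{tail} $\sum_{n \ge M}$ is bounded above by replacing $|b_n|$ with the analytic bound from Theorem~\ref{eq:anbound} at $R = n/8$ or the optimized choice in~\eqref{eq:bestanbound}; the near-factorial decay then makes the tail rapidly summable and negligible compared to $|b_N|$ provided $M$ exceeds $N$ by a modest margin.

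With this machinery in place, the verification is an exhaustive loop over $N = 1, \ldots, 1000$ (respectively $N \le 10000$) that tests the reduced inequality. The extra precondition $8 \max(\tfrac{1}{2}, |z|) |b_N| |z|^N < 2^{-8}$ further restricts the $(N, |z|)$ pairs that need to be examined, since it forces $|b_N||z|^N$ to be extremely small; for a given $N$ it bounds the range of $|z|$ that must be tested from above.

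The main obstacle, and the source of the exceptions $N \in \{1443, 2005, 9891\}$, is that $|a_n|$ is not monotonic and occasionally dips to anomalously small values near its sign changes. At those indices the denominator $|b_N|$ becomes so much smaller than neighboring $|b_{N+k}|$ that a few early terms of $R_N(r)$ blow up and defeat the inequality, even when the precondition is satisfied. The delicate parts of the verification are therefore (i) detecting these near-zeros reliably in interval arithmetic, without losing the dip through enclosure widening, and (ii) confirming that each really is an exception rather than a near-miss that could be salvaged by a tighter tail bound; the rest of the argument is routine bookkeeping.
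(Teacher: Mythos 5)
Your proposal matches the paper's approach: the paper establishes this theorem by ``an exhaustive computation (checking worst cases of the possible term ratios $|a_{n+1}|/|a_n|$, together with~\eqref{eq:bestanbound} for a rough tail bound),'' and your reduction to bounding $R_N(r)=\sum_{k\ge 0}(|b_{N+k}|/|b_N|)r^k$, interval-arithmetic evaluation of the head, analytic tail bound via Theorem~\ref{eq:anbound}/\eqref{eq:bestanbound}, and explanation of the exceptional indices as near-zeros of $|a_n|$ all flesh that remark out correctly. One small simplification you could use: since $R_N$ is convex and $8\max(\tfrac{1}{2},r)$ is piecewise linear, the difference $R_N(r)-8\max(\tfrac{1}{2},r)$ is convex on each of $[0,\tfrac{1}{2}]$ and $[\tfrac{1}{2},r_{\max}]$, so its maximum is attained at an endpoint and it suffices to check $r\in\{\tfrac12,r_{\max}\}$ (the point $r=0$ is trivial, $R_N(0)=1<4$), avoiding any need for derivative bounds over a grid.
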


For any $C \ge 0$, $\lim_{p \to \infty} |C^N a_N| \to 0$ with $N = \lfloor p / \log(p) \rfloor$,
so it suffices to take $O(p / \log(p))$ terms
to compute $1/\Gamma(z)$ to $p$-bit accuracy for any fixed~$z$.\footnote{In fact, it appears that we can take
$N = o(1) \cdot p / \log(p)$, but $N = p / (\log(p) \log(\log(p)))$ is not quite sufficient.}
The cost of using the Taylor series grows roughly linearly with $|\Real(z)|$ due
to argument reduction.

With increased $|\Imag(z)|$, the cost grows as more terms are required.
Increased working precision is also required to counteract
cancellation (the precision must increase superlinearly with $|\Imag(z)|$
since $|1 / \Gamma(yi)|$ grows like $e^{\pi |y|/2}$
while $|1 / \Gamma(-|y|)|$ grows like $\Gamma(|y|)$).
It is possible to use the multiplication theorem
\begin{equation}
\prod_{k=0}^{m - 1} \Gamma\!\left(\frac{z + k}{m}\right) = {\left(2 \pi\right)}^{\left( m - 1 \right) / 2} {m}^{1 / 2 - z} \, \Gamma\!\left(z\right)
\end{equation}
to replace one evaluation far away from the real
line with $m$ evaluations near the real line.
This might not be faster than using the Taylor series once,
but it can avoid cancellation and reduce the
number of required Taylor coefficients.

\subsubsection{Taylor versus Stirling}

\begin{figure}
\caption{Region where the Taylor series theoretically is faster than the Stirling series, counting only arithmetic operations in argument reduction and polynomial evaluation (here for $p = 1000$). Observe that the scale of the imaginary axis is exaggerated in this visualization. \label{fig:taylordiamond}}
\includegraphics[width=10cm]{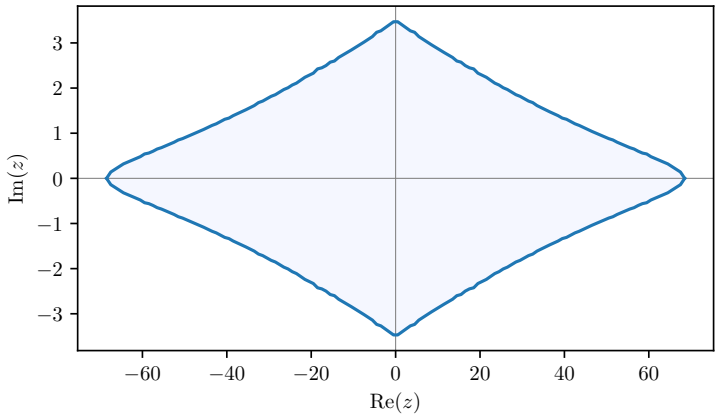}
\end{figure}

Asymptotically, the Taylor series should be faster than the Stirling series
inside a narrow diamond-shaped region (see Figure~\ref{fig:taylordiamond})
whose boundaries grow with the precision $p$.

We have not attempted to determine a formula for the asymptotic shape of this region.
In practice, the scale and shape of the boundary will be different from
a theoretical prediction
due to elementary function costs, non-constant costs of arithmetic operations, differences in working precision, etc.,
and suitable regions for using the Taylor series must be determined empirically.

\subsection{McCullagh's series}

Differentiating the Taylor series \eqref{eq:rgammaseries} allows
computing $[1/\Gamma(z)]'$ about as rapidly as $1/\Gamma(z)$,
and with two Taylor series evaluations, we can recover $\Gamma'(z)$ or $\psi(z)$.

An alternative expansion for $\psi(z)$ is McCullagh's series~\cite{McCullagh1981} \cite[Section~18.2.7]{beebe2017mathematical}
\begin{equation}
\psi(1+z) = -\gamma - \sum_{n=1}^{\infty} (-z)^n \left( d_n + \frac{c_n}{z+n} \right),
\end{equation}
where the coefficients
\begin{equation}
c_n = \frac{1}{n^n}, \; d_n = \sum_{k=n+1}^{\infty} \frac{1}{k^{n+1}} = \zeta(n+1, n+1).
\end{equation}
can be precomputed.
This series converges even more quickly
than the Taylor series for $1/\Gamma(z)$,
having terms of order $|z|^n n^{-n}$,
though it requires more work per term.

McCullagh's series can also be differentiated to obtain
expansions of $\psi^{(k)}(z)$.
Integrating term by term yields the rapidly convergent
\begin{align}
\log \Gamma(1+z) & = -\gamma z + \sum_{n=1}^{\infty} \frac{(-z)^{n+1}}{n+1} \left( d_n + \frac{c_n}{n} {}_2F_1\!\left(1, n+1; \, n+2; \, -\frac{z}{n}\right)\right) \\
                 & = -\gamma z + \sum_{n=1}^{\infty} \left[ \frac{(-z)^{n+1} d_n}{n+1} - \log\left(1+\frac{z}{n}\right) - \sum_{k=1}^{n} \frac{(-z)^k}{k n^k} \right]
\end{align}
which however is less suitable for computations since there does not seem to
be a simple recurrence relation for the logarithmic (${}_2F_1$ function) terms.








\subsection{Chebyshev and minimax polynomials}
\label{sect:taylorvs}

Taylor polynomials minimize error near the expansion point,
but Chebyshev or minimax approximations
give better accuracy uniformly on an interval.

\begin{table}
\setlength{\tabcolsep}{3pt}
\renewcommand{\arraystretch}{1.1}
\centering
\caption{Maximum absolute error (estimated) of degree-$D$ polynomial approximations of $1 / \Gamma(x)$ on short real intervals.}
\label{table:tabpolyapprox}
\small
\begin{tabular}{l | l l l l}
 $D$ & Taylor, $[0.5, 1.5]$ & Taylor, $[0.75, 1.25]$ & Chebyshev, $[1,2]$ & Minimax, $[1,2]$ \\
 \hline
 $5$  & $2.1 \cdot 10^{-4}$  &  $2.8 \cdot 10^{-6}$  &  $3.4 \cdot 10^{-6}$  &  $3.8 \cdot 10^{-7}$ \\
 $10$  & $1.0 \cdot 10^{-8}$  &  $4.9 \cdot 10^{-12}$  &  $7.1 \cdot 10^{-12}$  &  $5.9 \cdot 10^{-13}$ \\
 $50$  & $1.2 \cdot 10^{-59}$  &  $5.0 \cdot 10^{-75}$  &  $4.3 \cdot 10^{-75}$  &   \\
 $100$  & $2.0 \cdot 10^{-139}$  &  $7.7 \cdot 10^{-170}$  &  $6.3 \cdot 10^{-170}$  &   \\
 $200$  & $4.2 \cdot 10^{-323}$  &  $1.3 \cdot 10^{-383}$  &  $9.2 \cdot 10^{-384}$  &   \\
 $500$  & $2.8 \cdot 10^{-964}$  &  $4.3 \cdot 10^{-1115}$  &  $2.1 \cdot 10^{-1117}$  &   \\ [1ex]
\end{tabular}
\end{table}

True minimax polynomials are optimal by definition,\footnote{For simplicity, we consider only polynomial approximants. Rational
functions of degree $P + Q = D$ can generally achieve higher accuracy
than polynomials of degree $D$,
but the improvement is small when working with
an entire function such as the reciprocal gamma function.}
but they are expensive to compute,
which largely limits their usefulness to machine precision.
\emph{Chebyshev interpolants}, however, are easy to generate
while being nearly as accurate as minimax polynomials
of the same degree~\cite[Section 3.11]{DLMF} \cite[Theorem 16.1]{trefethen2019approximation}.

For any $N$-times differentiable function $f$ on an interval $[a,b]$, let $f_N(x)$ be the degree-($N-1$)
Lagrange interpolating polynomial of $f$ at the Chebyshev nodes
\begin{equation}
x_k = \frac{1}{2}(a+b) + \frac{1}{2} (b-a) \cos\!\left(\frac{2k-1}{2N}\pi\right), \quad 1 \le k \le N.
\end{equation}
Then the standard error bound
\begin{equation}
|f(x) - f_{N}(x)| \le \frac{1}{2^{N-1} N!} \left( \frac{b-a}{2} \right)^N \max_{\xi \in [a,b]} | f^{(N)}(\xi) |, \quad x \in [a, b]
\end{equation}
suggests that, for an interval of width $b-a = 1$,
Chebyshev interpolants will have worst-case error
about $2^{-N}$ times that of Taylor polynomials of the same degree
(in the worst case of radius $1/2$ for the Taylor series).
The Taylor series will
break even with the Chebyshev interpolant for the average input (radius $1/4$).
Numerical values (Table~\ref{table:tabpolyapprox}) confirm this.

Chebyshev interpolants are thus clearly better if we want
to achieve a fixed accuracy on a real interval.
However, there are some points in favor of using the Taylor series:
\begin{itemize}
\item The Taylor series
can easily be truncated according to the precision as well as
the proximity to the expansion point,
and will then perform nearly as well
as the Chebyshev interpolant \emph{on average}.
In practice, input to the gamma function is not necessarily uniform:
arguments close to integers often appear in perturbation problems,
favoring the Taylor series.
\item Computing the truncated Taylor series to length $N$ turns out
to be somewhat cheaper than evaluating $1/\Gamma(x)$ at $N$ points.
In fact, the most efficient way to construct Chebyshev interpolants
for the reciprocal gamma function appears to be to compute
Taylor polynomials and evaluate these.
\item The Taylor series is better if we want to to extend the approximation on $[a,b]$ to complex arguments inside a box $[a,b] + [-c,c] i$. (On the other hand, if we want to optimize for such a box with large $c$, then a Chebyshev interpolant along the imaginary axis might be better.)
\end{itemize}


\section{Hypergeometric methods}

\label{sect:hypergeometric}

The methods discussed so far require $\bigOtilde(p^2)$ bit operations
to compute $\Gamma(z)$ to $p$-bit accuracy
assuming that $z$ is fixed.
The methods that will be discussed next are interesting since they lead to improved asymptotic complexity bounds.
The two main theorems on the complexity
of gamma function computation are due, in respective order, to Brent~\cite{Brent1976} and Borwein~\cite{Borwein1987,borwein1987pi}.

\begin{theorem}
Let $z \not \in \{0, -1, -2, \ldots\}$ be a fixed algebraic number.
Then $\Gamma(z)$ can be computed to $p$-bit accuracy using $O(M(p) \log^2p) = \bigOtilde(p)$ bit operations.
\label{thm:complexityalg}
\end{theorem}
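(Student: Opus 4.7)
The plan is to prove Theorem~\ref{thm:complexityalg} via the incomplete gamma decomposition together with binary splitting, following the approach pioneered by Brent. The four main ingredients are (i) a decomposition splitting $\Gamma(z)$ into a rapidly-evaluable part and an exponentially small tail, (ii) a hypergeometric representation of the rapidly-evaluable part, (iii) binary splitting to evaluate that hypergeometric series in quasi-linear time, and (iv) quasi-linear evaluation of the non-hypergeometric prefactors using known $\widetilde{O}(p)$ algorithms.

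Concretely, first I would write
\begin{equation*}
\Gamma(z) \;=\; \gamma(z,N) + \Gamma(z,N) \;=\; \int_0^N t^{z-1} e^{-t}\,dt + \int_N^\infty t^{z-1} e^{-t}\,dt
\end{equation*}
and choose the truncation parameter $N = \Theta(p)$ (with a constant depending on $\Real(z)$). A standard estimate gives $|\Gamma(z,N)| = O(N^{\Real(z)-1} e^{-N})$, so with $N$ chosen proportional to $p$ the tail is below $2^{-p}$ and can be discarded. Next, I would use the classical expansion
\begin{equation*}
\gamma(z,N) \;=\; N^z e^{-N} \sum_{k=0}^{\infty} \frac{N^k}{(z)_{k+1}}
\end{equation*}
(which follows from Kummer's transformation and has all terms of the same sign for $z > 0$, avoiding catastrophic cancellation). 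The ratio of consecutive terms is $N/(z+k+1)$, so terms grow until $k \approx N$ and then decay super-geometrically; a routine bound shows that truncating at $K = O(p)$ terms incurs an error below $2^{-p}$.

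The crux is the efficient evaluation of the truncated sum $S_K = \sum_{k=0}^{K-1} N^k / (z)_{k+1}$. Since $z$ is a fixed algebraic number, its minimal polynomial has fixed degree $d$ and fixed height, so we may represent every partial numerator and denominator in the recurrence $f_{k+1}/f_k = N/(z+k+1)$ as an element of the number field $\QQ(z)$ with controlled coefficient growth. Binary splitting applied to this hypergeometric recurrence produces exact rationals (in fact, elements of $\mathcal O_{\QQ(z)}$) whose bit sizes grow linearly in the recursion depth, and the classical analysis (see \cite{Brent1976,mca}) gives total bit complexity $O(M(K\log K)\log K) = O(M(p)\log^2 p)$ to produce a single $p$-bit rational approximation to $S_K$. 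Finally, the prefactors $N^z = \exp(z \log N)$ and $e^{-N}$ are computed via Brent's $\widetilde O(p)$ algorithms for $\log$ and $\exp$ \cite{mca}, and the final multiplication is absorbed in the same bound; the conversion of the exact rational $S_K$ to a $p$-bit floating-point number takes $\widetilde O(p)$ via fast division.

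The main obstacle is step (iii): checking carefully that binary splitting retains its quasi-linear complexity when $z$ is algebraic of degree $d > 1$, because the ``coefficients'' in the recurrence now live in $\QQ(z)$ rather than $\QQ$, and one must verify that multiplication in this fixed number field only inflates bit sizes by a constant factor per step, so that the product tree of depth $O(\log K)$ still has root entries of bit size $\widetilde O(K)$. Once this is in place, the logarithmic factors work out as in the rational case and the stated $O(M(p)\log^2 p)$ bound follows; the dependence on $z$ is absorbed into the implicit constant since $z$ is fixed.
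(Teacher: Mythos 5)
Your proof is correct and follows essentially the same route the paper indicates: the paper attributes Theorem~\ref{thm:complexityalg} to Brent and, in Section~\ref{sect:hypergeometric}, describes exactly the incomplete-gamma decomposition $\Gamma(z) = \gamma(z,N) + \Gamma(z,N)$ with the convergent ${}_1F_1$ series for $\gamma(z,N)$, evaluated by binary splitting, as one viable path. Your key observation (that for fixed algebraic $z$ the hypergeometric recurrence lives in the fixed number field $\QQ(z)$, so binary splitting retains its quasi-linear bit complexity) is precisely what makes the argument work, and your complexity bookkeeping $O(M(p\log p)\log p) = O(M(p)\log^2 p)$ is sound.
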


\begin{theorem}
Let $z \not \in \{0, -1, -2, \ldots\}$ be a fixed complex number.
Then $\Gamma(z)$ can be computed to $p$-bit accuracy using $O(p^{1/2} \log^2 p)$ arithmetic operations,
or $\bigOtilde(p^{3/2})$ bit operations
(assuming that we can compute $z$ to $p$-bit accuracy at least this fast).
\label{thm:complexitygen}
\end{theorem}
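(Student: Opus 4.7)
The plan is to go through the hypergeometric representation (Section~\ref{sect:hypergeometric}) using the incomplete gamma decomposition
\begin{equation*}
\Gamma(z) \;=\; \gamma(z,N) \,+\, \Gamma(z,N), \qquad \gamma(z,N) \;=\; N^z e^{-N} \sum_{k=0}^{\infty} \frac{N^k}{(z)_{k+1}},
\end{equation*}
with a truncation parameter $N$ taken proportional to $p$. For this choice, $\gamma(z,N)$ is approximated to $p$-bit absolute accuracy by a partial sum of $\Theta(p)$ terms, and $\Gamma(z,N)$ is computed to the same accuracy from either its convergent continued fraction or from its optimally-truncated asymptotic series $\Gamma(z,N)\sim N^{z-1}e^{-N}\sum_k(z-1)\cdots(z-k)/N^k$, again requiring $\Theta(p)$ hypergeometric-like terms (matching the ``$1.49p$'' estimate in Table~\ref{tab:algoverview}). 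The auxiliary transcendental factors $N^z = e^{z\log N}$ and $e^{-N}$ are computed to $p$-bit accuracy in $\widetilde{O}(p)$ bit operations (equivalently $O(\log^2 p)$ arithmetic operations via AGM-like methods) and are therefore lower-order contributions.

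The key algorithmic step is to evaluate each of these two truncated sums using rectangular splitting (Algorithm~\ref{alg:rfrs}). Consecutive terms satisfy a first-order linear recurrence whose coefficients are rational functions in the index $k$ of bounded degree, with the argument $z$ entering only as an additive shift. Viewing the partial sum as a product of $n=\Theta(p)$ two-by-two matrices with integer-polynomial entries in $k$, rectangular splitting with parameter $m=\lceil\sqrt{n}\rceil$ reduces the number of multiplications between two $p$-bit quantities to $m+n/m = O(\sqrt{p})$, while the $O(p)$ remaining operations involve only multiplications or additions of a $p$-bit number with a polynomially-bounded integer. In the nonscalar arithmetic model this gives the claimed $O(p^{1/2}\log^2 p)$ bound; multiplying by $M(p)=\widetilde{O}(p)$ gives $\widetilde{O}(p^{3/2})$ for the bit complexity of the nonscalar multiplications.

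The main obstacle is the bit-complexity bookkeeping for the so-called scalar operations, since naively $O(p)$ multiplications of a $p$-bit number by an $O(\log p)$-bit integer would contribute $\widetilde{O}(p^2)$, exceeding the target. The remedy, which is the technical core of the argument, is twofold: first, running with variable precision so that the $k$-th tail term is computed only to a precision matching its magnitude (reducing the aggregate scalar precision by a factor logarithmic in $p$), and second, batching consecutive scalar updates along each rectangular-splitting row into a single polynomial-by-scalar product that can be absorbed into the adjoining full multiplication. A secondary but necessary step is a rigorous bound on rounding-error propagation sufficient to show that a working precision of $p+O(\log p)$ bits suffices throughout, together with a verification (following the tail-bound analysis of Section~\ref{sect:stirlbounds} adapted to the hypergeometric setting) that truncating at $\Theta(p)$ terms really does deliver the claimed $p$-bit accuracy uniformly in the fixed parameter $z$.
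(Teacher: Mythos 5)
Your choice of \emph{rectangular splitting} as the key algorithmic device does not actually yield the claimed complexity, and the paper explicitly flags this: just after displaying the matrix product for the partial sums, it states that fast multipoint evaluation is the tool used for Theorem~\ref{thm:complexitygen}, and that ``rectangular splitting is also viable \ldots but does not improve the asymptotic bit complexity.'' Rectangular splitting reduces only the number of \emph{nonscalar} multiplications to $O(\sqrt{n})$; the remaining $\Theta(n)=\Theta(p)$ scalar updates each manipulate $\Theta(p)$-bit operands (the precomputed powers $z^j$ are full-precision even though the coefficients $c_j$ are small), so the aggregate bit cost stays $\Theta(p^2)$. Neither of your two proposed remedies closes this gap. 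Variable precision gives only a constant-factor saving: the terms decay geometrically, so term $k$ still needs working precision $\Theta(p(1-k/n))$, and $\sum_{k<n} p(1-k/n)=\Theta(np)=\Theta(p^2)$. The ``batching into a single polynomial-by-scalar product'' cannot work either, because the scalar updates within one block accumulate $\sum_j c_j z^j$ against \emph{distinct} $p$-bit numbers $z^j$; simply reading them is $\Theta(\sqrt{n}\,p)$ bits per block, giving $\Theta(np)$ in total, independently of how the integer coefficients are packed.

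The bound $O(p^{1/2}\log^2 p)$ in the theorem counts \emph{all} arithmetic operations, not just nonscalar ones, and the $\log^2 p$ factor is the signature of the subproduct/remainder tree in fast multipoint evaluation. The correct route (and the one the paper references, via \cite{Borwein1987,borwein1987pi} and the ``asymptotically fast methods'' paragraph of Section~\ref{sect:rising}) is: expand the polynomial $f(x)=\prod_{j<m}r(x+j)$ by binary splitting in $\widetilde{O}(m)$ ops, then evaluate $f$ simultaneously at $m$ equispaced points via a remainder tree, with $m\approx\sqrt{n}$. This brings the \emph{total} arithmetic-operation count down to $O(\sqrt{p}\log^2 p)$, and even after paying the extra guard bits needed because fast multipoint evaluation is numerically unstable over $\RR$ (the Köhler--Ziegler issue the paper cites), the bit complexity remains $\widetilde{O}(p^{3/2})$. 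Your setup of the incomplete-gamma decomposition and the tail/term-count analysis is fine; the structural error is in the evaluation algorithm.
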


There are several
alternative hypergeometric representations
of the gamma function, any of which can
be used to prove Theorems~\ref{thm:complexityalg} and \ref{thm:complexitygen}.
All known series of this type
converge at an ordinary geometric rate, requiring
a number of terms proportional to $p$,
and the subquadratic complexity
simply results from exploiting the structure of hypergeometric terms.
In fact, the constant-factor overheads
are generally worse than those for the best global and local methods
discussed earlier, and the subquadratic hypergeometric
methods therefore only become competitive at extremely high precision
(thousands of digits or more)
where asymptotically fast arithmetic truly kicks in.
One of our goals in this section will be to
analyze constant factors
to determine the most viable hypergeometric formulas.

We will generally assume that the argument
of the gamma function is held fixed
in order to analyze asymptotics with respect to the precision.
For large enough $p$,
the reduced-complexity hypergeometric methods should be faster
than the Stirling series in some region around the origin
(similar to Figure~\ref{fig:taylordiamond})
whose boundaries grow
with $p$, but we will
not analyze the asymptotics of this region.

\subsection{General techniques and error bounds}

We recall that any hypergeometric series can be expressed in terms of
the generalized hypergeometric function
\begin{equation}
{}_pF_q(a_1,\ldots,a_p;b_1,\ldots,b_q;z) =
\sum_{n=0}^\infty \frac{(a_1)_n\dots(a_p)_n}{(b_1)_n\dots(b_q)_n} \frac {z^n} {n!}.
\label{eq:hyppfq}
\end{equation}

If $p < q + 1$ or if $p = q + 1$ and $|z| < 1$,
the series converges at the same asymptotic rate
as the sum of $z^n / (n!)^{q + 1 - p}$.
Otherwise, it diverges
at the rate of $z^n (n!)^{p - q - 1}$
and must be interpreted in the sense of analytic continuation
or asymptotic resummation.
We will define the ${}_2F_0$ function
via Kummer's $U$-function as ${}_2F_0(a,b,z) = (-1/z)^a U(a, a-b+1, -1/z)$.

To keep things brief, we will analyze the asymptotic efficiency
of most formulas below without giving
explicit error bounds.
Tail bounds for convergent series can be obtained easily
by comparing with geometric series;
see for example \cite[Theorem 1]{Johansson2019hypergeometric},
which also deals with parameter derivatives.
Tail bounds for ${}_2F_0(a,b,z)$
valid for complex $a,b,z$ are given by
Olver~\cite[section 13.7]{DLMF} \cite{Olver1965};
in the special case of positive real $a, b$ and negative real $z$,
the error is bounded by first omitted term.

The fast methods for evaluating hypergeometric series are
generalizations of the methods for rising factorials discussed in section~\ref{sect:rising}.
Denoting the $n$th term in \eqref{eq:hyppfq} by $t(n)$ and letting $r(n) = t(n) / t(n-1)$
denote the term ratio (a rational function of $n$), the $n$th partial sum
can be written as a matrix product
\begin{equation}
\begin{pmatrix} r(n) & 0 \\ 1 & 1 \end{pmatrix}
\begin{pmatrix} r(n-1) & 0 \\ 1 & 1 \end{pmatrix}
\cdots
\begin{pmatrix} r(1) & 0 \\ 1 & 1 \end{pmatrix}
=
\begin{pmatrix} \prod_{k=0}^{n} r(k) & 0 \\ \sum_{k=0}^{n-1} \prod_{j=0}^k r(j) & 1 \end{pmatrix}.
\end{equation}

This product can be computed using binary splitting (used in Theorem~\ref{thm:complexityalg})
or fast multipoint evaluation (used in Theorem~\ref{thm:complexitygen});
rectangular splitting is also viable~\cite{Johansson2014rectangular}, but does not
improve the asymptotic bit complexity.

Hypergeometric sequences and functions should
be understood as a special case of holonomic sequences and functions.
A sequence is called holonomic if it
satisfies a linear recurrence relation of order $r$ with rational
function coefficients (a hypergeometric sequence has order $r = 1$);
the sum of an order-$r$ sequence can be evaluated
as a product of size $(r+1) \times (r+1)$ matrices,
using the same methods.

\subsubsection{Derivatives}

Any hypergeometric representation of the gamma function
can be differentiated to yield representations of the digamma function $\psi(z)$
and higher derivatives.
Indeed, for any fixed $n \ge 0$, Theorems~\ref{thm:complexityalg} and \ref{thm:complexitygen}
are valid with $\Gamma^{(n)}(z)$ or 
$\psi^{(n)}(z)$ instead of $\Gamma(z)$.\footnote{At least in the sense of achieving $p$-bit absolute accuracy;
if $z$ is given only as a black box approximation program, we must exclude points where $\psi^{(n)}(z) = 0$
in order to guarantee convergence to a relative accuracy.}
The differentiated series are holonomic but not generally hypergeometric.

Explicit formulas for parameter derivatives of hypergeometric
functions are typically not pretty,
but as discussed in~\cite{Johansson2019hypergeometric},
an implementation can simply use the original formulas
in combination with
power series arithmetic.
When power series arithmetic is combined with binary splitting
this also leads to asymptotically fast algorithms
for higher derivatives;
in particular, the complexity is
$\bigOtilde(n^2)$ for computing the first $n$ derivatives
simultaneously to precision $p = \bigOtilde(n)$.

Some of the following formulas have removable singularities at
certain points; the appropriate limits can be computed
in the same manner as derivatives.

\subsection{Using incomplete gamma functions}


One well-known method to compute the gamma function \cite{Brent1976,Brent1978,borwein1987pi}
uses the decomposition
\begin{equation}
\Gamma(z) = \gamma(z,N) + \Gamma(z,N)
\label{eq:gammadecomp}
\end{equation}
where
\begin{equation}
\gamma(z,N) = \int_0^N t^{z-1} e^{-t} dt, \quad \quad \Gamma(z,N) = \int_N^{\infty} t^{z-1} e^{-t} dt
\end{equation}
are the lower and upper incomplete gamma functions (assuming $N > 0$, $\operatorname{Re}(z) > 0$).
The free parameter $N$ may be taken to be an integer
for efficient evaluation.

Since $\Gamma(z,N) \approx e^{-N}$, we have 
$\Gamma(z) \approx \gamma(z,N)$ to within $2^{-p}$
if $N \approx \log(2) p$.

The lower incomplete gamma function can be computed using
\begin{equation}
\gamma(z,N) \,=\, \frac{N^z e^{-N}}{z} {}_1F_1(1, z+1, N) \,=\, \frac{N^z e^{-N}}{z} \sum_{k=0}^{\infty} \frac{N^k}{(z+1)_k}.
\label{eq:lowerseries}
\end{equation}
If $N \approx \log(2) p$, then an asymptotic analysis
using $(z+1)_k \approx k! \approx (k/e)^k$
shows that the total error in the approximation of $\Gamma(z)$ is of order $2^{-p}$
when the series is truncated after $e \log(2) p \approx 1.88417 p$ terms.

The formula \eqref{eq:gammadecomp} becomes slightly more efficient
if we compute
$\Gamma(z,N)$ instead of neglecting this term, using
the asymptotic expansion
\begin{equation}
\Gamma(z,N) \, = \, e^{-N} N^{z-1} {}_2F_0\left(1, 1-z, -\tfrac{1}{N}\right) \,\sim \, e^{-N} N^{z-1} \sum_{k=0}^{\infty} \frac{(1-z)_k}{(-N)^k}.
\label{eq:upperseries}
\end{equation}

Let $N = \alpha \log(2) \, p$ where $\alpha$ is a tuning parameter,
which will be used to control the number of terms in $\eqref{eq:lowerseries}$
and $\eqref{eq:upperseries}$.
Since $\eqref{eq:upperseries}$ is a divergent series,
the accuracy that can be achieved is limited for any fixed $N$;
we will need $\alpha \ge 0.5$ to ensure that we can reach
an error of $2^{-p}$.

Suppose that $\eqref{eq:lowerseries}$ is truncated after $K$ terms
and that $\eqref{eq:upperseries}$ is truncated after $L$ terms,
and that the truncation errors are of the same order of magnitude
as the $K$-th and $L$-th terms of the respective series.
Then the errors $\varepsilon_{\gamma}$ and $\varepsilon_{\Gamma}$
for the respective terms in \eqref{eq:gammadecomp} are of order
\begin{equation}
\varepsilon_{\gamma} \approx e^{-N} \frac{N^K}{K!}, \quad \varepsilon_{\Gamma} \approx e^{-N} \frac{L!}{N^L}.
\end{equation}
We can solve $\varepsilon_{\gamma} = \varepsilon_{\Gamma} = 2^{-p}$ asymptotically
for $K$ and $L$ by substituting $k! \approx (k/e)^k$,
taking logarithms, and employing
the Lambert $W$ function.
We obtain
\begin{equation}
K = \left(\frac{1-\alpha}{W_0\!\left(\frac{1-\alpha}{\alpha e}\right)} \right) \log(2) p,
\quad L = \left( \frac{\alpha-1}{W_{-1}\!\left(\frac{\alpha-1}{\alpha e}\right)} \right) \log(2) p
\end{equation}
where the $W_{-1}$ branch of the Lambert $W$ function is used for $L$ since we want
the solution where the terms of the asymptotic series are decreasing
and not increasing.
The function $W_{-1}(x)$ is real for $-e^{-1} \le x < 0$;
the lower inequality reflects the constraint $\alpha \ge 0.5$
required to achieve sufficient accuracy with the asymptotic series.

In the limit $\alpha \to 1$, we obtain $K = e \log(2) p \approx 1.88417 p$ and $L = 0$, minimizing
the number of terms of the asymptotic series for $\Gamma(z,N)$.

In the limit $\alpha \to 0.5$, we obtain $K = p \log(2) / (2 W(1/e)) \approx 1.24459 p$
and $L = p \log(2) / 2 \approx 0.346574 p$.
This minimizes the number of terms of the series of $\gamma(z,N)$.
The total number of terms for both series is $K + L \approx 1.59116 p$.

We can minimize the total number of terms $K + L$ numerically;
the minimum occurs at $\alpha = \alpha_0 \approx 0.546904$,
where $K \approx 1.30970p$, $L \approx 0.179996p$ and $K + L \approx 1.48970p$.
In practice, it is probably best to take a smaller $0.5 < \alpha < \alpha_{0}$
since the series for $\gamma(z,N)$ must be computed with higher precision
than that for $\Gamma(z,N)$ and thus requires more work for each term.

\subsubsection{More points}

The method using incomplete gamma functions amounts
to computing the defining
integral~\eqref{eq:gammadef} using term-by-term integration of
series expansions at $0$ and $\infty$.
An idea is to expand the integrand at multiple points
$0 < t_1 < \ldots < t_n < \infty$
using
\begin{equation}
(t+x)^{z-1} e^{-(t+x)} = t^{z-1} e^{-t} \sum_{n=0}^{\infty} {}_1F_1(-n, z-n, t) {z-1 \choose n} t^{-n} x^n
\end{equation}
in which the series is holonomic.
Unfortunately, this does not seem to lead to an improvement;
the reduction in terms at $0$ and $\infty$ will not
be significant unless we use vastly more terms in the middle.

\subsection{Using Bessel functions}

An alternative method uses the
modified Bessel function $I_{\nu}$.\footnote{The ordinary Bessel function $J_{\nu}$ will also do the job, but less efficiently since
the corresponding convergent expansion suffers from catastrophic cancellation.}
We will need the convergent expansion
\begin{equation}
\Gamma(\nu+1) I_{\nu}(x) = \left(\frac{x}{2}\right)^{\nu} {}_0F_1\left(\nu+1, \frac{x^2}{4}\right)
=
\left(\frac{x}{2}\right)^{\nu} \sum_{k=0}^{\infty} \frac{1}{k! (\nu+1)_k} \left(\frac{x^2}{4}\right)^{k},
\label{eq:besseliconvergent}
\end{equation}
and the asymptotic expansion
\begin{equation}
I_{\nu}(x) = \frac{1}{\sqrt{2 \pi x}} \left(
    e^x {}_2F_0\left(\tfrac{1}{2}+\nu, \tfrac{1}{2}-\nu, \tfrac{1}{2x}\right) -
    i e^{-x-\pi i \nu} {}_2F_0\left(\tfrac{1}{2}+\nu, \tfrac{1}{2}-\nu, -\tfrac{1}{2x}\right) \right)
\label{eq:besseliasymptotic}
\end{equation}
valid at least for $x > 0$.
We obtain $\Gamma(z+1)$ if we
choose a large enough $N$, compute
$\Gamma(z+1) I_{z}(N)$ using \eqref{eq:besseliconvergent}
and $I_{z}(N)$ using \eqref{eq:besseliasymptotic},
and divide.

Suppose that $\eqref{eq:besseliconvergent}$ is truncated after $K$ terms
and that the series with prefactor $e^{x}$ in $\eqref{eq:besseliasymptotic}$ is truncated after $L$ terms,
and that the truncation errors are of the same order of magnitude
as the $K$-th and $L$-th terms of the respective series (we can ignore the ${}_2F_0$ series
in $\eqref{eq:besseliasymptotic}$ with exponentially small prefactor).
Then the \emph{relative} errors $\varepsilon_{0}$ and $\varepsilon_{\infty}$
are of order
\begin{equation}
\varepsilon_{0} \approx \frac{e^{-N}}{(K!)^2} \left(\frac{N^2}{4}\right)^K, \quad \varepsilon_{\infty} \approx \frac{L!}{(2N)^L}.
\end{equation}
We assume that $N \approx \alpha \log(2) p$ for some tuning parameter $\alpha$.
Solving asymptotically gives
\begin{equation}
K = \left(\frac{1-\alpha}{2 W_0\!\left(\frac{1-\alpha}{\alpha e}\right)}\right) \log(2) p, \quad
L = \left(-\frac{1}{W_{-1}\!\left(-\frac{1}{2 \alpha e}\right)}\right) \log(2) p.
\end{equation}
Real-valuedness of $W_{-1}(x)$
reflects the constraint $\alpha \ge 0.5$
required to achieve sufficient accuracy with the asymptotic series.

We can minimize $K$ by setting $\alpha = 0.5$, giving
$K \approx p \log(2) / (4 W(1/e)) \approx 0.622294 p$
and $L \approx \log(2) p \approx 0.693147 p$,
with $K + L \approx 1.31544 p$.

Minimizing $K + L$ gives $\alpha = \alpha_0 \approx 0.639845$,
with $K \approx 0.717353 p$, $L \approx 0.369583 p$ and $K + L \approx 1.08694 p$.
This is not necessarily the optimal strategy since the
two series have somewhat different terms; a more
realistic analysis should account for the relative costs
of computing the terms of each series.

\subsubsection{Alternative Bessel function algorithm}
We can exploit the connection formula
\begin{equation}
I_{-\nu} = I_{\nu} +  \frac{2 \sin(\pi \nu)}{\pi} K_{\nu}(x).
\end{equation}
involving the modified Bessel function of the second kind
\begin{equation}
K_{\nu}(x) = \left(\frac{2 x}{\pi}\right)^{-1/2} 
    e^{-x} {}_2F_0\left(\tfrac{1}{2}+\nu, \tfrac{1}{2}-\nu, -\tfrac{1}{2x}\right).
\end{equation}
Let $A = \Gamma(z+1) I_z(N)$ and $B = \Gamma(-z+1) I_{-z}(N)$,
both of which can be computed using \eqref{eq:besseliconvergent}.
An application of the connection formula and
the reflection formula for the gamma function gives
\begin{equation}
\label{eq:gammabessel2}
\Gamma(z)^2 = \frac{A}{B} \frac{\pi}{z \sin(\pi z)} \left(1 + \frac{2}{\pi \sin(\pi z)} \frac{K_z(N)}{I_z(N)}\right).
\end{equation}
To compute $\Gamma(z)$, we can determine the correct
sign of the square root from a low-precision
approximation of $\Gamma(z)$ obtained by a different method
(when $z > 0$, we simply take the positive root).

We can choose $N \approx \tfrac{1}{2} \log(2) p$ so that $K_z(N) / I_z(N) \approx e^{-2N}$ is negligible,
resulting in the approximation
\begin{equation}
\Gamma(z) \approx \pm \sqrt{\frac{A}{B} \frac{\pi}{z \sin(\pi z)} }.
\label{eq:potter}
\end{equation}
In this case, we need $K \approx \tfrac{1}{2} e \log(2) p$ terms
for each of the series $A$ and $B$, or $2K \approx e \log(2) p \approx 1.88417 p$ terms in total.
Bailey \cite{bailey2021} credits
the ``very efficient but little-known formula''~\eqref{eq:potter}
to Potter~\cite{potter2014} and reports using it
in MPFUN package.\footnote{Bailey erroneously writes \eqref{eq:potter} with an equals sign.}

Alternatively, we can choose a somewhat smaller $N$ and compute
$K_z(N) / I_z(N)$ using asymptotic expansions.
Setting $N \approx \alpha \log(2) p$ and
solving asymptotically for
$e^{-2N} \frac{L!}{(2N)^L} = 2^{-p}$
gives
\begin{equation}
K = \left(\frac{1-\alpha}{2 W_0\!\left(\frac{1-\alpha}{\alpha e}\right)}\right) \log(2) p, \quad
L = \left(\frac{2\alpha-1}{W_{-1}\!\left(\frac{2 \alpha-1}{2 \alpha e}\right)}\right) \log(2) p.
\end{equation}
Because of the asymptotic series, we require $\alpha \ge 0.25$.

Minimizing $K$ gives $\alpha = 0.25$ with
$K \approx 3 \log(2) p / (8 W(3/e)) \approx 0.430672 p$
and $L \approx \tfrac{1}{2} \log(2) p \approx 0.346574 p$,
with $2K + 2L \approx 1.55449$ terms in total.

Minimizing $K + L$ gives $\alpha \approx 0.358032$,
where we need $2K + 2L \approx 1.16604 p$ terms in total.
As noted previously, this is not necessarily the
optimum in practice since the series have different terms
and the asymptotic series require lower precision.

\subsubsection{A third Bessel-type algorithm}

Another interesting approximate formula
is given by Smith~\cite[eq.\ 94]{smith2006gamma},
here slightly rewritten:
\begin{equation}
\Gamma(z)^2 \approx N^{2z} \left[- \frac{2 (\gamma + \log(N))}{z} \, {}_1F_2(z, 1, 1+z, N^2) + S_z(N) \right],
\end{equation}
\begin{equation}
S_z(N) = \sum_{n=0}^{\infty} \frac{(-N)^{2n}}{n!^2} \left( \frac{1}{(z+n)^2} + \frac{2 H_n}{n+z} \right), \quad H_n = \sum_{k=1}^n \frac{1}{k}.
\end{equation}
The error is around $e^{-2N}$, though Smith does not provide a derivation
of the formula or an error bound (we expect that this can be obtained
by differentiating some Bessel function identity with respect to parameters).
The function $S_z(N)$ is not hypergeometric, but it is holonomic.
Smith's formula appears to be less efficient
than the other formulas considered above, but it hints that there may be a space
of hypergeometric or holonomic approximation formulas for the gamma function
that has not yet been fully explored.


\subsection{Incomplete gamma versus Bessel}

Which hypergeometric method is superior: the one using incomplete
gamma functions or either algorithm using Bessel functions?
The first Bessel function method requires the fewest
terms, but the terms of the incomplete gamma functions
are simpler. The winner may ultimately depend on implementation details.

For evaluating $\Gamma(z)$ with $z \in \RR$
using the fast multipoint evaluation strategy ($\bigOtilde(p^{3/2})$ complexity),
the incomplete gamma function method runs around 30\% faster than
the Bessel function methods when implemented in Arb.
For an empirical comparison between hypergeometric series
and the Stirling series, see Section~\ref{sect:implresults}.

\subsection{Products and quotients}

Certain products and ratios of gamma functions
can be evaluated without computing
each gamma function separately.
Smith \cite{smith2006gamma} gives a collection
of formulas based on closed-form evaluations
of the Gauss hypergeometric function
${}_2F_1(a,b; c; x)$ at fixed points $|x| < 1$,
where we need $-1/\log_2(|x|)$ terms per bit of precision.
Notable examples include the beta function
\begin{equation}
B(a,b) = \frac{\Gamma(a) \Gamma(b)}{\Gamma(a+b)} = \frac{2}{2a+2b-1} \, \frac{{}_2F_1(1, 2a+2b-1;\, a+b+\tfrac{1}{2};\, \tfrac{1}{2})}{{}_2F_1(2a-1, 2b-1;\, a+b-\tfrac{1}{2};\, \tfrac{1}{2})},
\label{eq:smithbeta}
\end{equation}
the generalized reflection formula
\begin{equation}
\Gamma(M-z) \Gamma(M+z) = \frac{\sqrt{\pi} \, \Gamma(M+\tfrac{1}{2})}{{}_2F_1(2M-1-2z, 2M-1+2z;\, 2M-\tfrac{1}{2};\, \tfrac{1}{2})},
\label{eq:smithreflection}
\end{equation}
and the generalized shift (typo corrected)
\begin{equation}
\frac{\Gamma(z+S)}{\Gamma(z)} = \frac{1}{\sqrt{\pi}} \, \Gamma(S+\tfrac{1}{2}) \, {}_2F_1(2z-1, 2S;\, z+S;\, \tfrac{1}{2}).
\label{eq:smithshift}
\end{equation}

Each of the above ${}_2F_1$ series requires 1 term per bit of precision.
In the last two formulas, the gamma function values appearing as prefactors in the right-hand sides can
be precomputed for fixed $M$ or $S$; elliptic integral
identities (see below) can be used when $M+\tfrac{1}{2}$ or $S+\tfrac{1}{2}$
is a fraction $k/24$.
Even more rapidly convergent series exist for the special shift $1/6$~\cite{smith2006gamma,ekhad2004forty}:
\begin{equation}
\frac{\Gamma(z+\tfrac{1}{6})}{\Gamma(z)} = \frac{3 \, \Gamma(\tfrac{2}{3})}{2 \sqrt{3 \pi} \sqrt[3]{2}} \left(\frac{16}{27}\right)^{z}
{}_2F_1(\tfrac{1}{3} - 2z, \tfrac{2}{3} - 2z;\, 1-z;\, -\tfrac{1}{8}),
\end{equation}
\begin{equation}
\frac{\Gamma(z+\tfrac{1}{6})}{\Gamma(z)} = \frac{\Gamma(\tfrac{3}{4})}{\Gamma(\tfrac{7}{12}) \sqrt{2} \sin(\tfrac{5 \pi}{12})} (3-24 \eta)^{1/4-z} 
{}_2F_1(\tfrac{1}{2} - 2z, \tfrac{1}{2} - 2z;\, 1-z;\, \eta),
\end{equation}
where $\eta = (2 - \sqrt{3})/4 \approx 0.0669873$.

Such formulas are potentially useful for evaluating the gamma
function on a grid of equidistant points.

Smith proposes an algorithm
for reducing the argument of $\Gamma(1+z)$
to the strip $0 \le \operatorname{Re}(z) \le 1/48$
using generalized reflections or shifts (\ref{eq:smithreflection}, \ref{eq:smithshift})
and evaluation of elliptic integrals,
or more generally to an arbitrarily small box
$0 \le \operatorname{Re}(z) \le 1/N$,
$0 \le \operatorname{Im}(z) \le 1/N$
with the help of a table of precomputed gamma function values.
This can be used to improve convergence of methods such as Taylor series,
but it is not clear whether it is worth the overhead over more direct algorithms.

\subsection{Elliptic integrals and special values}

The constants $\Gamma(k/24)$ can be expressed in terms
of the complete elliptic integral
\begin{equation}
K(m) = \tfrac{1}{2} \pi \, {}_2F_1(\tfrac{1}{2}, \tfrac{1}{2}; \, 1; \, m)
\end{equation}
evaluated at fixed
algebraic points~\cite{chowla1967,BorweinZucker1992}.
For example,
\begin{align}
\Gamma(\tfrac{1}{2}) &= [2 K(0)]^{1/2} = \sqrt{\pi}, \\
\Gamma(\tfrac{1}{3}) &= 2^{7/9} 3^{-1/12} \pi^{1/3} [K(\tfrac{1}{2} - \tfrac{1}{4} \sqrt{3}))]^{1/3}, \\
\Gamma(\tfrac{1}{4}) &= 2 \pi^{1/4} [K(\tfrac{1}{2})]^{1/2}.
\end{align}

The $K(m)$ values can be evaluated using binary splitting.
Alternatively, $K(m)$ can be expressed in terms
of the arithmetic-geometric mean
\begin{equation}
\operatorname{agm}(a_0,b_0) = \lim_{n \to \infty} a_n = \lim_{n \to \infty} b_n, \quad \left(a_{n+1}, b_{n+1}\right) = \left(\tfrac{a_n+b_n}{2}, \sqrt{a_n b_n}\right)
\end{equation}
which
converges to $p$-bit accuracy after only $O(\log p)$ iterations.

We have, for example,
\begin{equation}
\Gamma(\tfrac{1}{4}) = \sqrt{\frac{(2\pi)^{3/2}}{\operatorname{agm}(1, \sqrt 2)}},
\end{equation}
and for $\pi$, the Brent-Salamin algorithm\footnote{Independently discovered by Brent~\cite{brent1976multiple} and Salamin~\cite{Salamin1976} in 1975;
also known as the Gauss–Legendre algorithm, as if the names
of Gauss and Legendre were not already overloaded.}
\begin{equation}
\pi = \Gamma(\tfrac{1}{2})^2 = \frac{4 (\operatorname{agm}(1, 1/\sqrt 2))^2}{1 - \sum_{n=0}^{\infty} 2^n (a_{n} - b_{n})^2}.
\end{equation}

%

The arithmetic-geometric mean iteration has
better arithmetic complexity than series evaluation,
and better bit complexity by a factor $O(\log p)$.
It is not necessarily faster than
binary splitting in practice if the hypergeometric series
have favorable constant factors.
All recent record computations~\cite{Yee2021} of $\pi = \Gamma(1/2)^2$ have used
the Chudnovsky series~\cite{ChudnovskyChudnovsky1988}
\begin{equation}
\frac{1}{\pi} = 12 \sum_{n=0}^{\infty} \frac{(-1)^n (6n)! (13591409 + 545140134n)}{(3n)!(n!)^3 640320^{3n + 3/2}},
\end{equation}
less elegantly written as a linear combination of two ${}_3F_2$ functions,
which requires only $1 / \log_2(640320^3 / 1728) \approx 0.0212$ terms
per bit, i.e.\ adding more than 14 decimal digits per term.

In a similar vein, Brown \cite{brown2009algorithm} has derived
several rapidly convergent hypergeometric series for $\Gamma(1/3)$ and $\Gamma(1/4)$,
including the elegant pair
\begin{align}
[\Gamma(\tfrac{1}{4})]^4 &= \frac{32 \pi^3}{\sqrt{33}} \, {}_3F_2 \left(\tfrac{1}{2}, \tfrac{1}{6}, \tfrac{5}{6}; \, 1, 1; \, \tfrac{8}{1331}\right), \\
[\Gamma(\tfrac{1}{3})]^6 &= \frac{12 \pi^4}{\sqrt{10}} \, {}_3F_2 \left(\tfrac{1}{2}, \tfrac{1}{6}, \tfrac{5}{6}; \, 1, 1; \, -\tfrac{9}{64000}\right), \label{eq:brown13}
\end{align}
requiring only $0.136$ and $0.0782$ terms per bit, respectively.
The formula \eqref{eq:brown13} is used in Arb to compute $\Gamma(1/3)$
as it was found to be faster than the
arithmetic-geometric mean.

Products of $\Gamma(x)$ evaluated at fixed rational $x$ can be algebraic numbers.
Using such identities,
Vidunas~\cite{vidunas2005expressions} shows that
any $\Gamma(k/n)$ where $n$ divides 24 or 60 can be expressed
in terms of algebraic numbers, rational powers of $\pi$,
and the set of numbers $\Gamma(x)$ with $x \in \{1/3, 1/4, 1/5, 2/5, 1/8, 1/15, 1/20, 1/24, 1/60, 7/60\}$.

An alternative approach to numerical calculation
of special gamma function
values is to use the $q$-series of elliptic modular forms.
For example,
\begin{equation}
\Gamma(\tfrac{1}{4}) = \sqrt{2} \pi^{3/4} \theta(i) = \sqrt{2} \pi^{3/4} \vartheta(e^{-\pi})
\label{eq:g14theta}
\end{equation}
in terms of the Jacobi theta function
\begin{equation}
\theta(\tau) = \vartheta(q) = 1 + 2 \sum_{n=1}^{\infty} q^{n^2}, \quad q = e^{\pi i \tau}.
\end{equation}

Using a simple recurrence relation for the powers of $q$,
the theta series can be evaluated using $O(p^{1/2})$ arithmetic
operations, which is better than the arithmetic complexity for hypergeometric series (though not as
good as the arithmetic-geometric mean).
This complexity can in fact be improved even further~\cite{enge2018short,Nogneng2017}.
The bit complexity is unfortunately not quasilinear in $p$ (binary splitting is not useful here),
but the implied constant factors are small.
Using \eqref{eq:g14theta}, in which $q \approx 0.0432$,
we need only $0.470 p^{1/2}$ terms for $p$-bit accuracy.
We can accelerate the convergence
further using modular transformations and special values
of the elliptic lambda function~\cite{Yi2004};
for example,
\begin{equation}
\Gamma(\tfrac{1}{4}) = \pi^{3/4} \left[\frac{6 \sqrt{5} \sqrt{1 + \sqrt{5}}}{3 + \sqrt{5} + \left(\sqrt{3} + \sqrt{5} + {60}^{1 / 4}\right) {\left(2 + \sqrt{3}\right)}^{1 / 3}}\right] \vartheta(e^{-45 \pi})
\label{eq:theta45}
\end{equation}
where $e^{-45 \pi} \approx 4 \cdot 10^{-62}$,
requiring $0.0701 p^{1/2}$ terms.
The first 128 terms of this theta series suffice to determine $\Gamma(1/4)$ to one million decimal digits.\footnote{We are cheating by ignoring
the cost of computing $\pi$ and $e^{\pi}$, which are essentially as hard to compute as $\Gamma(1/4)$. The numbers $\Gamma(1/4)$, $\pi$ and $e^{\pi}$ are algebraically independent by a famous result of Nesterenko~\cite{Nesterenko1996}.
Formulas like~\eqref{eq:theta45} suggest that these numbers are in some sense ``nearly algebraically dependent'';
given any two of the numbers, we can construct extremely accurate approximations of the third using algebraic operations.}

\subsection{The digamma function}

The digamma function $\psi(z)$ can be computed
by evaluating a formula for $\Gamma(z)$ using length-two power series arithmetic (dual numbers),
but it is useful to have more direct formulas.

The best available method for Euler's constant $-\psi(1) = \gamma$
is the Brent-McMillan
algorithm~\cite{BrentMcMillan1980} based on Bessel functions. It uses the formula
\begin{equation}
\gamma = \frac{S_0(2N) - K_0(2N)}{I_0(2N)} - \log(N)
\label{eq:brentmcmillan}
\end{equation}
where the right-hand side should be evaluated using binary splitting
applied to the three series (the first of which is holonomic)
\begin{equation}
S_0(x) = \sum_{k=0}^{\infty} \frac{H_k}{(k!)^2} \left(\frac{x}{2}\right)^{2k}, \quad
I_0(x) = \sum_{k=0}^{\infty} \frac{1}{(k!)^2} \left(\frac{x}{2}\right)^{2k},
\end{equation}
\begin{equation}
I_0(x) K_0(x) \sim \frac{1}{2x} \sum_{k=0}^{\infty} \frac{[(2k)!]^3}{(k!)^4 8^{2k} x^{2k}}.
\end{equation}

A convenient error bound is available: if the first two series are summed up to the $k = K-1$ term
inclusive and the third series is summed up to $k = 2K-1$ inclusive,
where $K \ge \alpha N + 1$, $\alpha = 3 / W_0(3/e) \approx 4.97063$,
then the error in \eqref{eq:brentmcmillan} is bounded by $24 e^{-8N}$~\cite{BrentJohansson2013bound}.
We need about $\tfrac{1}{2} \alpha \log(2) p \approx 1.72269 p$ terms in total for $p$-bit accuracy.
This actually overestimates the true cost
because the first two series can share some of the
computations and the third series only needs to be computed with $p/2$-bit precision.

Used in combination with recurrence relations and Gauss's digamma theorem
\begin{equation}
\psi\!\left(\frac{k}{q}\right) = -\gamma - \log\!\left(2 q\right) - \frac{\pi}{2} \cot\!\left(\frac{\pi k}{q}\right) + 2 \sum_{n=1}^{\left\lfloor \left( q - 1 \right) / 2 \right\rfloor} \cos\!\left(\frac{2 \pi n k}{q}\right) \log\!\left(\sin\!\left(\frac{\pi n}{q}\right)\right)
\end{equation}
($q \ge 2$, $1 \le k \le q - 1$),
this also yields the most efficient method to
compute $\psi(k/q)$ for any small integers $k, q$.

The digamma function of a general\footnote{The singularities at positive integers and at $z = \pm \sqrt{5}$ are removable.} argument $z$ can be represented
by means of the constant $\gamma$, elementary functions, and a single rapidly convergent hypergeometric series:
\begin{equation}
\psi(z) = -\gamma - \frac{1}{2z} - \frac{\pi \cot(\pi z)}{2} - z^2 \left[ \frac{(z^2-5)}{4 (z^2-1)} \, {}_8F_7 ( a; b; -\tfrac{1}{4} ) \right]
\label{eq:zetagen1}
\end{equation}
where
\begin{center}
$a = (1, 1, 1-z, 1-z, 1+z, 1+z, 2+\tfrac{z}{\sqrt{5}}, 2-\tfrac{z}{\sqrt{5}})$,

$b = (\tfrac{3}{2}, 2, 2, 2+z, 2-z, 1+\tfrac{z}{\sqrt{5}}, 1-\tfrac{z}{\sqrt{5}})$.
\end{center}

The terms of the ${}_8F_7$ series satisfy the recurrence relation
\begin{equation}
\frac{t(k)}{t(k-1)} = \frac{k (k^2 - z^2)^2 (z^2 - 5 (k+1)^2)}{2(k+1)^2 (2k+1) ((k+1)^2-z^2) (5k^2-z^2)},
\end{equation}
involving only rational numbers and $z^2$.
At 0.5 terms per bit, the convergence is more rapid than for any
of the known hypergeometric series for the gamma function,
although the terms are more complicated.

The function in brackets in \eqref{eq:zetagen1} is
the ordinary generating function 
\begin{equation}
\frac{z^2-5}{4 (z^2-1)} {}_8F_7 ( a; b; -\tfrac{1}{4} ) = Z(z^2); \quad Z(z) = \sum_{n=0}^{\infty} \zeta(2n+3) z^n.
\label{eq:zetagen2}
\end{equation}
of the odd integer zeta values $\zeta(3), \zeta(5), \ldots$.
The hypergeometric series is given by
Borwein, Bradley and Crandall (\cite{BorweinBradleyCrandall2000}, unnumbered equation between (62) and (63))
as
\begin{align}
Z(z^2) = \sum_{k=1}^{\infty} \frac{(-1)^{k+1}}{k^3 {2k \choose k}} \left(\frac{1}{2} + \frac{2}{1-z^2/k^2}\right) \prod_{j=1}^{k-1} \left(1 - \frac{z^2}{j^2} \right)
\label{eq:zetagen3}
\end{align}
from which one can derive the closed ${}_8F_7$ form or the alternative representation
\begin{align}
Z(z^2) & = \frac{{}_4F_3(1, 1, 1-z, 1+z;\, \tfrac{3}{2}, 2, 2;\, -\tfrac{1}{4})}{4} \\
       & - \frac{{}_3F_2(1-z, 1-z, 1+z;\, \tfrac{3}{2}, 2-z;\, -\tfrac{1}{4})}{2 z (z-1)} \\
       & - \frac{{}_3F_2(1-z, 1+z, 1+z;\, \tfrac{3}{2}, 2+z;\, -\tfrac{1}{4})}{2 z (z+1)}.
\end{align}

Borwein, Bradley and Crandall use symbolic differentiation of~\eqref{eq:zetagen3}
to derive binary splitting schemes for isolated integer zeta values.
They seem to have overlooked (or neglected to mention)
the fact that the same series can be used for asymptotically fast
multi-evaluation of integer zeta values
by using binary splitting together with power series arithmetic
(this is more efficient than the incomplete gamma function method proposed in~{\cite{BorweinBradleyCrandall2000}};
see the discussion in \cite{Johansson2014thesis}),
as well as the fact that
this series can be used for reduced-complexity evaluation of the general function values $\psi(z)$ and $\psi^{(m)}(z)$.

We have curiously not found any other hypergeometric
representations of $\psi(z)$ in the literature
apart from the well-known
\begin{equation}
\psi(z) = -\gamma + \left(z - 1\right) \,{}_3F_2(1, 1, 2 - z, 2, 2, 1), \quad \Real(z) > 0,
\label{eq:psihyp}
\end{equation}
where the ${}_3F_2$ series converges too slowly to
be usable for direct summation.
A fast algorithm can be constructed from~\eqref{eq:psihyp} 
by connecting the solutions at $x = 0$ and $x = 1$ of the differential
equation defining ${}_3F_2(-;-;x)$,
using the standard analytic continuation
method~\cite{vdH:hol,Mezzarobba2011} for holonomic functions.
We do not attempt to construct an explicit algorithm here.


\section{Implementation results and algorithm selection}

\label{sect:implresults}

We implemented the following algorithms in Arb (version 2.21)
as part of a complete rewrite (some 10,000 new lines of C code)\footnote{See the documentation \url{https://arblib.org/arb_hypgeom.html} (functions for real variables) and \url{https://arblib.org/acb_hypgeom.html} (complex variables).}
of the library's functions for
$\Gamma(z)$, $1/\Gamma(z)$ and $\log \Gamma(z)$
for real, complex and rational $z$:

\begin{itemize}
\item The Stirling series (together with Algorithm~\ref{alg:stirlingsum}).
\item Hypergeometric series (using the incomplete gamma function method with the combined ${}_1F_1$ and ${}_2F_0$ series).
\begin{itemize}
\item Fast multipoint evaluation for generic arguments.
\item Binary splitting for rational arguments.
\end{itemize}
\item The Taylor series.
\item Factorials and elliptic integrals in special cases.
\end{itemize}

The algorithms are available as separate
functions; the user-facing functions (\texttt{arb\_gamma}, etc.)
try to choose the best algorithm automatically.

The previous version mainly used a less optimized
implementation of the Stirling series
together with factorials and elliptic integrals for special cases.
We also implemented other algorithms (Spouge's formula, Bessel function formulas) for
timing comparisons, but did not include them in the library
after experiments clearly showed worse performance.

\subsection{Varying precision}

Table~\ref{tab:maintimings} and Figure~\ref{fig:gammatime} show
the performance of different algorithms
when the argument is a fixed small real number,
for a varying precision~$p$ (we display the number of decimals $d$, equivalent to binary precision $p = \log_2(10) d$).

\begin{table}
\setlength{\tabcolsep}{3.1pt}
\renewcommand{\arraystretch}{1.02}
\centering
\caption{Time (in seconds) to compute $\Gamma(x)$, $x \approx 1.3$, to $d$ decimal digits using different algorithms:
Spouge's formula (``first'' including cost of generating coefficients, ``repeated'' with coefficients cached), the Stirling series,
Taylor series,
hypergeometric series (using fast multipoint evaluation). Lower table: using hypergeometric series with binary splitting, with $x = 13/10$ given as an exact rational number.}
\label{tab:maintimings}
\small
\begin{tabular}{l l l l l l l l}
 $d$ & Spouge & Spouge & Stirling & Stirling & Taylor & Taylor & Hyper- \\
     & (first)  & (repeated)  & (first)  & (repeated) & (first) & (repeated) & geometric  \\
 \hline
 $10$ $\phantom{2^{2^2}}$ & $2.2 \cdot 10^{-5}$ & $4.6 \cdot 10^{-6}$ & $5.2 \cdot 10^{-6}$  &  $4.1 \cdot 10^{-6}$   & $3.3 \cdot 10^{-5}$ & $2.6 \cdot 10^{-7}$ &  $4.7 \cdot 10^{-5}$ \\
 $30$ & $6.1 \cdot 10^{-5}$ & $1.2 \cdot 10^{-5}$ &   $1.3 \cdot 10^{-5}$  &  $8.8 \cdot 10^{-6}$                           &  $9.6 \cdot 10^{-5}$ & $1.8 \cdot 10^{-6}$  &  $0.00012$    \\
 $100$ & $0.00035$ & $5.2 \cdot 10^{-5}$ &  $5.2 \cdot 10^{-5}$  &  $2.9 \cdot 10^{-5}$                                 &  $0.00082$ & $8.6 \cdot 10^{-6}$       &  $0.00032$   \\
 $300$ & $0.0021$ & $0.00025$ &  $0.00038$  &  $9.6 \cdot 10^{-5}$                                                      &  $0.0077$ & $4.0 \cdot 10^{-5}$       &  $0.0011$    \\
 $1000$ & $0.029$ & $0.0027$ &  $0.0021$  &  $0.00073$                                                             &  $0.13$ & $0.00046$            &  $0.0064$    \\
 $3000$ & $0.47$ & $0.042$ &  $0.015$   &  $0.0064$               &  $1.7$    &  $0.0051$       &  $0.11$          \\
 $10000$ & $11$ & $1.0$ &  $0.20$   &  $0.087$                   &  $40$ &   $0.082$        &  $0.95$          \\
 $30000$ & $184$ & $16$ &  $2.3$    &  $1.0$                     &  $807$   &    $1.0$        &  $7.1$        \\
 $100000$ & $3683$ & $266$ &  $38$    &  $16$                      &      &            &  $67$        \\
 $300000$ &  &  &  $478$   &  $173$                      &    &            &  $481$        \\
 $1000000$ &  &  &         &                           &    &            &  $4108$         \\ [1ex]
\end{tabular}
\setlength{\tabcolsep}{8pt}
\begin{tabular}{l l l l l l}
 \multicolumn{6}{c}{Hypergeometric (rational $x = 13/10$, binary splitting)} \\
 $d$ &  & $d$ & & $d$ & \\
 \hline
 $10$  & $1.8 \cdot 10^{-5}$  & $1000$  & $0.0012$ &  $100000$ & $0.49$ \\
 $30$  & $3.4 \cdot 10^{-5}$  & $3000$  & $0.0045$ &  $300000$ &  $2.0$ \\
 $100$ & $9.7 \cdot 10^{-5}$  & $10000$ & $0.021$  &  $1000000$ & $7.9$ \\
 $300$ & $0.00030$            & $30000$ & $0.096$  &  $3000000$ & $31$
\end{tabular}
\end{table}

\begin{figure}
\caption{Time (in seconds) to compute $\Gamma(x)$, $x \approx 1.3$, to $d$ decimal digits using different algorithms (see Table~\ref{tab:maintimings}). \label{fig:gammatime}}
\includegraphics[width=10cm]{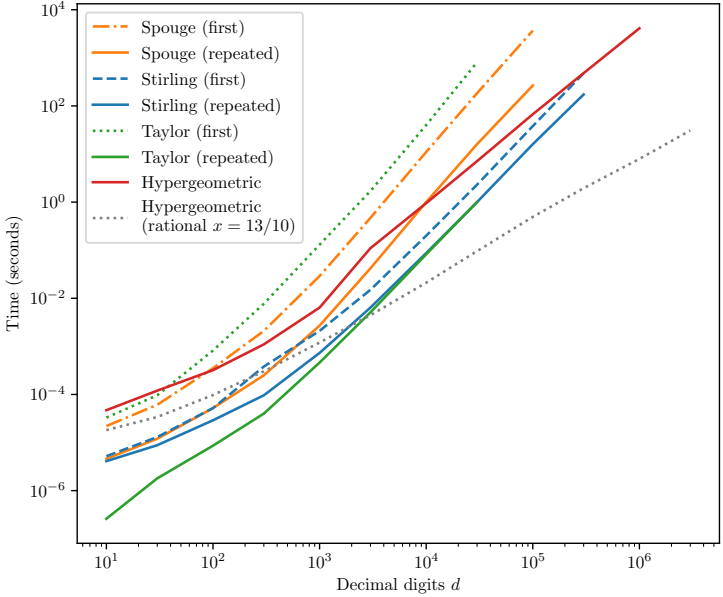}
\end{figure}

\subsubsection{Quadratic algorithms}

Spouge's formula, the Stirling series and the Taylor series
all have complexity $\bigOtilde(p^2)$
with and without precomputation, but with
different amounts of overhead.
These trends are clearly visible in~Figure~\ref{fig:gammatime}.

The Taylor series is the fastest method up to
perhaps $10^4$ decimal digits assuming that coefficients are precomputed.
The speedup over the Stirling series is roughly $15\times$
at 10 digits, $3.5\times$ at 100 digits, and $1.5\times$ at 1000 digits.
These figures are only rough estimates, and vary with the argument.
(The input represents a near-worst case
for the Stirling series, which performs
better with larger $|z|$; more on this below.)
In theory, the Taylor series should be $O(\log p)$
faster than the Stirling series asymptotically, but this is not
seen in practice.

The Taylor series has the slowest precomputation.
The precomputation time for 1000-digit precision would be roughly 0.1 seconds,
which is an unacceptably large overhead for a first call to the gamma function.
In the final implementation used in the library,
we decided not to compute Taylor coefficients at runtime.
Instead, we include a static 130~KB table of Taylor coefficients
($N = 536$, $p = 3456$), supporting precision up to around 1000 digits;
we simply avoid the Taylor series at higher precision
where the speedup over the Stirling series in any case would be small.
The size of this table is about half the total size of
tables used for elementary functions~\cite{Johansson2015elementary},
or 5\% of the library binary size.

Spouge's formula is, as expected, significantly slower than the Stirling
series. A more optimized implementation could perhaps
save a factor two, but this would not affect the conclusion.

The most interesting result is the low precomputation cost for
the Stirling series thanks to
the fast Bernoulli number generation (Algorithm~\ref{alg:bernoulli})
in combination with the improved Stirling series~(Algorithm \ref{alg:stirlingsum}).
Some authors have
argued that the Bernoulli numbers in the Stirling series are ``inconvenient''~\cite{smith2006gamma},
and in many earlier implementations, a first call to the gamma function
has indeed been 10 or 100 times slower than subsequent calls~\cite{laurie2005,Johansson2014rectangular}.
This is no longer the case: generating Bernoulli numbers now only costs
the equivalent of 1-2 extra evaluations of the gamma function.
What is notable is that we have achieved this without
sacrificing performance for repeated evaluations, which would be the tradeoff if we simply were to choose a very large (suboptimal) $\beta = r / p$.

\subsubsection*{Subquadratic algorithms}

Binary splitting evaluation of hypergeometric series
is highly efficient for rational arguments,
outperforming the Stirling and Taylor series from about 2000 digits.\footnote{We did not fully optimize the Stirling and
Taylor series for rational arguments, which should improve their efficiency. On the other hand, the hypergeometric series
evaluation could also be optimized by using a basecase algorithm specifically for low precision.}
Figure~\ref{fig:gammatime} clearly shows the $\bigOtilde(p)$ complexity.

The subquadratic ($\bigOtilde(p^{3/2})$) complexity for generic $z$
with the hypergeometric method
is also visible in Figure~\ref{fig:gammatime},
but it only breaks even with Stirling series at extremely
high precision: around $10^6$ bits if Bernoulli numbers have not been
cached, and at around $10^6$ digits ($p \approx 3 \cdot 10^6$) with Bernoulli numbers cached.

The high crossover with cached Bernoulli numbers reaffirms the
benchmark results in~\cite{Johansson2014rectangular},
but that study found a much lower crossover around $p$ = 30,000
for a first evaluation when Bernoulli numbers are not cached. This large
improvement for the initial call
to the Stirling series is due to the use of~Algorithm~\ref{alg:stirlingsum}.

Coincidentally, the crossover points are right around the range where
storing Bernoulli numbers in memory
starts to become an issue on a typical 2021-era laptop.
For example, with Algorithm~\ref{alg:stirlingsum},
we need 10 GB of Bernoulli numbers to use the
Stirling series for 600,000 digits of precision.
To use the Stirling series at higher precision,
it might be better to avoid caching Bernoulli
numbers and accept a factor 2-3 slowdown.

We did not optimize the hypergeometric series
evaluation at lower precision.
Using rectangular splitting instead of fast multipoint evaluation
should make the method more competitive with the quadratic
algorithms for $p < 10^4$~\cite{Johansson2014rectangular}.

\subsection{Varying argument}

When the variable $z$ as well as the precision $p$ are allowed
to vary, finding optimal cutoffs between several
algorithms is complicated.

\begin{figure}
\caption{Speedup using the Taylor series instead of the Stirling series to compute $\Gamma(x+yi)$, here at precision $p = 333$ (100 decimal digits). The
graphs end at the point where the default gamma function algorithm in Arb switches from the Taylor series to the Stirling series. The
axes of the figure are logarithmic. \label{fig:taylorspeed}}
\includegraphics[width=10cm]{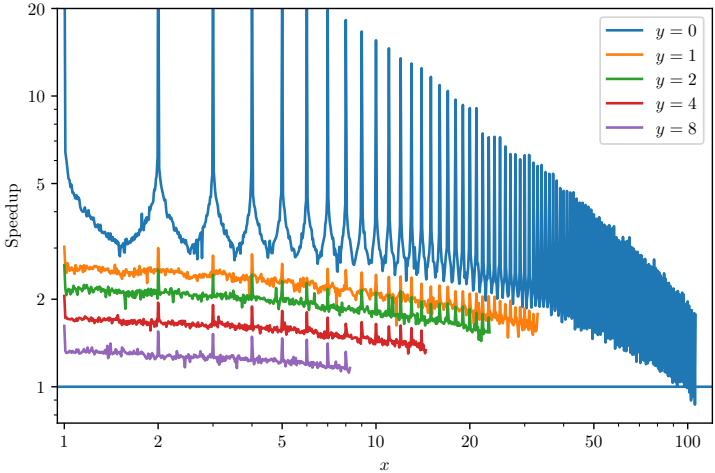}
\end{figure}

For switching between the Taylor series and the Stirling series,
we implemented experimentally-determined heuristics:

\begin{itemize}
\item If $z$ is real, let $r = \lfloor z + 1/2 \rfloor$.
      If $p < 40$ or if the radius of the ball for $z$ is greater than $2^{-16}$,
      we use the Taylor series if $|r| < 160$.
      Otherwise, we use the Taylor series if $-40 - (p - 40) / 4 < r < 70 + (p - 40) / 8$.
\item If $z$ is complex with $x = |\Real(z)|$, $y = |\Imag(z)|$, we do not use the Taylor series
      if either $p < 128$ and $y > 4$, $p < 256$ and $y > 5$,
      $p < 512$ and $y > 8$, $p < 1024$ and $y > 9$, or $y > 10$.
      If all these checks pass, we use the Taylor series if $x (1.0 + 0.75y) > 8 + 0.15p$.
\item In all cases, we still fall back on the Stirling series if $p$ or $|z|$
      are too large to achieve $p$-bit accuracy with the available precomputed Taylor coefficients
      (this is checked by computing the required working precision and number of terms
      and inspecting the table of coefficients).
\end{itemize}

Figure~\ref{fig:taylorspeed} demonstrates the
speedup over the Stirling series and the cutoffs for $p = 333$.
We can see that the Taylor series
performs better near integers
and gradually performs worse with increased $x$ and $y$.
The cutoffs are not quite optimal: we sometimes avoid the Taylor series
even where it would yield a $2\times$ speedup.
Overall, however, the tuning is reasonable, and we never use
the Taylor series where it would be more than a few percent
slower than the Stirling series.

We have also implemented experimentally-determined cutoffs for choosing
algorithms (Taylor series, the Stirling series,
elliptic integrals, hypergeometric series, factorials)
to compute $\Gamma(x)$ with rational $x = k/q$, where we must account
for the magnitude $|x|$ as well as the size
of the denominator $q$. We omit the details.\footnote{Interested readers may consult the source code
for the function \texttt{arb\_hypgeom\_gamma\_fmpq}.}

\subsection{Previous implementations}

Compared to the previous implementation of the gamma function in Arb,
the greatest speedup (typically a factor 2 to 10)
is observed where the Taylor series is used
for small real or complex variables.

Where the Taylor series is not used, our improvements to the Stirling series
often yields a factor 1.5 to 2 speedup
over the previous version
except for large arguments at low precision,
where the performance is unchanged.
For the first call (requiring Bernoulli numbers to be computed),
the speedup can be more than a factor 10.

With these improvements, the gamma function in Arb is
typically 5-6 times faster
than either Pari/GP 2.13~\cite{Par2021} (real and complex variables) or MPFR 4.1.0~\cite{Fousse2007} (real variables).
The speedup is greater at low precision (more than a factor 10 at machine precision)
and for a first call to the gamma function (at 10,000 digits, Arb is 1000 times
faster than MPFR).


\section{Open problems}

\label{sect:openprob}

We conclude with a collection of interesting problems
for future research.

\begin{problem}
What are tight bounds for the error terms in the Lanczos and Spouge formulas
(and other formulas of this type)? What are tight bounds for the error in the
Stieltjes continued fraction?
Can there be a unified theory for global approximations of the gamma function
that subsumes ad-hoc developments?
We note that Pugh~\cite[Chapter 11]{pugh2004analysis} gives some more specific
unsolved problems relating to the Lanczos approximation.
\end{problem}

\begin{problem}
Is it possible to re-expand the remainder term $R_N(z)$ in the Stirling series
in a way that reduces the computational complexity?
\end{problem}

\begin{problem}
Is $\bigOtilde(p^{3/2})$ the best possible complexity
for computing the gamma function to $p$-bit
accuracy at a generic point? Is there an algorithm with
quasilinear complexity?
\end{problem}

\begin{problem}
What is the complexity of computing $\Gamma(z)$ or $\Gamma^{*}(z)$ to $p$-bit accuracy accounting
for both $p$ and the value $z$? Is it possible to achieve subquadratic
complexity with respect to $p$ uniformly in $z$?
\end{problem}

\begin{problem}
In realistic computational complexity models,
what are the precise boundaries (as functions of both $p$ and $z$)
for optimal selection between the Taylor series,
the Stirling series, and other algorithms?
\end{problem}

\begin{problem}
In what ways can $\Gamma(z)$ and $\psi(z)$ be expressed
in terms of hypergeometric or holonomic functions with $z$ as a parameter?
For example, is it possible to express $\Gamma(z)$ itself
strictly in terms of hypergeometric functions ${}_{p}F_q(\{a_j\};\{b_j\};x)$
where the parameters $\{a_j\}$, $\{b_j\}$ are fixed rational functions of $z$
and where $x$ is a fixed point strictly within the radius of convergence of the ${}_{p}F_q$ series?
Are the known formulas for $\Gamma(z)$ and $\psi(z)$ the most efficient possible?
\end{problem}

\begin{problem}
Are there any rapidly convergent hypergeometric series
or arithmetic-geometric mean type formulas for $\Gamma(z)$
at specialized algebraic points $z$ other than rationals with denominators 24?
For example, can there be a formula similar to \eqref{eq:brown13} for $\Gamma(1/5)$ or $\Gamma(\sqrt{-3})$?
This would seemingly require entirely different mathematical
theory than the classical theory of elliptic integrals and elliptic modular forms.
\end{problem}

\begin{problem}
\label{problem:codegen}
Is it possible to generate efficient and formally verified
code for the gamma function automatically?
Generating useful code from first principles (e.g.\ with only the defining integral~\eqref{eq:gammadef} given as input)
seems to require symbolic knowledge about complex analysis and numerical techniques
far beyond the capabilities of current tools for automated theorem proving and code generation.
Current code generators for mathematical functions~\cite{kupriianova2014metalibm,lauter2015semi}
typically require a bounded domain and a function satisfying a nice differential equation,
and cannot be used for global computation of complex functions with essential singularities.
\end{problem}

\begin{problem}
What are the worst cases for Ziv's algorithm to ensure
correct rounding in different precision formats?

Bounding \emph{a priori} the precision required to ensure $p$-bit correct
rounding of transcendental functions is a hard theoretical problem.
Currently, the only solution is to perform an exhaustive
search for worst-case input for a fixed $p$,
which in turn reduces to a Diophantine approximation problem.
Brisebarre and Hanrot~\cite{brisebarre2021} discuss
state of the art methods for this problem
and consider the feasibility of finding worst cases for rounding $\Gamma(x)$ on the interval $[1, 2)$ with $p = 113$.
\end{problem}

\begin{problem}
To ensure that Ziv's strategy terminates, we need to be able
to check for representable (e.g.\ algebraic) points $z$ where $\Gamma(z)$ is exactly representable.
For example, in real binary floating-point arithmetic,
we need to detect $z  \in \ZZ[\tfrac{1}{2}]$ for which $\Gamma(z) \in \ZZ[\tfrac{1}{2}]$.
Conjecturally, the only points of this kind are trivial, i.e.\ the positive integers.
This conjecture can be generalized to the following questions:

\begin{itemize}
\item Are $\Gamma(z)$ and $\log \Gamma(z)$ transcendental for all $z \in \overline{\mathbb{Q}} \setminus \ZZ$?
\item Are $\psi(z)$, $\psi^{(n)}(z)$ and $\Gamma^{(n)}(z)$ $(n \ge 1)$ transcendental for all
$z \in \overline{\mathbb{Q}} \setminus \ZZ_{\le 0}$?
\end{itemize}

Currently, $\Gamma(z)$ is known to be transcendental for rational
arguments $k+\tfrac{1}{2}$, $k+\tfrac{1}{3}$, $k+\tfrac{1}{4}$ and $k+\tfrac{1}{6}$~\cite{waldschmidt2006transcendence}.
The special case of establishing the irrationality (and transcendence) of $-\psi(1) = \gamma$ is
of course famously open.
\end{problem}

\section{Acknowledgements}

The author thanks Daniel Schultz for complaining that
the gamma function in Arb was slow, which ultimately led us down this rabbit hole.

Karim Belabas pointed out a small improvement in Algorithm~\ref{alg:bernoulli} over
the original implementation in Arb.
Problem~\ref{problem:codegen} is inspired by a question posed by Jacques Carette.

The author was supported in part by the ANR grant ANR-20-CE48-0014-02 NuSCAP.

\bibliographystyle{amsalpha}
\bibliography{references}

\end{document}